\documentclass{article}

\usepackage[margin=1in]{geometry}
\usepackage{amsmath, amssymb}
\usepackage{amsthm, thmtools}
\usepackage{bbold}
\usepackage{hyperref}
\usepackage{cleveref}
\usepackage{enumitem}
\usepackage{graphicx}
\usepackage{tikz}
\usetikzlibrary{arrows.meta,positioning,decorations.pathreplacing,calc,shapes}
\usepackage{array}
\usepackage{multirow}
\usepackage{xcolor}
\usepackage{soul}
\usepackage[backend=biber,style=alphabetic]{biblatex}
\usepackage{comment}
\usepackage{subcaption}
\usepackage{xspace}
\newtheorem{definition}{Definition}[section]
\newtheorem{lemma}[definition]{Lemma}

\newtheorem{corollary}[definition]{Corollary}

\newtheorem{proposition}[definition]{Proposition}
\newtheorem*{proposition*}{Proposition}
\newtheorem*{corollary*}{Corollary}

\usepackage{algorithm,caption}
\usepackage[noend]{algpseudocode}
\title{Complexity of Auctions with Interdependence}

\author{
Patrick Loiseau\\
INRIA, FairPlay team\\
\texttt{patrick.loiseau@inria.fr}
\and
Simon Mauras\\
INRIA, FairPlay team\\
\texttt{simon.mauras@inria.fr}
\and
Minrui Xu,\\
ENSAE, FairPlay team\\
\texttt{minrui.xu@ensae.fr}
}

\newcommand{\sigs}{{{\mathbf{s}}}}
\newcommand{\osigs}{{{\mathbf{s}_{-i}}}} % other bidders' signals
\newcommand{\vals}{{{\mathbf{v}}}}
\newcommand{\costs}{{{\mathbf{c}}}}

\newcommand{\allocs}{{{\mathbf{x}}}}
\newcommand{\payments}{{{\mathbf{p}}}}
\newcommand{\bids}{{{\mathbf{b}}}}

\newcommand{\polytope}{{{\mathcal T(\sigmas)}}}
\newcommand{\polytopetilde}{{{\mathcal T(\tilde{\sigmas})}}}
\newcommand{\intpolytope}{{{\mathcal T_D(\sigmas)}}}
\newcommand{\polyver}{{{\mathcal V(\sigmas)}}}
\newcommand{\polyvertilde}{{{\mathcal V(\tilde{\sigmas})}}}
\newcommand{\avg}[2]{{\langle#1,#2\rangle}}
\newcommand{\cfpo}{(s_1,s_2)}
\newcommand{\cfpt}{(s_1^{\prime},s_2^{\prime})}

\newcommand{\sigmas}{\boldsymbol\sigma}
\newcommand{\rhos}{\boldsymbol\rho}
\newcommand{\paras}{(\boldsymbol\rho,\boldsymbol\sigma)}
\newcommand{\Rv}{R_V}
\newcommand{\Rvopt}{R_V^*(\vals)}

\newcommand{\Rc}{R_C}
\newcommand{\Rcopt}{R_C^*(\costs)}

\newcommand{\Rd}{R_D}
\newcommand{\Rdoptv}{R_D^*(\vals)}

\newcommand{\ind}[1]{\mathbb 1[{#1}]}

\newcommand{\PbD}{\textsc{Det}\xspace}
\newcommand{\PbC}{\textsc{Cst}\xspace}
\newcommand{\PbV}{\textsc{Val}\xspace}

\date{}  % No date shown

\begin{document}

\begin{titlepage}

\maketitle

\begin{abstract}
We study auction design in the celebrated interdependence model introduced by Milgrom and Weber [1982], where a mechanism designer allocates a good, maximizing the value of the agent who receives it, while inducing truthfulness using payments. In the lesser-studied procurement auctions, one allocates a chore, minimizing the cost incurred by the agent selected to perform it.

Most of the past literature in theoretical computer science considers designing truthful mechanisms with constant approximation for the value setting, with restricted domains and monotone valuation functions. 

In this work, we study the general computational problems of optimizing the approximation ratio of truthful mechanism, for both value and cost, in the deterministic and randomized settings. Unlike most previous works, we remove the domain restriction and the monotonicity assumption imposed on value functions. We provide theoretical explanations for why some previously considered special cases are tractable, reducing them to classical combinatorial problems, and providing efficient algorithms and characterizations. We complement our positive results with hardness results for the general case, providing query complexity lower bounds, and proving the NP-Hardness of the general case.
\end{abstract}

\end{titlepage}

%%%%%%%%%%%%%%%%%%%%%%%%%%%%%%%%%%%%%%%%%%%%%%%%%%%%%%%%%%%%%%%
%%%%%%%%%%%%%%%%%%%%%%%%%%%%%%%%%%%%%%%%%%%%%%%%%%%%%%%%%%%%%%%
%%%%%%%%%%%%%%%%%%%%%%%%%%%%%%%%%%%%%%%%%%%%%%%%%%%%%%%%%%%%%%%
\section{Introduction}\label{sec:intro}

%%%%%%%%%%%%%%%%%%%%%%%%%%%%%%%%%%%%%%%%%%%%%%%%%%%%%%%%%%%%%%%
\paragraph{Algorithmic mechanism design.} Algorithmic mechanism design lies at the intersection of computer science and economics, focusing on the development of algorithms that account for the strategic behaviors of self-interested agents \cite{NisanR99,NRTV2007}. A central problem in this area is to design auction mechanisms to select the most suitable bidder among $n$ agents. Notably, in the second-price auction (or Vickrey auction), where the winner pays the second largest value, it is optimal for bidders with \emph{independent values} to bid truthfully \cite{vickrey1961counterspeculation}. However, second-price auction fails to preserve truthfulness in more intricate scenarios where a bidder's valuation depends on information held by others, a situation commonly referred to as \emph{interdependent values} \cite{milgrom1982theory},  Formally, each agent $i\in[n]$ possesses a private signal $s_i\in[k]$, and a publicly known function $v_i: [k]^n \rightarrow \mathbb R_{>0}$ aggregating the signals into a value.

As an illustration, consider a situation where $n=2$ agents, Alice and Bob, are competing for a single good, say a small house in the countryside. Alice's value for the house highly depends on its condition, but this information is privately held by Bob; while Bob's value for the house is moderate, as he plans to rebuild it entirely. Let $k = 2$, and define the value functions as $v_1(s_1, s_2) = 1+99(s_2-1)$ and $v_2(s_1, s_2) = 10$, that is, Alice's value is $100$ if Bob's signal is high ($s_2 = 2$), and $1$ otherwise ($s_2 = 1$); while Bob's value is always $10$ regardless of the signals. If Bob reports truthfully a high signal ($s_2 = 2$), a second-price auction would allocate Alice the house, at a price of 10. However, notice that Bob could win the auction by misreporting a low signal ($s_2 = 1$). One truthful alternative is to flip a fair coin and allocate the house at random. While not optimal, this approach guarantees that the expected value of the winner is at least half of the optimal, yielding an approximation ratio\footnote{the approximation ratio of a randomized mechanism is the worst (over possible signals) ratio between the optimal value/cost and the expectation of the value/cost selected by the mechanism.} of $2$.

%%%%%%%%%%%%%%%%%%%%%%%%%%%%%%%%%%%%%%%%%%%%%%%%%%%%%%%%%%%%%%%
\paragraph{Auctions with interdependence.}
A recent line of work has studied the design of auctions with interdependence \cite{RoughgardenT16,EdenFFG18,EdenFFGK19,AmerT21,LuSZ22}, achieving constant factor approximation ratios on restricted domains of valuation functions. In particular, the optimal approximation ratio with submodular\footnote{see \cite{EdenFFGK19} for the definition of submodular over signals.} valuation functions is at least $2$ \cite{EdenFFGK19}, at most $3.315$ \cite{LuSZ22}, and equal to $2$ in the special cases where $n=2$ \cite{EdenFFG18} or $k=2$ \cite{AmerT21}.  Interestingly, this literature focuses on auctions for goods, but not on reverse auctions (also known as procurement auctions), which allocate chores, i.e., tasks or services to be performed, with payments made to cover the winning contractor's cost. Most of the underlying economic theory used to guarantee truthfulness, is identical in the standard and reverse auctions \cite{krishna2009auction}, when using cost functions $c_i: [k]^n \rightarrow \mathbb R_{>0}$.

Returning to our example, suppose Alice and Bob are now contractors bidding to renovate the countryside house, and assume their costs are given by $c_1(s_1, s_2) = 1 + 99(2-s_2)$ for Alice, who will renovate it from its current state; and $c_2(s_1, s_2) = 10$ for Bob who will rebuild it from scratch. As before, reverse second-price auction is vulnerable to manipulation, as Bob could report $s_2=1$ instead of $s_2 = 2$ to be awarded the contract with a payment of $100$. But unlike the value-setting, allocating uniformly at random is not acceptable here. Indeed, if $s_2 = 2$, the optimal cost is $c_1(s_1,2) = 1$, but the expected cost of a random allocation is $(c_1(s_1,2) + c_2(s_1,2))/2 = 5.5$, which can lead to an arbitrarily large approximation ratio when modifying the parameters.

%%%%%%%%%%%%%%%%%%%%%%%%%%%%%%%%%%%%%%%%%%%%%%%%%%%%%%%%%%%%%%%
\paragraph{Main contributions.}
\begin{itemize}
\item
First, observe from our toy examples that the chore setting is more challenging from an approximation perspective. Intuitively, most mechanisms \cite{EdenFFGK19} perform well with constant probability, which suffices to ensure constant approximations in the value setting, but not in the cost setting.
In this paper, \ul{we initiate the study of approximate procurement auctions with interdependence.} 
Second, notice that our toy examples use monotone value and cost functions, as signals represent a quantitative information about the good or chore, which is a standard assumption made in all the previous work cited above. However, in many practical scenarios the value or cost functions are not monotone, for example when the agents have opposite preferences, or when signals cannot be compared as they capture qualitative information. In this paper, \ul{we generalize the characterization of truthful mechanisms of \mbox{\cite{RoughgardenT16}} to non-monotone settings.}

\item
While previous work \cite{EdenFFG18,EdenFFGK19,AmerT21,LuSZ22} has focused on designing greedy constant-factor approximation mechanisms for allocating goods under restricted domains of value functions, \ul{we investigate the optimization problem of computing mechanisms that achieve the best possible (i.e., smallest) approximation ratio}. Specifically, we study this problem in three distinct settings: randomized mechanisms for goods, randomized mechanisms for chores, and deterministic mechanisms. We propose several frameworks, reducing our optimization problems from and to classic combinatorial problems, which provide a better understanding of the general domain, and could provide improved greedy mechanisms in the aforementioned restricted domains.

\end{itemize}

%%%%%%%%%%%%%%%%%%%%%%%%%%%%%%%%%%%%%%%%%%%%%%%%%%%%%%%%%%%%%%%
%%%%%%%%%%%%%%%%%%%%%%%%%%%%%%%%%%%%%%%%%%%%%%%%%%%%%%%%%%%%%%%
\subsection{Our result}

We define the optimization problems $\PbV$ and $\PbC$, whose objectives are to compute the randomized mechanism which achieves the smallest (i.e., optimal) approximation ratio, respectively in the value and cost settings. Additionally, we consider deterministic mechanisms, for which the approximations in value and cost are the same, and we denote the corresponding optimization problem as $\PbD$. To measure the efficiency of an algorithm solving our minimization problems, we measure the time complexity with respect to the total size $N = nk^nb$ of the input, where $b$ denotes the maximum number of bits used to represent a value or a cost.

\begin{restatable}{theorem}{thmtwoagents}\label{thm:twoagents}
When $n=2$, one can solve $\PbV$, $\PbC$ and $\PbD$ in $O(N \log N)$ time.
\end{restatable}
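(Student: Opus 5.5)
The plan is to first pin down what truthfulness means for $n=2$, using the generalized (non-monotone) characterization of truthful mechanisms announced in the contributions, which extends \cite{RoughgardenT16}. For agent $1$, fix $s_2$. Truthfulness with payments should amount to a cyclic-monotonicity type condition along the column $\{(s_1,s_2): s_1\in[k]\}$. Since the only allocation alternatives are ``agent 1 wins'' and ``agent 2 wins'', with $x_2=1-x_1$, I expect the condition to reduce to the following. After sorting the column's signals by $v_1(\cdot,s_2)$, the allocation probability $x_1(\cdot,s_2)$ must be non-decreasing in $v_1$. Ties in $v_1$ force equal allocation up to payment indifference. Symmetrically, for each fixed $s_1$, $x_2(s_1,\cdot)=1-x_1(s_1,\cdot)$ must be non-decreasing in $v_2(s_1,\cdot)$. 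The cost setting uses the same constraints with the orientation reversed: $x_i$ is non-increasing in $c_i$.

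The key structural consequence is that each row and each column carries a total order, namely the sort by the relevant agent's value. So a feasible mechanism is a matrix $x\in[0,1]^{k\times k}$ that is monotone along every column with respect to one order, and anti-monotone along every row with respect to another. After relabeling, this is a function on a product of two chains. The relabeling is not global, though: each column has its own order on $s_1$, and each row has its own order on $s_2$. So the constraints form a general order on the $k^2$ cells, but one whose comparability graph has only $O(k^2)$ edges (consecutive elements in each sorted line). Computing all the orders costs $O(k^2\log k)=O(N\log N)$.

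Next I would reduce each problem to a threshold test plus binary search over the ratio.
\begin{itemize}
\item For $\PbD$, $x$ is $0/1$. Given a target ratio $\alpha$, each cell has a set of allowed winners: those whose value is within a factor $\alpha$ of optimal, or whose cost is within that factor.
\item Feasibility then asks for a $0/1$ labeling that is monotone along these chains and respects the allowed sets. This is a closure or 2-SAT-like propagation: forced-1 cells propagate upward along columns and forced-0 cells along rows. It can be done in linear time by a BFS over the $O(k^2)$ edges.
\item For $\PbV$, the constraint at a cell is linear: $x\,v_1+(1-x)v_2\ge \max(v_1,v_2)/\alpha$, i.e.\ a lower or upper bound on $x$. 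Feasibility of interval bounds under monotone chain constraints is again solved by propagation. Take the lower bounds, push them via max along column edges and via min (for $1-x$) along row edges, to a least fixpoint. Because the constraint graph may contain cycles mixing the two orientations, I would argue the propagation terminates after one topological pass of a suitable DAG, or detect infeasibility.
\item $\PbC$ is identical with costs.
\end{itemize}
The candidate optimal $\alpha$ values come from $O(k^2)$ ratios for $\PbD$, so binary search over them costs a $\log N$ factor. For the randomized problems, I would instead do parametric search, or observe that the optimum is attained at the tight chain of a propagation, which yields $O(N\log N)$ overall.

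The main obstacle I expect is the structure of the constraint graph. If $x_1$ increasing along columns and decreasing along rows creates directed cycles in the implied precedence (e.g.\ $x(a)\le x(b)\le\cdots\le x(a)$), then all cells on a cycle are forced equal, and one must contract strongly connected components (Tarjan, linear time) before propagating. I would first try to show that after SCC contraction the interval-bound propagation in topological order gives exact feasibility, since one-dimensional monotone constraints with box bounds are solved by this propagation. The remaining work is to keep the search over $\alpha$ within $\log N$ rounds, so that the total stays $O(N\log N)$.
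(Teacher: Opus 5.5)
\textbf{Where the proof falls short.} Your treatment of \PbD{} is sound and matches the paper's 2-SAT route (\Cref{prop:twoagents}). The optimum lies among the $O(k^2)$ values $1/\rho_i(\sigs)$, so sorting them and bisecting with a linear-time propagation test costs $O(k^2\log k)$. Your fixed-$\alpha$ test for \PbV and \PbC is also correct: interval bounds per cell, then max-propagation of lower bounds in topological order of the SCC condensation. It is the paper's SaP procedure.

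The gap is in turning that test into an exact $O(N\log N)$ algorithm for the randomized problems. There the optimum is not of the form $1/\rho_i(\sigs)$. You give no candidate set and no bit-length bound for it. So bisection over real $\alpha$ yields neither an exact answer nor an $O(\log N)$ bound on the number of rounds. ``Parametric search, or the tight chain of a propagation'' is a placeholder for exactly the step that carries the content.

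The missing idea is the conflicting-pair characterization (\Cref{lm:alpha_ncf}). For every fixed $\alpha$, the lower bound $\frac{1/\alpha-\rho_2(\sigs)}{1-\rho_2(\sigs)}$ on $x_1(\sigs)$ is non-increasing in $\rho_2(\sigs)$. The upper bound $\frac{1-1/\alpha}{1-\rho_1(\sigs')}$ on $x_1(\sigs')$ is non-decreasing in $\rho_1(\sigs')$, and analogously for costs. Your propagation only takes a max of lower bounds over predecessors. Hence infeasibility is always witnessed by a single pair $\sigs\prec\sigs'$. Solving the pair inequality for $\alpha$ gives the closed forms $\Rv^*=\max_{\sigs\prec\sigs'}1/f(\rho_2(\sigs),\rho_1(\sigs'))$ and $\Rc^*=\max_{\sigs\prec\sigs'}f(1/\rho_2(\sigs),1/\rho_1(\sigs'))$, where $f(u,v)=(1-uv)/(2-u-v)$.

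Since $f$ is non-decreasing in each coordinate, the worst predecessor of a component is the one with minimal $\rho_2$. So a single DP over the condensation DAG, propagating $\min\rho_2$ along edges, returns the exact optimum in $O(k^2)$ after sorting. One greedy pass at $\alpha=R^*$ then outputs the mechanism. Without this monotonicity, even the natural finite candidate set (one threshold per comparable pair) can have $\Theta(k^4)$ elements, which already exceeds $O(N\log N)$.

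\textbf{Ties.} A second, more local error concerns ties. By \Cref{lem:truthful}, $x_i$ only needs to be monotone along the strict order $\sigma_i(\osigs)$. Tied signals are incomparable and impose no constraint, since payments absorb the difference. So ties do not force equal allocation. Equalizing tied cells, or chaining consecutive cells after an arbitrary tie-break, shrinks the feasible set and can return a wrong optimum.

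For example, take $v_1\equiv 1$, and let $v_2(s_1,s_2)$ equal $M$ for $s_1=1$ and $1/M$ for $s_1=2$. Every allocation rule is truthful, and all three optima equal $1$. Forcing $x_1(\cdot,s_2)$ constant along columns instead yields a deterministic ratio of $M$.

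You want to encode ``every cell of value-block $j$ precedes every cell of block $j+1$'' with $O(k)$ edges per line. That requires an auxiliary vertex between consecutive blocks, as in the paper's DG procedure. The SCC contraction and the DP are unaffected.
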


\begin{restatable}{theorem}{thmtwosignals}\label{thm:twosignals}
When $k=2$, one can solve $\PbD$ in $N^{1+o(1)}$ time.
\end{restatable}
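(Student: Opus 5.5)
We need to solve \PbD when $k=2$: each agent has a binary signal, so signal profiles are the vertices of the hypercube $\{1,2\}^n$, and a deterministic mechanism assigns to each profile $\sigs$ a winner $x(\sigs)\in[n]$. We will rely on the generalized characterization of truthful mechanisms in the non-monotone setting (extending \cite{RoughgardenT16}). In the deterministic case it should read as follows. Fix $i$ and $\osigs$. As $s_i$ varies, the set of signals on which $i$ wins must be an \emph{upper set} with respect to the order on $[k]$ induced by $v_i(\cdot,\osigs)$; ties can be broken consistently. With $k=2$ this becomes a very simple local condition on each edge $\{(1,\osigs),(2,\osigs)\}$ of the hypercube in direction $i$. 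Let $h_i(\osigs)\in\{1,2\}$ be the signal of $i$ giving $i$ the larger value (larger cost, in the chore setting). If $i$ wins at the lower endpoint, it must also win at $h_i(\osigs)$. Equivalently, on every edge in direction $i$, at most one of the two forbidden patterns is excluded: either ``$i$ wins at the bad endpoint and loses at the good one'' is forbidden, or nothing is forbidden when the values tie.

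Our plan is to binary search on the approximation ratio $\alpha$. The candidate ratios are the ratios $\max_j v_j(\sigs)/v_i(\sigs)$, of which there are at most $n2^n\le N$. Sorting them, or better, selecting them via linear-time median finding, costs $O(N\log N)$ overall, or $N^{1+o(1)}$. For a fixed $\alpha$, each profile $\sigs$ has an allowed set $A_\alpha(\sigs)=\{i: v_i(\sigs)\ge \max_j v_j(\sigs)/\alpha\}$. The decision problem is then whether there is an assignment $x(\sigs)\in A_\alpha(\sigs)$ satisfying all the edge constraints.

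The key step is to show that this feasibility problem is tractable and can be solved in near-linear time. The constraint on an edge in direction $i$ couples only the two predicates $[x(\sigs)=i]$ and $[x(\sigs')=i]$, via an implication ``$x(\text{bad})=i \Rightarrow x(\text{good})=i$''. We propose the following iterative elimination, akin to unit propagation or a greatest-fixpoint computation. Maintain candidate sets $C(\sigs)\subseteq A_\alpha(\sigs)$. Whenever $i\notin C(\text{good endpoint})$ for some edge in direction $i$, remove $i$ from $C(\text{bad endpoint})$. At the fixpoint, if every $C(\sigs)$ is nonempty, choose any $x(\sigs)\in C(\sigs)$. This choice is valid: if $x(\text{bad})=i$, then $i\in C(\text{good})$, but we must still ensure that $x(\text{good})=i$. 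That is where the difficulty lies, because the implication forces a specific choice, not merely a possibility. We therefore strengthen the propagation. Whenever $i$ is chosen at a bad endpoint, it forces the choice of $i$ at the good endpoint, which in turn forces, along edges in other directions $j$, that $j$ not be chosen where the good endpoint is the bad endpoint for $j$. This has the structure of 2-SAT on the variables $[x(\sigs)=i]$, together with at-most-one and at-least-one constraints per profile. The at-least-one constraint is the non-2-SAT part.

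We expect this to be the main obstacle. Our approach is to exploit the fact that each variable $[x(\sigs)=i]$ appears in only one edge constraint of direction $i$, and that forcing propagates monotonically along direction-$i$ edges toward good endpoints. We would show that there exists an optimal mechanism that, at each profile, picks a ``canonical'' allowed agent, for instance the agent who is at its good endpoint if one exists. We would then argue via a potential or ordering on the hypercube that the greedy fixpoint computation terminates with a consistent choice. Each elimination is charged to an (edge, agent) pair, so the whole decision procedure takes $O(n2^n)=O(N)$ time. Combined with the $O(\log N)$ binary search steps, or with a parametric search that shares work across values of $\alpha$ to reach $N^{1+o(1)}$, this yields the bound. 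If the greedy argument fails, the fallback is to reduce the feasibility check to a bipartite matching or flow instance on the hypercube, and to invoke an almost-linear-time max-flow algorithm. This fallback also explains the $N^{1+o(1)}$ form of the bound.
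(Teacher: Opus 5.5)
There is a genuine gap. Your edge-wise characterization of truthfulness and the binary search over $\alpha$ are fine. The step that carries the theorem is missing: for fixed $\alpha$, deciding whether some deterministic truthful rule selects an acceptable agent at every profile.

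Your main route, candidate-set elimination followed by a ``canonical'' choice justified by an unspecified potential, is not an argument. A commit-and-propagate procedure cannot work without backtracking, because the problem is a bipartite matching problem. Call $\sigs$ \emph{forced} if every acceptable agent $i$ at $\sigs$ is constrained there, i.e.\ $(s_i,s_i')\in\sigma_i(\osigs)$ for the other signal $s_i'$. At a forced profile, choosing $i$ claims the neighbour $(s_i',\osigs)$, where $i$ must also win, and several forced profiles can compete for the same neighbour. For instance, with $n=3$ one can choose non-monotone values so that the forced profiles are exactly three:
\begin{itemize}
\item $(1,1,1)$ and $(2,2,1)$, each servable only through $(2,1,1)$ or $(1,2,1)$;
\item $(1,2,2)$, servable through $(1,2,1)$ or $(2,2,2)$.
\end{itemize}
Propagation forces nothing and a ratio-$1$ rule exists. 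Yet committing $(1,2,2)$ to $(1,2,1)$ is a dead end. Resolving such competitions needs augmenting paths, so your $O(N)$-per-decision charging bound is unsupported; it would amount to solving these matching instances greedily.

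Your fallback, reducing to bipartite matching and calling an almost-linear max-flow algorithm, is exactly what the paper does. However, that reduction is the entire content of the proof, and you do not give it. The missing observations are:
\begin{enumerate}
\item[(i)] For $k=2$, each event ``$i$ wins at $\sigs$'' is linked to exactly one other profile. A solution is therefore a partition of $[2]^n$ into singletons $\{\sigs\}$ served by an acceptable unconstrained agent, and pairs $\{(1,\osigs),(2,\osigs)\}$ on which an agent $i$ acceptable at both endpoints wins. Equivalently, the hypergraph of \Cref{lem:hypergraph} has hyperedges of size at most $2$.
\item[(ii)] Since $\sigma_i(\osigs)$ is a strict order, at least one endpoint of such a pair is unconstrained for $i$. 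So the partner of a forced profile is never forced, and choosing $i$ there triggers no further implication.
\item[(iii)] After discarding pairs whose two endpoints are both non-forced, every remaining pair contains exactly one forced profile. The resulting graph is bipartite by the parity of $\sum_j s_j$.
\end{enumerate}
Consequently a $\gamma$-approximate rule exists if and only if a maximum matching has size equal to the number of forced profiles. This is computable on $2^n$ vertices and $O(n2^n)$ edges in $N^{1+o(1)}$ time. The binary search adds only an $O(\log N)$ factor, so no parametric search is needed.
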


\begin{restatable}{theorem}{thmlp}
\label{thm:lp}
One can solve $\PbV$ and $\PbC$ in $N^{O(1)}$ time.
\end{restatable}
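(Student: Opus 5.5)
We reduce $\PbV$ and $\PbC$ to linear programs of polynomial size in $N$ and solve them with the ellipsoid or an interior-point method. The first step is the characterization of truthful randomized mechanisms: a mechanism is an allocation rule $x_i(\sigs)\in[0,1]$ with $\sum_i x_i(\sigs)\le 1$ in the value setting (resp.\ $=1$ in the cost setting, since the chore must be assigned). Following the generalization of \cite{RoughgardenT16} to non-monotone settings, truthfulness (ex-post IC and IR) should be equivalent to a \emph{monotonicity} condition on each agent's allocation along their own signal, given $\osigs$. Since $v_i$ need not be monotone in $s_i$, I expect the right statement to be this: fix $\osigs$ and order agent $i$'s signals $1,\dots,k$ by $v_i(\cdot,\osigs)$, breaking ties consistently. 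The allocation probability $x_i(\cdot,\osigs)$ must then be non-decreasing along this order for values, and non-increasing for costs. Payments are recovered by the usual integral or area formula over that order. These are linear inequalities in the variables $x_i(\sigs)$. There are $n k^n$ variables and $O(nk^n)$ constraints, because monotonicity along a total order needs only consecutive comparisons.

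The second step linearizes the objective. For $\PbV$, the approximation ratio is $\max_{\sigs} \max_i v_i(\sigs)/\sum_i x_i(\sigs)v_i(\sigs)$. Minimizing it is equivalent to maximizing $t$ subject to $\sum_i x_i(\sigs) v_i(\sigs)\ge t\cdot \max_i v_i(\sigs)$ for all $\sigs$. This is linear in $(x,t)$ because $\max_i v_i(\sigs)$ is a constant. For $\PbC$, we minimize $r$ subject to $\sum_i x_i(\sigs)c_i(\sigs)\le r\cdot \min_i c_i(\sigs)$, which is also linear. Each LP is feasible: uniform random allocation satisfies monotonicity trivially, and it gives a finite ratio since costs are positive.

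The third step handles complexity. The LP has $\mathrm{poly}(N)$ variables and constraints, and its coefficients are values or costs of bit length $b$. Karmarkar's or Khachiyan's algorithm therefore returns an exact optimal vertex in time polynomial in the encoding size, which is $N^{O(1)}$. Payments are then computed from the allocation in linear time.

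The main obstacle is the first step: establishing the correct characterization of truthfulness without monotone valuations. I must check that the ordering depends only on $\osigs$ and not on the misreport. I must also check that ties between equal values, together with IR, impose no extra nonlinear constraints. For ties I would try requiring equal allocation probability across signals with equal value, which is a linear constraint, and then verify sufficiency via a telescoping payment construction.
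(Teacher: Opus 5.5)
Your overall route is the paper's: an LP over the implementable allocation rules, with the objective linearized as $\beta\le\avg{\allocs(\sigs)}{\rhos(\sigs)}$ (resp.\ $\alpha\ge\avg{\allocs(\sigs)}{1/\rhos(\sigs)}$), solved by the ellipsoid or an interior-point method. The gap is that your LP does not range over $\polytope$, so its optimum need not be $\Rv^*$ or $\Rc^*$.

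The main culprit is ties. Both treatments you propose are sufficient for truthfulness but not necessary:
\begin{itemize}
\item a tie-breaking order fixed in advance;
\item forcing $x_i(s_i,\osigs)=x_i(s_i',\osigs)$ whenever $v_i(s_i,\osigs)=v_i(s_i',\osigs)$.
\end{itemize}
Summing the IC inequalities at $s_i$ and $s_i'$ only yields $(x_i(s_i,\osigs)-x_i(s_i',\osigs))(v_i(s_i,\osigs)-v_i(s_i',\osigs))\ge 0$, which is vacuous under a tie. This is why \Cref{lem:truthful} constrains only pairs in the strict partial order $\sigma_i(\osigs)$. Its sufficiency proof chooses the linear extension $\tau_i(\osigs)$ \emph{after} $\allocs$ is known, so that $x_i(\cdot,\osigs)$ is non-decreasing along it, and then defines the telescoping payments along $\tau_i(\osigs)$. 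An order fixed before solving the LP cannot adapt in this way.

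Concretely, take the introduction's example, $v_1=1+99(s_2-1)$ and $v_2=10$. Every $\sigma_i(\osigs)$ is empty, so the efficient rule is truthful and $\Rv^*=\Rc^*=1$. Your equal-allocation constraints force $x_2(s_1,\cdot)$ to be constant, and tie-breaking by index forces $x_2(s_1,1)\le x_2(s_1,2)$. In both cases your value LP returns $20/11$, and on the cost analogue ($c_1=1+99(2-s_2)$, $c_2=10$) your cost LP returns $11/2$. The fix is to write one constraint $x_i(s_i,\osigs)\le x_i(s_i',\osigs)$ per pair $(s_i,s_i')\in\sigma_i(\osigs)$ and none for tied pairs. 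This gives at most $nk^{n+1}$ constraints, still polynomial in $N$.

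A second, smaller mismatch: $\polytope$ imposes $\sum_i x_i(\sigs)=1$ as in \eqref{eq:proba} for goods as well. Relaxing this to $\le 1$ is not harmless once values are non-monotone, because withholding the good can break a cycle of constraints. Take $n=k=2$ and $\varepsilon<1/3$, with
\begin{itemize}
\item $v_1(1,1)=v_2(1,1)=v_1(2,2)=v_2(2,2)=1$,
\item $v_1(2,1)=v_2(1,2)=2/\varepsilon$,
\item $v_2(2,1)=v_1(1,2)=2$.
\end{itemize}
In $\polytope$ the constraints chain into $x_1(1,1)\le x_1(2,1)\le x_1(2,2)\le x_1(1,2)\le x_1(1,1)$. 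Hence $x_1$ is constant and $\Rv^*=2/(1+\varepsilon)$.

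With $\sum_i x_i\le 1$, consider the rule $\allocs(1,1)=\allocs(2,2)=(\tfrac13,\tfrac13)$, $\allocs(2,1)=(\tfrac23,\tfrac13)$, $\allocs(1,2)=(\tfrac13,\tfrac23)$. It satisfies all four monotonicity constraints and has ratio $3/2<2/(1+\varepsilon)$. So your value LP would return a rule outside $\polytope$ whose ratio is below $\Rv^*$. Use the equality in both settings.
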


Next, to prove hardness we define the gap variants of our optimization problem: given a minimization problem \textsc{Pb} and two constants $0\leq\alpha \leq \beta$, the gap variant $(\alpha,\beta)$-\textsc{Pb} is a decision problem which asks to distinguish between instances where the optimal solution has measure  $\leq\alpha$ and $>\beta$ (the algorithm can fail or not terminate on intermediate instances).

\begin{restatable}{theorem}{thmnphard}\label{thm:nphard}
When $n = 4$, the problem $(1,\beta)$-\PbD is NP-Hard for any $\beta > 1$.
\end{restatable}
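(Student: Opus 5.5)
The plan is a reduction from a classical NP-complete constraint satisfaction problem (3-SAT, or a bounded-occurrence variant). The instance will be built so that approximation ratio $1$ is achievable if and only if the formula is satisfiable. In addition, every mechanism that is not exactly optimal must lose a factor larger than $\beta$ somewhere, which gives the gap.

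\textbf{Step 1: characterize truthfulness.} I would first state the deterministic truthfulness characterization that the paper announces, namely the generalization of \cite{RoughgardenT16} to non-monotone value functions. For fixed $\osigs$, agent $i$ must win on an upper set of her own signals with respect to $v_i(\cdot,\osigs)$. Concretely, if $i$ wins at $s_i$ and $v_i(s_i',\osigs) > v_i(s_i,\osigs)$, then $i$ wins at $s_i'$; tied values are treated consistently by a threshold payment.

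\textbf{Step 2: use only two value levels.} Every $v_i$ will take values in $\{1, M\}$ with $M > \beta$. A mechanism then has ratio $\le \beta$ if and only if, at every profile, it selects an agent of value $M$; such a mechanism has ratio exactly $1$. So both sides of the gap collapse to one combinatorial question. At each profile $\sigs$ the mechanism must choose a winner from the set $A(\sigs)$ of agents with value $M$. The monotonicity constraints of Step 1 then restrict these choices along every line in which a single signal varies. This already gives the $(1,\beta)$ statement for every $\beta$, since we only need $M > \beta$.

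\textbf{Step 3: build the gadgets, using four agents.}
\begin{itemize}
\item \emph{Variable gadgets.} For each variable there are profiles where two agents tie at value $M$. Choosing which of them wins encodes the truth value.
\item \emph{Consistency.} Monotonicity along lines in which a single agent's signal varies copies this choice to every occurrence of the variable. I would use signal values of one agent to index occurrences. The rule used is: if agent $i$ loses at a profile where she has value $M$, she cannot win at any profile on that line where she has value $1$. Equivalently, a win at value $1$ forces a win at value $M$, which propagates constraints.
\item \emph{Clause gadgets.} For each clause there is a profile where three agents are eligible, corresponding to its three literals. The winner must be an agent whose corresponding literal is true, enforced through the monotonicity links back to the variable profiles.
\end{itemize}
The four agents play distinct roles: roughly, one signal indexes the variable or clause, and the others carry the literal positions. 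The number of signals $k$ is polynomial in the formula size. Since $n=4$ is fixed, the input size $N = 4k^4 b$ stays polynomial.

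\textbf{Step 4: check correctness.} Completeness: a satisfying assignment yields a selection that respects all the upper-set constraints, and therefore yields payments. Soundness: any ratio-$1$ mechanism can be decoded into a satisfying assignment.

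\textbf{Main obstacle.} The hardest part is controlling unintended interactions. Every line through a gadget profile also passes through unrelated profiles, so all \emph{filler} profiles must be assigned values carefully. They must always admit a choice compatible with every line constraint; for instance, make a dedicated agent uniquely valued $M$ there, with value $1$ elsewhere, so that it is never forced to win at a gadget profile. I would first try to make each filler profile have a unique $M$-agent whose own line contains no other $M$ entries. That agent's win is then forced but harmless, and most of the verification reduces to checking the lines through gadget profiles.
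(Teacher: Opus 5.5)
\textbf{Gap in Step 2.} Step 2 makes every truthfulness constraint vacuous, so the reduction cannot work as written. If every $v_i$ takes values in $\{1,M\}$, then $\sigma_i(\osigs)$ contains only pairs $(s_i,s_i')$ with $v_i(s_i,\osigs)=1$ and $v_i(s_i',\osigs)=M$. Tied signals impose no constraint at all (\Cref{lem:truthful}). Every profile has an agent of value $M$: your claim ``ratio $\le\beta$ iff an $M$-agent wins everywhere'' needs this, and your fillers guarantee it. So a ratio-$1$ mechanism never selects agent $i$ at a profile where $v_i=1$. Each monotonicity constraint then reads $0\le x_i(s_i',\osigs)$ and is vacuous.

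Consequently, picking an arbitrary $M$-valued agent at each profile is always truthful, and $\Rd^*=1$ on every instance you build, whether or not the formula is satisfiable. Your own propagation rule (``a win at value $1$ forces a win at value $M$'') never fires, because Step 2 excludes wins at value $1$. The consistency step also has nothing to use: two $M$-profiles of the same agent on a line are tied, so nothing copies a choice between them.

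\textbf{Missing idea and how the paper handles it.} You need to decouple the ordering $\sigmas$, which drives truthfulness, from the normalized levels $\rhos$, which drive the approximation ratio. The paper chooses $\rho_i(\sigs)\in\{\varepsilon,1\}$ with $\varepsilon<1/\beta$. It realizes these ratios by $v_i(\sigs)=\rho_i(\sigs)/(\varepsilon/2)^{\|\sigs\|_1}$ (\Cref{lm_find_value_for_performance_ratios}), which is strictly increasing in every signal whatever the ratios are. Then $\sigma_i(\osigs)$ is the full order, and a win of agent $i$ at $(s_i,\osigs)$ forces wins at every $(s_i',\osigs)$ with $s_i'>s_i$. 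At profiles with exactly two acceptable agents, the sum-to-one constraint turns a forced loss into a forced win. Together these produce the implication cycles of the variable gadget.

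\textbf{Secondary issue: the clause gadget.} The clause gadget built this way is three such cycles meeting at a profile with $\rhos(\sigs)=(1,1,1,\varepsilon)$. It forces the two non-selected literals to be false, so it encodes ``exactly one literal true''. That is why the paper reduces from 1-in-3-SAT. A reduction from 3-SAT would need gadgets whose implications run only one way, and your outline does not provide them.
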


However, in mechanism design, we are mostly interested in computing the outcome of the mechanism at the reported signal profile. Thus, for each minimization problem \textsc{Pb}, we introduce the query problem \textsc{Pb}$_\gamma$, which asks to answer queries about a solution of measure $\leq\gamma$, provided that one exists (the algorithm can fail or not terminate otherwise), while being consistent across queries. In our context, one query corresponds to computing the outcome at one signal profile. We establish the following corollary of \Cref{thm:lp,thm:twoagents,thm:twosignals}.

\begin{corollary*}
For every $\gamma > 1$, and measuring the complexity w.r.t. $N$, we have that:
\begin{itemize}
    \item when $n=2$ or $k=2$, one can answer $\PbD_\gamma$ queries in quasilinear time,
    \item one can answer $\PbV_\gamma$ and $\PbC_\gamma$ queries in polynomial time,
    \item assuming $\text{P}\neq\text{NP}$, one cannot always answer $\PbD_\gamma$ queries in polynomial time, even on instances where there exists a deterministic allocation rule of ratio $1$.
\end{itemize}

\end{corollary*}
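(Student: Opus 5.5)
The plan is to derive each item directly from the corresponding theorem. The only subtlety is that a query algorithm must give consistent answers across queries without ever storing a full mechanism. The uniform strategy: on every query, deterministically recompute an optimal mechanism from scratch with a fixed algorithm (fixed tie-breaking), then read off the outcome at the queried signal profile. Because the computation is deterministic, every query is answered according to the same mechanism, so consistency holds automatically. Because the computed mechanism is optimal, its ratio is at most $\gamma$ whenever some mechanism of ratio $\leq\gamma$ exists.

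\textbf{First item.} For $n=2$, invoke \Cref{thm:twoagents}, which solves $\PbD$ in $O(N\log N)$ time. For $k=2$, invoke \Cref{thm:twosignals}, which gives $N^{1+o(1)}$ time. I will check that "solving" in those theorems means outputting an optimal allocation rule, not only its optimal ratio. If only the value were returned, I would fall back on the constructive content of the proofs, since the reductions to combinatorial problems produce the rule itself. Reading the rule at one profile costs at most $O(N)$ more, so each query runs in quasilinear time.

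\textbf{Second item.} Invoke \Cref{thm:lp}. The linear program is solved deterministically, for instance by the ellipsoid or an interior-point method followed by rounding to a vertex. This yields the same optimal randomized allocation rule every time, and its allocation probabilities at the queried profile are the answer. Each query then takes $N^{O(1)}$ time.

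\textbf{Third item.} This is the main point, and I would argue by contradiction. Suppose some algorithm answers $\PbD_\gamma$ queries in polynomial time on every instance admitting a deterministic rule of ratio $1$. Use it to decide $(1,\beta)$-$\PbD$ with $\beta=\gamma$, which is NP-hard by \Cref{thm:nphard}, with $n=4$. Query all $k^n\le N$ profiles to obtain a full allocation rule $x$, stopping with answer ``$>\beta$'' if any query fails or exceeds its polynomial time bound. Then verify $x$ in polynomial time:
\begin{itemize}
\item check its ratio against every profile;
\item check truthfulness via the characterization of truthful mechanisms (the generalized monotonicity conditions of Roughgarden--Talgam-Cohen type that the paper establishes), which involves only pairwise comparisons along each agent's signal line and so is polynomial in $N$.
\end{itemize}
Accept exactly when $x$ is truthful with ratio $\le\gamma$.

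This decision procedure is correct in both directions. On yes-instances the queries succeed and return a valid rule of ratio $\le\gamma$, so we accept. On instances with optimum $>\beta$, no truthful rule of ratio $\le\gamma$ exists, so whatever the algorithm outputs fails verification and we reject. The resulting procedure runs in polynomial time and would decide an NP-hard problem, contradicting $\text{P}\neq\text{NP}$.

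The main obstacle I expect is the verification step. It relies on a polynomial-time-checkable characterization of truthful deterministic mechanisms in the non-monotone setting, and on the promise semantics (failure or non-termination) being handled by a clock.
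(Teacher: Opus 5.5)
Your proposal is correct and follows the paper's route: the first two items come from deterministically running the optimization algorithms of \Cref{thm:twoagents,thm:twosignals,thm:lp} and reading off the outcome at the queried profile, and the third is exactly the paper's contrapositive argument in \Cref{cor:QueryPb} (query all $k^n$ profiles under a clock, then verify, thereby deciding the NP-hard $(1,\gamma)$-\PbD of \Cref{thm:nphard}). One small point to make explicit is that the verification step must also check that the assembled rule is a feasible \emph{deterministic} allocation (entries in $\{0,1\}$ summing to one at each profile), because on a no-instance a fractional truthful rule may still achieve a small ratio.
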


Finally, one might wonder if the complexities we obtained are improvable, as answering one query might require less time than computing the entire allocation rule. However, we show that the exponential dependency in $n$ is unavoidable, by having the input being accessible through an oracle, and counting the number of oracle queries necessary to solve a computational task.

\begin{restatable}{theorem}{thmquery}\label{thm:query}
For all fixed $n\geq 2$ and $k\geq 2$, it requires at least $\Omega(k^n)$ queries to the input oracle (value or cost) to answer some $\PbV_\gamma$, $\PbC_\gamma$ and $\PbD_\gamma$ queries.
\end{restatable}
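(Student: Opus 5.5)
We argue by an adversary (indistinguishability) argument. For fixed $n,k\ge 2$ and fixed $\gamma>1$, we build a family of instances $\{I_{\mathbf{t}}\}$ indexed by ``hidden'' signal profiles $\mathbf{t}\in[k]^n$, together with a baseline instance $I_0$. We also fix a query profile $\sigs^*$, the profile at which the outcome is asked. We need two properties. First, $I_{\mathbf{t}}$ and $I_0$ differ in the oracle answers only at the single entry $(i,\mathbf{t})$, i.e., $v_i(\mathbf{t})$ or $c_i(\mathbf{t})$ for one agent. Second, every mechanism of ratio $\le\gamma$ for $I_0$ must output at $\sigs^*$ an allocation that is forbidden, i.e., of ratio $>\gamma$, for every $I_{\mathbf{t}}$. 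Then an algorithm making $o(k^n)$ queries on $I_0$ leaves some $\mathbf{t}$ unqueried. It therefore behaves identically on $I_{\mathbf{t}}$ and answers wrongly there. Randomized query algorithms are handled by Yao's principle, giving $\Omega(k^n)$ expected queries.

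The key tool is the truthfulness characterization established earlier, generalizing \cite{RoughgardenT16}. For each bidder $i$, fixing $\osigs$, the allocation probability $x_i(\cdot,\osigs)$ must be monotone along the order of $s_i$ induced by $v_i(\cdot,\osigs)$, non-decreasing in value (resp.\ non-increasing in cost). This monotonicity propagates information: a large value of agent $i$ at a distant profile $\mathbf{t}$ forces, through a chain of single-coordinate changes, constraints on the allocation at $\sigs^*$.

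The construction I would try first takes $n=2$ agents for concreteness and pads the other agents with tiny constant values. In $I_0$, agent $1$ has a moderately large value everywhere, and agent $2$ has a value much smaller than agent $1$'s except along a designed staircase, so that near-optimal mechanisms essentially always allocate to agent $1$ and, at $\sigs^*$, allocate to agent $1$ with probability close to $1$. In $I_{\mathbf{t}}$ we raise $v_2(\mathbf{t})$ to a huge value $M\gg\gamma$. A $\gamma$-approximation must then give agent $2$ probability at least about $1/\gamma$ at $\mathbf{t}$. Monotonicity in $s_2$ at fixed $t_1$ then forces agent $2$'s probability to stay large at other $s_2$ values of higher or equal value. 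Choosing the orders of the values so that chains from every $\mathbf{t}$ reach $\sigs^*$ requires passing through agent $1$'s monotonicity as well, alternating coordinates.

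The main obstacle is exactly this step: designing the value/cost orders so that a single perturbation at any of the $k^n$ profiles provably changes the feasible outcome at one fixed $\sigs^*$, rather than only at the profiles sharing a coordinate with $\mathbf{t}$. If a single fixed $\sigs^*$ proves too restrictive, I would fall back on letting the query profile be the first one asked and counting total queries over a consistent sequence. I would also treat the deterministic case separately, where ratio gaps are crisp (allocate to $1$ vs.\ $2$), and the cost setting, where a forbidden allocation of even small probability to a costly agent is already fatal, which makes it easier than the value case.
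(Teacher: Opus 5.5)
Your indistinguishability framework is sound, but the proposal stops where the proof begins. The construction you flag as the ``main obstacle'' is the entire content, and the one-sided property you require makes it essentially unattainable.

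You ask that every answer at $\sigs^*$ valid for $I_0$ be invalid for every $I_{\mathbf t}$. Since $\mathbf t\neq\sigs^*$, the ratio constraint at $\sigs^*$ is identical in both instances. So a restriction coming from it is also satisfied by every valid answer for $I_{\mathbf t}$; this covers your sketch, where agent $1$ must get probability close to $1$ at $\sigs^*$ because $v_2(\sigs^*)\ll v_1(\sigs^*)$. Disjointness must therefore come from constraints propagated to $\sigs^*$ from other profiles of $I_0$, and each of the $\Omega(k^n)$ sites would have to break every such forcing chain. In the monotone deterministic case, $x_1(\sigs^*)=1$ is forced precisely when some predecessor of $\sigs^*$ has $\rho_2<1/\gamma$. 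A change at $\mathbf t$ leaves all other such sources in place, so at most one site can work.

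The missing idea is a \emph{two-sided} adversary. Take a permissive baseline $\bar\rhos$ in which both outcomes are valid at a central profile $\bar\sigs$, plus two disjoint families of single-entry perturbations forcing opposite outcomes there. For $n=2$, take $v_i(\sigs)=\rho_i(\sigs)(2/\varepsilon)^{\|\sigs\|_1}$ with $\rho_i\in\{\varepsilon,1\}$. These values are increasing whatever the ratios, so orders never change and one ratio change is one oracle entry; hence $x_1$ is non-decreasing in $s_1$ and non-increasing in $s_2$. Set $\bar s_1=\bar s_2=1+\lfloor k/2\rfloor$ and all baseline ratios to $1$. Then setting $\rho_2(\sigs')=\varepsilon$ at any $\sigs'$ with $s_1'\le\bar s_1$, $s_2'\ge\bar s_2$ forces $x_1(\bar\sigs)=1$. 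Setting $\rho_1(\sigs')=\varepsilon$ at any $\sigs'$ with $s_1'\ge\bar s_1$, $s_2'\le\bar s_2$ forces $x_1(\bar\sigs)=0$. In each case a constant allocation keeps ratio $1$. Whatever the algorithm answers on $\bar\rhos$, it must have queried every site of the family forcing the other answer.

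Two further steps of your sketch would fail. First, raising $v_2(\mathbf t)$ to $M$ puts $\mathbf t$ at the top of agent $2$'s order on its line, so monotonicity only gives $x_2(s_2,t_1)\le x_2(\mathbf t)$. The lower bound at $\mathbf t$ therefore does not propagate along $s_2$ at all; it propagates only as an upper bound on $x_1$ along agent $1$'s line. Second, and more seriously, padding agents $3,\dots,n$ with constant values limits your construction to $O(k^2)$ relevant sites. Their orders are empty, and agent $i$'s monotonicity only links profiles differing in $s_i$. So every constraint chain through $\sigs^*$ stays in the slice $\{\sigs : s_j=s^*_j\ \forall j\ge3\}$, and perturbations of $v_1,v_2$ outside it cannot affect the answer at $\sigs^*$.

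The paper makes every agent carry the propagation. At a profile whose last coordinate differing from $\bar\sigs$ is $J\ge3$, only agents $J$ and $J+1$ have ratio $1$. Moreover, $\bar s_j$ alternates between $1$ and $k$ for $j\ge3$, so the literal $x_J$ (odd $J$) or $\neg x_J$ (even $J$) forced on the previous layer propagates along $s_J$ in the right direction. By induction, the literals implied by $x_1(\bar\sigs)$ and by $x_2(\bar\sigs)$ each number $\Omega(k^n)$.

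Finally, in the value setting the claim cannot hold for an arbitrary fixed $\gamma>1$, as you set out. The uniform lottery is truthful and $n$-approximate, so $\PbV_\gamma$ with $\gamma\ge n$ needs no queries; the paper works with $\gamma=1$.
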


\begin{table}[h!]
    \centering
    \def\arraystretch{1.2}
    \begin{tabular}{|c|l|l|}
    \cline{2-3}
    \multicolumn{1}{c|}{} & \multicolumn{1}{c|}{Time complexity} & \multicolumn{1}{c|}{Query complexity} \\
    \hline
        $n = 2$ &\Cref{thm:twoagents}: \PbV, \PbC, \PbD in $O(N\log N)$ &  \\
    \cline{1-2}
        $k = 2$ &\Cref{thm:twosignals}: \PbD in $N^{1+o(1)}$ & \Cref{thm:query}: \PbV, \PbC, \PbD \\
    \cline{1-2}
        general & \Cref{thm:lp}: \PbV and \PbC in $N^{O(1)}$ & require  $\Omega(k^n)$ queries \\
        case & \Cref{thm:nphard}: \PbD is NP-Hard &   \\
    \hline
    \end{tabular}
    \caption{Summary of our main results. The time complexity is given as a function of the total input size $N$, and the query complexity is the number of value or cost oracle queries.}
    \label{tab:results}
\end{table}

%%%%%%%%%%%%%%%%%%%%%%%%%%%%%%%%%%%%%%%%%%%%%%%%%%%%%%%%%%%%%%%
%%%%%%%%%%%%%%%%%%%%%%%%%%%%%%%%%%%%%%%%%%%%%%%%%%%%%%%%%%%%%%%
\subsection{Related Works}

The interdependent value model stemmed from modeling settings such as mineral rights auctions, where bidders have a common value \cite{wilson1969communications, milgrom1979convergence,dasgupta2000efficient}, and was formalized by Milgrom and Weber \cite{milgrom1982theory}. This model was extensively studied by economists, who characterized the class of valuation functions for which allocating to the highest value can be implemented truthfully, using the ``single-crossing'' condition \cite{maskin1996auctions,ausubel1999generalized,ausubel1999generalized,jehiel2001efficient}.

Auctions with interdependent values received recent attention from the theoretical computer science community \cite{ChawlaFK14,RoughgardenT16}, considering the task of approximately maximizing revenue (payment made to the seller of a good) and social welfare (value of a buyer), beyond the single-crossing assumption. In their founding work, Roughgarden and Talgam-Cohen \cite{RoughgardenT16} characterized the allocation rules which can be implemented truthfully, when agents have increasing value functions, using a generalization of Myerson's lemma \cite{myerson1981optimal}. In \Cref{lem:truthful}, we generalize their characterization to non-monotone value and cost functions. The only prior work which manages to deal with non-monotone value functions is \cite{EdenGZ22}, but they only provide a sufficient condition, potentially missing out on some truthful allocation rules which do not satisfy it.

Closest to our work is a recent line of research that develops constant-approximation mechanisms for agents with interdependent values belonging to restricted classes, such as ``$c$-single-crossing'' \cite{EdenFFG18} or ``submodular-over-signals'' \cite{EdenFFGK19,AmerT21,LuSZ22}. In particular, Eden, Feldman, Fiat, and Goldner \cite{EdenFFG18} study the special cases of \PbD with either $n = 2$ agents or $k = 2$ signals under monotone value functions. They introduce the $c$-single-crossing condition, which is sufficient to guarantee the existence of a $c$-approximate deterministic allocation rule. In contrast, \Cref{thm:twoagents,thm:twosignals} consider the same special cases and construct deterministic allocation rules achieving the best possible approximation ratio, with comparable computational complexity, without restricting to the $c$-single-crossing domain and without assuming monotonicity. Moreover, in the case $n = 2$, we provide an exact characterization of the value functions that admit a $c$-approximation; as shown in \Cref{prop:single-crossing}, this characterization strictly generalizes the $c$-single-crossing condition (which remains sufficient as a special case). Finally, prior work on \PbV provide randomized allocation rules when valuation functions are monotone and submodular-over-signals, establishing that the optimal approximation ratio is at most $3.315$ in general \cite{LuSZ22}, at most $2$ when $k = 2$ \cite{AmerT21}, and exactly $2$ in the worst case when $n = k = 2$ \cite{EdenFFGK19}.

Follow-up works have studied simple auction formats such as clock-auctions \cite{GkatzelisPPS21,FeldmanGG022}, or more general scenarios such as online auctions \cite{MaurasMR24, FeldmanMMR25}, or with private valuation functions \cite{EdenGZ22,EdenFGMM23,EdenFMM24}. Finally, our paper bears resemblance with work studying the computational and communication complexity of truthful mechanism
\cite{Dobzinski11,DobzinskiV12,DobzinskiD13,DaskalakisDT14,Dobzinski16,AssadiKSW20,RubinsteinZ21,BabaioffDR23}.

%%%%%%%%%%%%%%%%%%%%%%%%%%%%%%%%%%%%%%%%%%%%%%%%%%%%%%%%%%%%%%%
%%%%%%%%%%%%%%%%%%%%%%%%%%%%%%%%%%%%%%%%%%%%%%%%%%%%%%%%%%%%%%%
\subsection{Organization}

The rest of the paper is organized as follows: \Cref{sec:model} introduces the model and defines important notations, \Cref{sec:ideas} presents an overview of the main ideas for our technical contributions, \Cref{sec:positive-results} gives the formal proofs of our positive results, and \Cref{sec:negative-results} gives the formal proofs of our negative results. 

%%%%%%%%%%%%%%%%%%%%%%%%%%%%%%%%%%%%%%%%%%%%%%%%%%%%%%%%%%%%%%%
%%%%%%%%%%%%%%%%%%%%%%%%%%%%%%%%%%%%%%%%%%%%%%%%%%%%%%%%%%%%%%%
%%%%%%%%%%%%%%%%%%%%%%%%%%%%%%%%%%%%%%%%%%%%%%%%%%%%%%%%%%%%%%%
\section{Model and Preliminaries} \label{sec:model}

We consider an auction setting where $n$ agents compete to be selected, either to receive a good (value maximization) or to perform a chore (cost minimization). Denoting $[\ell] := \{1, \dots, \ell\}$, each agent $i \in [n]$ has a (private) signal $s_i \in [k]$ 
which captures the information she has
on being selected. For convenience, we will denote $\sigs := (s_i)_{i\in [n]}$ and $\sigs_{-j} := (s_i)_{i\in [n]\setminus\{j\}}$. 

%%%%%%%%%%%%%%%%%%%%%%%%%%%%%%%%%%%%%%%%%%%%%%%%%%%%%%%%%%%%%%%
\paragraph{Allocation rule.} A \emph{randomized} allocation rule is a collection of functions $\allocs := (x_i)_{i\in [n]}$, where $x_i: [k]^n \rightarrow [0,1]$ gives the probability of agent $i$ being selected. We impose the constraint that for all realizations of signals the probabilities must sum up to one:
\begin{equation}
\forall \sigs \in [k]^n,\qquad \sum_{i\in[n]} x_i(\sigs) = 1.
\label{eq:proba}
\end{equation}
An allocation rule $\allocs$ is \emph{deterministic} if all $x_i$'s have value in $\{0,1\}$.

%%%%%%%%%%%%%%%%%%%%%%%%%%%%%%%%%%%%%%%%%%%%%%%%%%%%%%%%%%%%%%%
\paragraph{Value and cost.} Each agent has a (publicly known) function which aggregates the information of all agents into a single parameter:
\begin{itemize}
\item when allocating a \emph{good}, each agent $i\in [n]$ has a value function $v_i: [k]^n \rightarrow \mathbb R_{>0}$;
\item when allocating a \emph{chore}, each agent $i\in [n]$ has a cost function $c_i: [k]^n \rightarrow \mathbb R_{>0}$.
\end{itemize}

Most of the previous works \cite{EdenFFG18,EdenFFGK19,AmerT21} focus on non-decreasing value functions, that is, $\forall i,j \in [n],\sigs_{-j}\in[k]^{n-1}$, and $\forall s_j\in [k-1]$, $v_i(s_j,\sigs_{-j})\le v_i(s_j+1,\sigs_{-j})$. Similarly, we consider non-increasing cost functions, defined by $c_i(s_j,\sigs_{-j})\ge c_i(s_j+1,\sigs_{-j})$.

%%%%%%%%%%%%%%%%%%%%%%%%%%%%%%%%%%%%%%%%%%%%%%%%%%%%%%%%%%%%%%%
\paragraph{Performance ratios and signal orderings.} To unify the good and chore settings, we will assume (without loss of generality) that we are given value and cost functions which satisfy: $$\forall i\in[n],\forall \sigs\in[k]^n,\qquad v_i(\sigs) \cdot c_i(\sigs) = 1.$$
Abusing notation, we might use $\vals$ in the chore setting, in which case the cost should be computed as the inverse of the value. Then, to abstract away the specific setting, we compute two quantities: performance ratios and signal orderings. The performance ratios are normalized values and costs, that will be used to compute the performance of our allocation:
\begin{equation}
\forall i\in [n],\forall \sigs \in [k]^n,\qquad
\rho_i(\sigs) :=
\frac{v_i(\sigs)}{\max_{j\in[n]} v_j(\sigs)} =
\frac{\min_{j\in[n]} c_j(\sigs)}{c_i(\sigs)}\in (0,1].
\tag{$\boldsymbol{\rho}$}\label{eq:ratios}
\end{equation}
We switch our focus from value functions (resp. cost functions) to performance ratios in the following, and claim by the following lemma that this is exactly a reformulation of value (resp. cost) setting.

\begin{restatable}{lemma}{lemvaluationgeneral}
\label{lm_find_value_for_performance_ratios}
Given a collection of ratios $\rhos=(\rho_i(\sigs))_{i\in[n], \sigs\in [k]^n}$, where for all signal profiles $\sigs$ and bidders $i$, $\rho_i(\sigs)\in (0,1]$, and for each signal profile $\sigs$, there exists at least one bidder $i$ with $\rho_i(\sigs)=1$, there exists increasing value (resp. decreasing cost) functions which induce the given ratios.
\end{restatable}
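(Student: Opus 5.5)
Given ratios $\rhos$, we need value functions $v_i$ that are non-decreasing in every coordinate and satisfy $v_i(\sigs)/\max_j v_j(\sigs)=\rho_i(\sigs)$. The natural ansatz is
\[
v_i(\sigs) := M^{\|\sigs\|}\,\rho_i(\sigs), \qquad \|\sigs\| := \sum_{j\in[n]} s_j,
\]
where $M>1$ is a large constant to be chosen. The factor $M^{\|\sigs\|}$ is common to all bidders at a given profile $\sigs$, so it cancels in the ratio. Since some bidder has $\rho_j(\sigs)=1$, we get $\max_j v_j(\sigs)=M^{\|\sigs\|}$ and hence $v_i(\sigs)/\max_j v_j(\sigs)=\rho_i(\sigs)$, as required. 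Positivity is immediate because $\rho_i(\sigs)>0$.

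The only real step is monotonicity. Increasing one coordinate $s_j$ by one multiplies the common factor by $M$, while $\rho_i$ may drop from at most $1$ to at least $\rho_{\min}:=\min_{i,\sigs}\rho_i(\sigs)>0$; this minimum exists because the domain is finite. It therefore suffices to have
\[
M^{\|\sigs\|+1}\rho_{\min} > M^{\|\sigs\|}\cdot 1,
\]
i.e.\ $M>1/\rho_{\min}$. Taking for instance $M:=2/\rho_{\min}$ yields strictly increasing values in each coordinate. Each step is a single-coordinate increment, so this argument covers the paper's notion of non-decreasing (here strictly increasing) value functions.

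For the cost setting I would set $c_i(\sigs):=1/v_i(\sigs)=M^{-\|\sigs\|}/\rho_i(\sigs)$. Then $\min_j c_j(\sigs)=M^{-\|\sigs\|}$, so $\min_j c_j(\sigs)/c_i(\sigs)=\rho_i(\sigs)$, which is consistent with the convention $v_i c_i=1$ and with \eqref{eq:ratios}. Monotonicity is inherited by taking reciprocals: the costs are strictly decreasing in each signal.

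The main obstacle is only making sure the dominating factor beats the worst-case fluctuation of $\rho_i$. This is exactly why the finiteness of $[k]^n$ and the strict positivity of the ratios matter, since together they guarantee $\rho_{\min}>0$.
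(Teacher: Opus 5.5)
Your proposal is correct and is the paper's construction: the paper sets $v_i(\sigs)=\rho_i(\sigs)/(r^*(\rhos)/2)^{\|\sigs\|_1}$, which is your $M^{\|\sigs\|}\rho_i(\sigs)$ with $M=2/\rho_{\min}$, and takes the reciprocal $c_i(\sigs)=(r^*(\rhos)/2)^{\|\sigs\|_1}/\rho_i(\sigs)$ for costs. Your single-coordinate check $M\rho_{\min}>1$ spells out the monotonicity verification that the paper leaves as ``easy to check.''
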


\begin{proof}
    %It suffices to provide construction only for the value setting. Indeed, given any increasing value functions inducing the ratios, taking its reciprocal yields a construction for cost setting.
    We denote the minimal value of $\rhos$ by ${r}^*(\rhos)$.
    It is easy to check that the following constructions satisfy all the requirements:
    \begin{itemize}
        \item for value setting, $v_i(\sigs)=\rho_i(\sigs)/({r}^*(\rhos)/2)^{\ell}$, where $\ell = ||\sigs||_1$;
        \item for cost setting, $c_i(\sigs)=({r}^*(\rhos)/2)^{\ell}/\rho_i(\sigs)$, where $\ell = ||\sigs||_1$.
    \end{itemize}
\end{proof}
%The proof is deferred to \Cref{ssec:build_valuation}.

Next, the signal ordering will allow us to characterize truthfulness, without the assumption that valuation functions are monotone, as was done by most previous works \cite{RoughgardenT16, EdenFFG18, EdenFFGK19, AmerT21}. 
For each agent $i\in[n]$ and signals $\osigs\in[k]^{n-1}$, we construct a binary relation $\sigma_i(\osigs)$ over $s_i \in [k]$ by sorting in non-decreasing (partial) order of value, or in non-increasing (partial) order of cost.

\begin{align*}
\forall i\in[n],\forall \osigs\in[k]^{n-1},\qquad
\sigma_i(\osigs)
:=& \;\{(s_i, s_i')\in[k]^2\;|\; v_i(s_i, \osigs) < v_i(s_i',\osigs)\}
\tag{$\boldsymbol{\sigma}$}\label{eq:orders}\\
=& \;\{(s_i, s_i')\in[k]^2\;|\; c_i(s_i, \osigs) > c_i(s_i',\osigs)\}.
\end{align*}

We observe that for all $i$ and $\sigs_{-i}$ the binary relation $\sigma_i(\sigs_{-i})$ is a {strict} order: for all $s_i$ and $s_i'$, either $(s_i,s_i')\notin\sigma_i(\osigs)$ or $(s_i',s_i)\notin\sigma_i(\osigs)$. However, this order may not be total: there might exist $s_i$ and $s_i'$ such that $(s_i, s_i')\notin \sigma_i(\osigs)$ and $(s_i',s_i)\notin\sigma_i(\osigs)$, which happens when $v_i(s_i,\sigs_{-i})=v_i(s_i^{\prime},\sigs_{-i})$ or $c_i(s_i,\sigs_{-i})=c_i(s_i^{\prime},\sigs_{-i})$.

%%%%%%%%%%%%%%%%%%%%%%%%%%%%%%%%%%%%%%%%%%%%%%%%%%%%%%%%%%%%%%%
%%%%%%%%%%%%%%%%%%%%%%%%%%%%%%%%%%%%%%%%%%%%%%%%%%%%%%%%%%%%%%%
\subsection{Truthfulness}

Each agent $i\in [n]$ will report a bid $b_i\in[k]$, which may not be equal to their private signal $s_i$. To incentivize agents to report their true signal, we design mechanisms, which are allocation rules endowed with payment functions $\payments := (p_i)_{i\in[n]}$ with $p_i: [k]^n \rightarrow \mathbb R_{\geq 0}$, which are either charged or transferred to the agents depending on the scenario (good or chore). We assume that agents are rational, and act to maximize the \emph{quasi-linear utility}:
\begin{itemize}
\item when allocating a \emph{good}, each agent $i\in[n]$ has utility $u_i(\bids; \sigs) := x_i(\bids) \cdot v_i(\sigs) - p_i(\bids)$;
\item when allocating a \emph{chore}, each agent $i\in[n]$ has utility $u_i(\bids; \sigs) := p_i(\bids) - x_i(\bids) \cdot c_i(\sigs)$.
\end{itemize}
A mechanism is \emph{truthful} (also known as EPIC, for \emph{ex-post incentive compatible}) if reporting the true signal is a Nash-equilibrium, that is, if for each agent $i$ reporting $b_i = s_i$ is a best response when all other agents report $\bids_{-i} = \sigs_{-i}$. More formally:
\begin{align}
\forall \sigs\in[k]^n, \forall i\in [n],\forall b_i\in[k], \qquad u_i(s_i, \sigs_{-i}; \sigs) &\geq u_i(b_i, \sigs_{-i}; \sigs) \tag{IC}\label{eq:IC}\\
\forall \sigs\in[k]^n, \forall i\in [n], \qquad u_i(s_i, \sigs_{-i}; \sigs) &\geq 0 \tag{IR}\label{eq:IR}
\end{align}

Previous work \cite{RoughgardenT16} has characterized allocation rule that can be made truthful in expectation (when allocating a good) when the value function is non-decreasing in the signals. We refine the characterization so that it applies to a more general setting allowing for non-monotone value or cost functions.  

\begin{restatable}[adapted from \cite{RoughgardenT16}]{lemma}{lemtruthful}\label{lem:truthful}
    An allocation rule $\allocs$ can be implemented truthfully if and only if it satisfies the following monotonicity property:
    \begin{equation}
    \forall i\in[n], \forall\osigs\in[k]^{n-1},\forall (s_i, s_i')\in \sigma_i(\osigs),\qquad x_i(s_i, \sigs_{-i}) \leq x_i(s_i', \sigs_{-i}).
    \label{eq:monotonicity}
    \end{equation}
\end{restatable}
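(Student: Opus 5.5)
Fix an agent $i$ and the reports $\osigs$ of the others, treated as truthful. Then (IC) and (IR) only involve the one-dimensional family indexed by $s_i\in[k]$. Write $x(t):=x_i(t,\osigs)$, $v(t):=v_i(t,\osigs)$ and $p(t):=p_i(t,\osigs)$. In the good setting, truthfulness becomes
$$x(s)v(s)-p(s)\ge x(b)v(s)-p(b)\quad\text{and}\quad x(s)v(s)-p(s)\ge 0 \qquad\text{for all } s,b\in[k].$$
The chore setting is the same with $-c$ in place of $v$, with the sign of $p$ flipped, and with $\sigma_i$ reversed accordingly. Hence it suffices to treat the value case and then translate, using $c_i=1/v_i$ and the fact that both definitions in \eqref{eq:orders} give the same relation. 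The whole plan is to reduce to this single-parameter problem. Its parameter $v(s)$ is not necessarily monotone in $s$, and \eqref{eq:monotonicity} only compares signals with distinct values.

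\textbf{Necessity.} Take $(s,s')\in\sigma_i(\osigs)$, so $v(s)<v(s')$. Add the IC inequality for true signal $s$ deviating to $s'$ to the one for true signal $s'$ deviating to $s$. The payments cancel, leaving
$$(x(s')-x(s))\,(v(s')-v(s))\ge 0,$$
which gives $x(s)\le x(s')$. In the chore case the same addition yields $(x(s')-x(s))(c(s)-c(s'))\ge0$, and $c(s)>c(s')$ gives the conclusion.

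\textbf{Sufficiency.} Assume \eqref{eq:monotonicity}. First, signals with equal value must receive equal allocation: if $v(s)=v(s')$ but $x(s)\ne x(s')$, IC forces both $p(s)-p(s')\le v(x(s)-x(s'))$ and the reverse inequality. These are compatible only with equal utilities, which is fine. So equal-value signals are handled by setting the payment so that utilities match; no monotonicity is needed there.

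Next, sort the distinct values $w_1<\dots<w_m$ attained by $v(\cdot)$. By \eqref{eq:monotonicity}, every signal with value $w_j$ has allocation at most that of every signal with a larger value. Within a value class, allocations may differ. Define Myerson-style payments on the values by the step formula
$$p(s):=x(s)v(s)-\sum_{w_j<v(s)} \bar x_j\,(w_{j+1}-w_j).$$
Here $\bar x_j$ is chosen between the maximum allocation in class $j$ and the minimum allocation in class $j+1$. Such a choice exists because $\max$ over class $j$ is at most $\min$ over class $j+1$. The utility $U(s)=x(s)v(s)-p(s)$ then depends only on $v(s)$, is nonnegative, and is a convex piecewise-linear function $U(w)$ whose slope on $[w_j,w_{j+1}]$ is $\bar x_j$. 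Set $U(w_1)=0$, and shift payments if necessary so that $p\ge0$. Since $v\le w_m$, it is enough to use $p(s)\ge x(s)v(s)-\sum_j \bar x_j(w_{j+1}-w_j)\ge x(s)v(s) - (v(s)-w_1)\ge \dots$; alternatively check $p\ge0$ directly from $x(s)\ge\bar x_j$ for $w_j<v(s)$.

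IC then reduces to $U(w)\ge U(w')+x(b)(w-w')$, where $w'=v(b)$. This holds because $x(b)$ is a subgradient of $U$ at $w'$: it lies between the left slope $\bar x_{j'-1}$ and the right slope $\bar x_{j'}$.

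The chore case is identical after substituting $-c$ for $v$.

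\textbf{Main obstacle.} The delicate step is the nonmonotone, non-total order: ties in value and the fact that signal labels carry no order. The fix is to index everything by the value level instead of by $s_i$. For ties, choose the interpolating slopes $\bar x_j$ so that every allocation in a class is a valid subgradient. For $p\ge0$, I expect to verify directly that the integral term never exceeds $x(s)v(s)$; this follows from $\bar x_j\le x(s)$ for all $j$ below the level of $s$, together with $\sum_{w_j<v(s)}(w_{j+1}-w_j)=v(s)-w_1\le v(s)$.
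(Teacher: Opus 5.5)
Your proof is correct. Necessity is the same argument as in the paper: add the two IC inequalities so that the payments cancel. For sufficiency you take a somewhat different route.

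The paper extends $\sigma_i(\osigs)$ to a total order $\tau_i(\osigs)$ that is also monotone in $x_i(\cdot,\osigs)$. It charges the threshold payment $\sum_{t}\delta x_i(t,\bids_{-i})\,v_i(t,\bids_{-i})$ over the $\tau_i$-prefix $T_i(\bids)$ of the report. It then checks IC term by term: summands $\delta x_i(t,\osigs)\,(v_i(\sigs)-v_i(t,\osigs))$ with $t$ up to the true signal are nonnegative, and those beyond it are nonpositive.

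You instead group signals into value classes $K_j=\{t : v(t)=w_j\}$ and choose slopes $\bar x_j\in[\max_{K_j}x,\min_{K_{j+1}}x]$. IC then follows as the subgradient inequality for the convex envelope $U$. The two constructions coincide for $\bar x_j=\max_{K_j}x$, which is exactly what the paper's $x$-monotone tie-breaking produces. So your version exhibits a whole family of valid payment rules, and it makes transparent that equal-value signals must receive equal utility but not equal allocation. The paper's version is shorter, treats goods and chores in parallel without the substitution $v\mapsto -c$, and gets $p_i\ge 0$ for free because every summand is nonnegative.

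When writing it up, remove three things:
\begin{itemize}
\item the opening claim that equal-value signals ``must receive equal allocation'' (you immediately correct it yourself);
\item the suggestion to shift payments, since raising them would violate IR at the lowest class, where $U(w_1)=0$;
\item the chain $x(s)v(s)-(v(s)-w_1)\ge\dots$, whose right-hand side can be negative.
\end{itemize}
Keep only the argument $\bar x_j\le\min_{K_{j+1}}x\le x(s)$ for all $j$ with $w_j<v(s)$, which yields $p(s)\ge x(s)\,w_1\ge 0$.

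In the chore case this step is not literally identical: after substituting $-c$ for $v$, the lowest level $w_1$ is negative. There the payment equals $x(s)c(s)+U\ge 0$ directly. Finally, the substitution $v\mapsto -c$, $p\mapsto -p$ leaves $\sigma_i$ unchanged rather than reversing it, and your remark about $c_i=1/v_i$ plays no role in the argument.
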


\begin{proof}
Given such an allocation rule $\allocs$, for each agent $i\in[n]$ and $\osigs\in[k]^{n-1}$ we extend the partial order $\sigma_i(\osigs)$ into a total order $\tau_i(\osigs)$ monotone with respect to $x_i(\cdot, \osigs)$, that is, such that for all $(s_i, s_i')\in \tau_i(\osigs)$ we have $x_i(s_i, \osigs) \leq x_i(s_i', \osigs)$. From the definition of $\sigma_i(\osigs)$ observe that for all $(s_i, s_i')\in\tau_i(\osigs)$ we also have \begin{align*}
v_i(s_i, \osigs) &\leq v_i(s_i', \osigs)
&\text{(good),}\\
c_i(s_i, \osigs) &\geq c_i(s_i', \osigs)
&\text{(chore).}
\end{align*}
For convenience, given $\sigs\in[k]^n$ we define the set of signals which are ranked before $s_i$ in $\tau_i(\osigs)$
$$
\forall i\in[n],\forall \sigs\in[k]^n,\qquad T_i(\sigs) := \{s_i\}\cup\{t\in[k]\;|\; (t,s_i)\in \tau_i(\osigs) \}
$$
Next, we define the increase in allocation probability:
$$
\forall i\in[n],\forall \sigs\in[k]^n,\quad \delta x_i(\sigs):=
x_i(\sigs) - \max(\{0\}\cup\{x_i(t, \osigs)\;|\; t\in T_i(\sigs)\setminus\{s_i\}\})
$$
In particular, observe that we have
$$
\forall i\in[n],\forall \sigs\in[k]^n,\qquad
x_i(\sigs) = \sum_{t\in T_i(\sigs)} \delta x_i(t, \osigs)
$$
Finally, we define the payment functions:
\begin{align*}
p_i(\bids) :=& \sum_{t\in T_i(\bids)} \delta x_i(t,\bids_{-i})\cdot v_i(t, \bids_{-i}) & \text{(good)},\\
p_i(\bids) :=& \sum_{t\in T_i(\bids)} \delta x_i(t,\bids_{-i})\cdot c_i(t, \bids_{-i})& \text{(chore)}.
\end{align*}

To check that properties (\ref{eq:IC}) and (\ref{eq:IR}) are verified, we compute the utility $u_i(b_i, \osigs; \sigs)$.
\begin{align*}
u_i(b_i, \osigs; \sigs) &= \sum_{t\in T_i(b_i, \osigs)} \delta x_i(t,\sigs_{-i}) \cdot ({v_i(\sigs) - v_i(t, \osigs)}) & (\text{good}),\\
u_i(b_i, \osigs; \sigs) &= \sum_{t\in T_i(b_i, \osigs)} \delta x_i(t,\sigs_{-i}) \cdot ({c_i(t, \osigs) - c_i(\sigs)}) & \text{(chore)}.
\end{align*}
Observe that the bid $b_i$ only affects the set $T_i(b_i, \osigs)$ on which we compute the sum, but the summand does not depend in $b_i$. Moreover, each non-zero term $t\in T_i(b_i, \osigs)$ is positive if and only if $t\in T_i(s_i, \osigs)$, thus $u_i(b_i, \osigs; \sigs)$ is non-negative and maximized when $b_i = s_i$, proving (\ref{eq:IR}) and (\ref{eq:IC}).

Now we prove the only if direction. Assume that property (\ref{eq:IC}) holds. By definition, for all $s_i,s_i'\in [k]$, we have the following inequalities
\begin{align*}
u_i(s_i, \osigs; s_i, \osigs) &\geq u_i(s_i', \osigs; s_i, \osigs)\\
u_i(s_i', \osigs; s_i', \osigs) &\geq u_i(s_i, \osigs; s_i', \osigs)
\end{align*}
Summing the inequalities and rearranging the terms, we get
\begin{align*}
(x_i(s_i, \sigs_{-i})-x_i(s_i^{\prime}, \sigs_{-i})) \cdot \bigl(v_i(s_i, \sigs_{-i}) - v_i(s_i^{\prime}, \sigs_{-i})\bigr) 
&\ge 0 &&\text{(good),}\\
(x_i(s_i, \sigs_{-i})-x_i(s_i^{\prime}, \sigs_{-i})) \cdot \bigl(c_i(s_i, \sigs_{-i}) - c_i(s_i^{\prime}, \sigs_{-i})\bigr) 
&\le 0 &&\text{(chore).}
\end{align*}
which implies the monotonicity property.
\end{proof}

Recall that the (partial) orders over signals $\sigmas$ have been constructed from the value or cost function. In the rest of the paper, we will not discuss payment functions and agents' utilities, instead,  we will focus on finding allocation rules which belong to the following polytopes.
\begin{definition}
    Define the truthful polytope $\polytope$ as the set of all monotone allocation rules:
    \begin{equation*}
        \polytope := \left\{(x_i(\sigs))_{i\in[n]}^{\sigs\in[k]^n} \;\middle|\; \begin{array}{ll}
        x_i(\sigs) \geq 0&\forall i\in[n],\forall \sigs\in[k]^n\\
        \sum_{i\in[n]} x_i(\sigs) = 1&\forall \sigs\in[k]^n\\
        x_i(s_i,\osigs) \leq x_i(s_i',\osigs)&\forall i\in[n],\forall \osigs\in[k]^{n-1},\forall (s_i, s_i')\in\sigma_i(\osigs)\\
        \end{array}\right\}.
    \end{equation*}
We further define the set of all deterministic monotone allocations:
\begin{equation*}
    \intpolytope := \left\{\allocs\in\polytope\;\middle|\;x_i(\sigs)\in \{0,1\}, \forall i \in [n],\sigs \in [k]^n \right\},
\end{equation*}
i.e., the subset of feasible integer points obtained by imposing integrality constraints on truthful polytope.
\end{definition}

Note that our truthfulness notion for randomized mechanism is defined in expectation over the randomness of the mechanism: at equilibrium, truth-telling is a best-response which maximizes an agent's expected utility. However, a very informed agent who has access to the internal randomness of the allocation rule will know whether or not they will be selected, and it may not be optimal for them to reveal their true signal. A stronger notion is that of \emph{universally truthfulness}, when a randomized mechanism is a lottery over truthful deterministic mechanisms, which cannot be manipulated, even by agents who have access to the internal randomness of the allocation rule. Formally, the set of universally truthful mechanisms is the convex hull of $\intpolytope$, and the two notions of truthfulness coincide if and only if the extreme points of $\polytope$ are integral.

%%%%%%%%%%%%%%%%%%%%%%%%%%%%%%%%%%%%%%%%%%%%%%%%%%%%%%%%%%%%%%%
%%%%%%%%%%%%%%%%%%%%%%%%%%%%%%%%%%%%%%%%%%%%%%%%%%%%%%%%%%%%%%%
\subsection{Approximation Ratio}
Our goal would be to allocate the good or chore to the most deserving agent (maximum value or minimum cost). However, this might not always be feasible, as the optimal allocation might not satisfy the monotonicity condition of \Cref{lem:truthful}, and hence cannot be implemented truthfully. We measure the efficiency of a (truthful) mechanism as the worst ratio between its performance and that of the optimal (non-truthful) solution. When allocating a \emph{good} the approximation ratio is
\begin{equation*}
\Rv(\rhos, \allocs) :=
\max_{\sigs\in[k]^n} 1/\avg{\allocs(\sigs)}{\rhos(\sigs)}
=
\max_{\sigs\in[k]^n} \frac{\max_{i\in[n]} v_i(\sigs)}{\sum_{i\in[n]} x_i(\sigs) \cdot v_i(\sigs)} \geq 1.
\tag{$\Rv$}\label{eq:Rv}
\end{equation*}
When allocating a \emph{chore} the approximation ratio is
\begin{equation*}
\Rc(\rhos, \allocs) :=
\max_{\sigs\in[k]^n} \avg{\allocs(\sigs)}{1/\rhos(\sigs)}
=
\max_{\sigs\in[k]^n} \frac{\sum_{i\in[n]} x_i(\sigs) \cdot c_i(\sigs)}{\min_{i\in[n]} c_i(\sigs)}
\geq 1.
\tag{$\Rc$}\label{eq:Rc}
\end{equation*}
Abusing notations, we might write $\Rv(\vals, \allocs)$, $\Rc(\costs, \allocs)$, and $\Rdoptv$, or even $\Rv(\allocs)$ and $\Rc(\allocs)$, when the instance is clear from the context.
We observe that these quantities are closely related, through the following lemma. The main intuition is that approximate minimization is harder than approximate maximization, which can be shown using Jensen's convexity inequality.
\begin{lemma}\label{lem:ratios}
Given an allocation $\allocs$, we have
$\Rv(\rhos,\allocs) \leq \Rc(\rhos,\allocs)$,
with equality if $\allocs$ is deterministic, in which case we just write $R(\rhos,\allocs)$. 
\end{lemma}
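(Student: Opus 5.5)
The argument is pointwise in the signal profile, followed by taking the maximum over profiles. Fix $\sigs\in[k]^n$ and write $p_i := x_i(\sigs)$ and $r_i := \rho_i(\sigs)\in(0,1]$. By \eqref{eq:proba} the vector $(p_i)_{i\in[n]}$ is a probability distribution on $[n]$. The goal is the scalar inequality
\[
\frac{1}{\sum_{i\in[n]} p_i r_i} \;\leq\; \sum_{i\in[n]} \frac{p_i}{r_i}.
\]
The left-hand side is the term of \eqref{eq:Rv} at $\sigs$, and the right-hand side is the term of \eqref{eq:Rc} at $\sigs$. Once this holds for every $\sigs$, taking $\max_{\sigs\in[k]^n}$ on both sides gives $\Rv(\rhos,\allocs)\leq\Rc(\rhos,\allocs)$, because the maximum is monotone termwise.

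To prove the scalar inequality, apply Jensen's inequality to the convex function $\phi(t)=1/t$ on $(0,\infty)$, with weights $p_i$ and points $r_i$. This gives $\phi\bigl(\sum_i p_i r_i\bigr)\leq \sum_i p_i\,\phi(r_i)$, which is exactly the claim. Alternatively, Cauchy--Schwarz yields $\bigl(\sum_i p_i\bigr)^2 = \bigl(\sum_i \sqrt{p_i r_i}\sqrt{p_i/r_i}\bigr)^2 \leq \bigl(\sum_i p_i r_i\bigr)\bigl(\sum_i p_i/r_i\bigr)$, and $\sum_i p_i = 1$. Terms with $p_i=0$ cause no trouble, since every $r_i>0$.

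For equality in the deterministic case, suppose every $x_i(\sigs)\in\{0,1\}$. Then for each $\sigs$ there is a unique $j$ with $x_j(\sigs)=1$. Both sides of the scalar inequality then equal $1/r_j$, so the per-profile terms coincide, and so do their maxima. Hence $\Rv=\Rc$, and it is legitimate to write $R(\rhos,\allocs)$.

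There is no real obstacle here. The only care needed is to check that the maximum over $\sigs$ preserves the termwise inequality and equality, which is immediate. One should also note that the reformulations in \eqref{eq:Rv} and \eqref{eq:Rc} in terms of $\rhos$ follow directly from the definition \eqref{eq:ratios} and the normalization $v_i c_i=1$.
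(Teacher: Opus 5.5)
Your proof is correct and follows the paper's argument: the paper invokes Jensen's inequality for the convex map $x\mapsto 1/x$ with weights $x_i(\sigs)$, which is your per-profile inequality, and then takes the maximum over $\sigs$. For the deterministic case, you also make explicit the equality claim the paper leaves implicit, namely that both per-profile terms equal $1/\rho_j(\sigs)$ for the selected agent $j$.
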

\begin{proof}
The inequality holds by convexity of $x\mapsto 1/x$, using Jensen's inequality.
\end{proof}

Given performance ratios $\rhos$, we define the \emph{optimal approximation ratios}, for the value, cost and deterministic settings:
\begin{align*}
    \Rv^*\paras &:= \max_{\allocs\in \polytope} \Rv(\rhos, \allocs)\tag{$\Rv^*$}\\
    \qquad
    \Rc^*\paras &:= \max_{\allocs\in \polytope} \Rc(\rhos, \allocs)\tag{$\Rc^*$}\\
    \qquad
    \Rd^*\paras &:= \max_{\allocs\in \intpolytope} R(\rhos, \allocs) \tag{$\Rd^*$}
\end{align*}
Abusing notations, we might write $\Rv^*(\vals)$ and $\Rc^*(\costs)$, or even $\Rv^*$, $\Rc^*$ and $\Rd^*$, when the instance is clear from the context.
% \end{definition}
From \Cref{lem:ratios}, we obtain the following corollary.
\begin{corollary} \label{cor:ratios}
For all parameters $\paras$, we have that
$1\leq\Rv^*\paras \leq \Rc^*\paras \leq \Rd^*\paras.$
\end{corollary}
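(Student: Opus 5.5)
The plan is to obtain the three inequalities separately, each from a pointwise comparison between the objective functions combined with the set inclusion $\intpolytope\subseteq\polytope$. Throughout I read the ``$\max$'' in the definitions of $\Rv^*$, $\Rc^*$, $\Rd^*$ as a minimum over allocation rules. This matches the text's description of these quantities as the optimal, i.e. smallest, approximation ratios, and it is the only reading under which the chain can hold. Before comparing optima, I will check that they are well defined.

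First, $\intpolytope\neq\emptyset$, hence $\polytope\neq\emptyset$. The constant rule $x_1\equiv 1$, $x_i\equiv 0$ for $i\neq 1$ satisfies (\ref{eq:proba}). It also satisfies the monotonicity constraints trivially, since each $x_i(\cdot,\osigs)$ is constant. Second, the minima are attained. The set $\polytope$ is a nonempty compact polytope, and $\Rv(\rhos,\cdot)$ and $\Rc(\rhos,\cdot)$ are maxima of finitely many continuous functions of $\allocs$. For $\Rv$ this continuity uses $\avg{\allocs(\sigs)}{\rhos(\sigs)}\geq \min_i\rho_i(\sigs)>0$. The set $\intpolytope$ is finite. (If one prefers infima, the argument below goes through unchanged.)

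The lower bound $1\leq \Rv^*$ follows pointwise. For every $\allocs\in\polytope$ and every $\sigs$, we have $\avg{\allocs(\sigs)}{\rhos(\sigs)}\leq \sum_i x_i(\sigs)=1$, because $\rho_i(\sigs)\in(0,1]$. Hence $\Rv(\rhos,\allocs)\geq 1$ for every feasible $\allocs$, and therefore also for the minimizer.

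For $\Rv^*\leq\Rc^*$, let $\allocs^\star\in\polytope$ attain $\Rc^*$. By \Cref{lem:ratios}, $\Rv^*\leq \Rv(\rhos,\allocs^\star)\leq \Rc(\rhos,\allocs^\star)=\Rc^*$.

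For $\Rc^*\leq\Rd^*$, let $\allocs^\circ\in\intpolytope$ attain $\Rd^*$. Since $\allocs^\circ$ is deterministic, \Cref{lem:ratios} gives $\Rc(\rhos,\allocs^\circ)=R(\rhos,\allocs^\circ)$. Since $\allocs^\circ\in\polytope$, we get $\Rc^*\leq \Rc(\rhos,\allocs^\circ)=\Rd^*$.

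No step is a real obstacle. The only care needed is the min/max reading of the definitions and the existence of feasible rules, which the constant allocation to agent $1$ settles.
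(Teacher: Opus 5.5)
Your proof is correct and follows the paper's route: the paper derives the chain directly from \Cref{lem:ratios} (pointwise $\Rv\le\Rc$, with equality on deterministic rules) together with $\intpolytope\subseteq\polytope$, and your checks of nonemptiness, attainment and $1\le\Rv^*$ make explicit what the paper leaves implicit. One side remark is inaccurate: the min reading is the intended one but not the only one under which the chain holds, since under the literal max reading all three quantities equal $1/\min_{i,\sigs}\rho_i(\sigs)$ (attained by a constant allocation to a single agent), so the chain would hold trivially with equality.
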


%%%%%%%%%%%%%%%%%%%%%%%%%%%%%%%%%%%%%%%%%%%%%%%%%%%%%%%%%%%%%%%
%%%%%%%%%%%%%%%%%%%%%%%%%%%%%%%%%%%%%%%%%%%%%%%%%%%%%%%%%%%%%%%
\subsection{Problems and Complexity}

\paragraph{Optimization problems.} The main problems we consider are optimization problems. More formally, an instance can be described with two parameters $\paras$, where $\sigmas$ characterizes the feasible solutions, and $\rhos$ induces the measure function.
We define the optimization problems $\PbV$, $\PbC$ and $\PbD$, which respectively ask to compute an allocation rule $\allocs$ which achieves $\Rv^*\paras$, $\Rc^*\paras$ and $\Rd^*\paras$, defined in the previous section. 

\paragraph{Query problems.}
Our optimization problems output allocation rules whose description can potentially be very large. In mechanism design, we mostly care about evaluating the allocation rule $\allocs$ at the reported signal profile $\sigs$. To explore the computational complexity of mechanisms, we introduce the query problems $\PbV_\gamma$, $\PbC_\gamma$ and $\PbD_\gamma$, which asks to answer evaluation queries about a solution $\allocs$ of measure $\leq \gamma$, for some $\gamma \geq 1$, while being consistent across queries.

\paragraph{Gap problems.}
In order to provide NP-hardness results, it is necessary to work with decision problems. Thus, we define the gap variants $(\alpha,\beta)$-\PbV, $(\alpha,\beta)$-\PbC and $(\alpha,\beta)$-\PbD, which ask whether the optimal solution has measure at most $\alpha$ or more than $\beta$, for some $1\leq \alpha \leq \beta$, under the promise that we are not in the intermediate case (the algorithm is allowed to fail or not to terminate in that case). When an $(\alpha,\beta)$-gap problem is NP-Hard, the corresponding optimization problem is hard to approximate within a factor $\leq\beta/\alpha$, as any approximate algorithm would distinguish between the $\leq\alpha$ and $>\beta$ cases. An approximation algorithm for our optimization problems might be confusing as the objective functions are approximation ratio themselves, however we reiterate the distinction between algorithms which solve the computational tasks and their outputs which are mechanisms (allocation rule and payment function).

\begin{table}[]
    \centering
    \begin{tabular}{|c|c|c|c|}
        \hline
        Problem & Output & Promise \\
        \hline\hline
        \PbV & $\allocs\in\polytope$ minimizing $\Rv(\rhos, \allocs)$ & --  \\
        \PbC & $\allocs\in\polytope$ minimizing $\Rc(\rhos, \allocs)$ & --  \\
        \PbD & $\allocs\in\intpolytope$ minimizing $R(\rhos, \allocs)$ & --  \\
        \hline
        $\PbV_\gamma$  & Oracle for $\allocs\in \polytope$ with $\Rv(\rhos, \allocs) \leq \gamma$ &  $\Rv^*\paras \leq \gamma$ \\
        $\PbC_{\gamma}$  & Oracle for $\allocs\in \polytope$ with $\Rc(\rhos, \allocs) \leq \gamma$ &  $\Rc^*\paras \leq \gamma$ \\
        $\PbD_{\gamma}$  & Oracle for $\allocs\in \intpolytope$ with $R(\rhos, \allocs) \leq \gamma$ &  $\Rd^*\paras \leq \gamma$ \\
        \hline
        $(\alpha,\beta)$-\PbV  & $\ind{\Rv^*\paras\leq \alpha}$ & $\Rv^*\paras \leq \alpha$ or $\Rv^*\paras > \beta$ \\
        $(\alpha,\beta)$-\PbC  & $\ind{\Rc^*\paras\leq \alpha}$ & $\Rc^*\paras \leq \alpha$ or $\Rc^*\paras> \beta$ \\
        $(\alpha,\beta)$-\PbD  & $\ind{\Rd^*\paras\leq \alpha}$ & $\Rd^*\paras \leq \alpha$ or $\Rd^*\paras > \beta$ \\
        \hline
    \end{tabular}
    \caption{Summary of the computational problems we consider. If the problem has a promise, an algorithm solving it is allowed to fail or not terminate on instances which do not satisfy it.}
    \label{tab:placeholder}
\end{table}

\paragraph{Time complexity.}
When computing the time complexity of an algorithm solving one of our computational problems, we assume that the input is a bit-string describing the cost or value functions. Therefore, the input has size $N = nk^nb$, where $b$ denotes the maximum number of bits used to represent a value or a cost. The dependency in $b$ is necessary as some of our algorithm will rely on solving linear programs for which all known algorithms perform a number of arithmetic operations which depend on $b$ \cite{khachiyan1979polynomial,karmarkar1984new,tardos1986strongly}.

\paragraph{Query complexity.} To provide lower bounds on the complexity of our computational problems, we consider the setting where the algorithm has oracle access to the value and cost functions, and can perform arbitrarily many arithmetic operations. We measure the query complexity as the number of queries necessary to distinguish between several instances which have different outputs. This corresponds to the complexity in the algebraic decision tree model \cite{ben1983lower}.

%%%%%%%%%%%%%%%%%%%%%%%%%%%%%%%%%%%%%%%%%%%%%%%%%%%%%%%%%%%%%%%
%%%%%%%%%%%%%%%%%%%%%%%%%%%%%%%%%%%%%%%%%%%%%%%%%%%%%%%%%%%%%%%
%%%%%%%%%%%%%%%%%%%%%%%%%%%%%%%%%%%%%%%%%%%%%%%%%%%%%%%%%%%%%%%
\section{Main Ideas of Our Techniques}
\label{sec:ideas}

In this section, we highlight the main technical contributions of our paper. For the simplicity of exposition, we will only discuss the value setting and problem \PbD.
The complete proofs of our results, including problem \PbV and \PbC, can be found in \Cref{sec:positive-results,sec:negative-results}.

A first useful trick is to reduce the optimization problem to its query variant, by running a binary search on the optimal approximation ratio $\Rd^*(\vals)$.
\begin{lemma}\label{lem:binary-search}
    Given $\gamma\geq 1$, if one can answer in $g(N)$ time all the $\PbD_\gamma$ queries, then we can solve the optimization problem $\PbD$ in $O(g(N) \log N)$ time.
\end{lemma}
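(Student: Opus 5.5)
The plan is to binary search over a finite, explicitly computable set of candidate values for $\Rd^*\paras$, using the $\PbD_\gamma$ procedure as a feasibility test. For a deterministic $\allocs\in\intpolytope$, the objective is $R(\rhos,\allocs)=\max_{\sigs} 1/\rho_{i(\sigs)}(\sigs)$, where $i(\sigs)$ is the unique agent with $x_i(\sigs)=1$. Hence $\Rd^*\paras$ always belongs to the candidate set
\[
\Gamma := \{\,1/\rho_i(\sigs) \;:\; i\in[n],\ \sigs\in[k]^n\,\},
\]
which has at most $nk^n\le N$ elements. These can be computed from the input and sorted in $O(N\log N)$ time, at least at the level of arithmetic operations on $b$-bit numbers. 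This cost is dominated by $O(g(N)\log N)$ provided $g(N)\geq N$, which holds because any algorithm must read its input.

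The feasibility test works as follows. Fix $\gamma\in\Gamma$ and run the $\PbD_\gamma$ procedure on all $k^n$ signal profiles. The procedure either returns a consistent answer at every profile, or fails or fails to terminate. To handle non-termination, I would stop the procedure after its time bound $g(N)$ and treat a timeout as failure. The lemma statement says ``all the queries in $g(N)$ time'', so one run of this check costs $O(g(N))$. Next, I would verify the returned allocation directly. First check that it lies in $\intpolytope$: integrality, the sum-to-one constraint, and the monotonicity constraints over $\sigma_i(\osigs)$. Then check that $R(\rhos,\allocs)\le\gamma$. All of these checks are linear in the input size.

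If $\gamma\ge\Rd^*$, the promise holds, so the procedure succeeds and the checks pass. If the checks pass, then $\Rd^*\le\gamma$ by definition. So the test accepts exactly when $\gamma\ge \Rd^*$, and the test is monotone in $\gamma$. A binary search over the sorted set $\Gamma$ therefore finds the smallest accepted $\gamma$, namely $\Rd^*$, after $O(\log|\Gamma|)=O(\log N)$ tests. The allocation produced at that value is an optimal solution. The total running time is $O(g(N)\log N + N\log N)=O(g(N)\log N)$.

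The main obstacle is the behaviour of the procedure when the promise fails, since it may then loop forever or output garbage. The explicit time cutoff handles non-termination, and the full verification of the output handles garbage answers. A minor point is whether the hypothesis lets $g$ depend on $\gamma$. I will read it as a uniform bound over the candidates, as the statement implicitly intends.
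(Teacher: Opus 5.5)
Your proposal is correct and follows the paper's argument. Like the paper, you binary search over the at most $nk^n$ candidate values $1/\rho_i(\sigs)$, and at each candidate $\gamma$ you decide whether $\Rd^*\le\gamma$ by answering all $\PbD_\gamma$ queries and checking the resulting allocation. The paper's short proof leaves implicit the extra care you take: timing out non-terminating runs, verifying membership in $\intpolytope$ and the ratio, reading $g$ as uniform over $\gamma$, and absorbing the $O(N\log N)$ sorting and verification overhead via $g(N)=\Omega(N)$, which the paper also assumes tacitly.
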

\begin{proof}
    To compute $\Rdoptv$ and the corresponding deterministic mechanism, we will run a binary search: for all $\gamma\geq 1$ we can decide whether $\Rdoptv \leq \gamma$ by computing the answer to all $\PbD_\gamma$ queries, and checking if the resulting the solution $\allocs$ is correct. Because $\Rdoptv$ must be equal to $1/\rho_i(\sigs)$ for some $i\in[n]$ and $\sigs\in[k]^n$, there are at most $nk^n$ possible values, and our binary search will go over at most $O(\log N)$ values of $\gamma$.
\end{proof}

%%%%%%%%%%%%%%%%%%%%%%%%%%%%%%%%%%%%%%%%%%%%%%%%%%%%%%%%%%%%%%%
%%%%%%%%%%%%%%%%%%%%%%%%%%%%%%%%%%%%%%%%%%%%%%%%%%%%%%%%%%%%%%%
\subsection{Reduction to Bipartite Matching}

First consider the case where agents have binary signals, which was studied in \cite{EdenFFG18,AmerT21} for restricted domains of value functions. We propose a new approach, reducing the problem $\PbD _\gamma$ to finding a matching in a bipartite graph.
More details about this construction can be found in \Cref{sssec:hypergraph,sssec:two-signals}.

To build our graph, we first need to define at each signal profile $\sigs$ the set $A_\gamma(\sigs)$ of acceptable agents (i.e., agents who can be selected given the ratio $\gamma$) and the set $C(\sigs)$ of constrained agents  (i.e., agents whose selection is constrained by the monotonicity condition).
\begin{align*}
\forall \sigs\in[2]^n,\qquad
C(\sigs) &:= \{i\in[n]\;|\;\exists s_i'\in[2], (s_i, s_i')\in \sigma_i(\osigs)\},\\
A_\gamma(\sigs) &:= \{i\in[n]\;|\;\rho_i(\sigs) \geq 1/\gamma\}.
\end{align*}
The signal profiles are divided into two classes. A signal profile $\sigs$ is called \emph{must-match} if $A_{\gamma}(\sigs)\subseteq C(\sigs)$, that is, if all acceptable agent are constrained; otherwise, it is \emph{may-match}. The set of must-match profiles is denoted by $M_{\gamma}$. We construct the graph $G_\gamma = (V, E)$ as follows:
\begin{itemize}
    \item each vertex represents a signal profile, that is, $V = [2]^n$,
    \item for all $i\in[n]$, $\sigs\in[2]^{n-1}$, and $s_i'\neq s_i$, add to an edge from $\sigs$ to $(s_i',\osigs)$ if:
    \begin{itemize}
        \item $(s_i,s_i')\in \sigma_i(\osigs)$, and
        \item $i\in A_\gamma(\sigs)\cap A_\gamma(s_i',\osigs)$, and
        \item $\sigs\in M_\gamma$, 
    \end{itemize}
\end{itemize}
First, $G_\gamma$ is bipartite as one can partition vertices $\sigs$ into two sides based on the parity of $\sum_{i\in [n]} s_i$, and no edge can exist between two vertices on the same side. %\textcolor{blue}{Reviewer A said sth, but I don't think it's necessary.}
Second, notice that a vertex always connects a must-match vertex $\sigs$ to a may-match vertex $(s_i',\osigs)$, as $i\in A_\gamma(s_i',\osigs)\setminus C(s_i',\osigs)$.

\begin{figure}[h!]
    \centering
    \begin{tikzpicture}[xscale=4.5, yscale=2,>=stealth]
        \small
        \node[draw, fill=black!20!white, rounded corners] (s000) at (0,3) {$\rho$(1,1,1) = (\textbf{0.8}, 1.0, \st{0.2})};
        \node[draw, rounded corners] (s100) at (-1,2) 
        {$\rho$(2,1,1) = (\textbf{0.7}, 1.0, 0.9)};
        \node[draw, fill=black!20!white, rounded corners] (s010) at (0,2) 
        {$\rho$(1,2,1) = (\textbf{0.5}, \st{0.4}, \textbf{1.0})};
        \node[draw, fill=black!20!white, rounded corners] (s001) at (1,2)
        {$\rho$(1,1,2) = (1.0, \textbf{0.9}, \st{0.3})};
        \node[draw, rounded corners] (s110) at (-1,1)
        {$\rho$(2,2,1) = (\textbf{0.7}, 1.0, 1.0)};
        \node[draw, fill=black!20!white, rounded corners] (s101) at (0,1) 
        {$\rho$(2,1,2) = (\st{0.2}, \textbf{1.0}, \st{0.1})};
        \node[draw, rounded corners] (s011) at (1,1)
        {$\rho$(1,2,2) = (0.7, \textbf{0.6}, \textbf{1.0})};
        \node[draw, rounded corners] (s111) at (0,0)
        {$\rho$(2,2,2) = (1.0, \textbf{0.6}, \st{0.2})};
        \draw[->] (s000.-90) -- (s100.90);
        \draw[->,dotted] (s000.-90) -- (s010.90);
        \draw[->,dotted] (s000.-90) -- (s001.90);
        \draw[->,dashed] (s100.-90) -- (s110.90);
        \draw[->,dotted] (s100.-90) -- (s101.90);
        \draw[->] (s010.-90) -- (s110.90);
        \draw[->] (s010.-90) -- (s011.90);
        \draw[->] (s001.-90) -- (s011.90);
        \draw[->,dotted] (s001.-90) -- (s101.90);
        \draw[->] (s101.-90) -- (s111.90);
        \draw[->,dotted] (s110.-90) -- (s111.90);
        \draw[->,dashed] (s011.-90) -- (s111.90);
    \end{tikzpicture}
    \caption{Bipartite graph $G_\gamma$ with $\gamma = 2$.
    First, at each vertex $\sigs$ we strike-out non acceptable agents $i\notin A_\gamma(\sigs)$, for which $\rho_i(\sigs) < 1/\gamma$. 
    Second, we draw the edges of the cube, oriented using $\sigmas$ (in the special case of increasing value functions), dotted if the corresponding agent is not acceptable at one of the two endpoints.
    Third, we color gray the must-match vertices $\sigs$ with $A_\gamma(\sigs)\subseteq C(\sigs)$, for which all acceptable agents correspond to outgoing edges.
    Finally we dash edges which are not incident to a must-match vertex. The final graph is the set of plain edges.}
\end{figure}
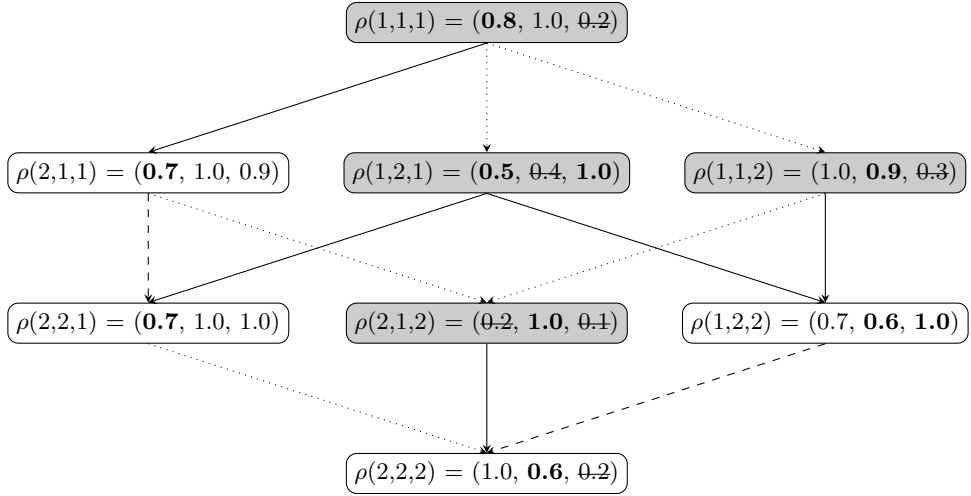

Finally, we show that there exists a $\gamma$-approximate deterministic truthful mechanism if and only if there exists a matching in $G_\gamma$ with $|M_{\gamma}|$ edges, which can be found by computing a maximum cardinality matching.
\begin{itemize}
    \item Assume that there exists a matching in $G_\gamma$ with $|M_{\gamma}|$ edges. If an edge in the matching connects $\sigs$ to $(s_i',\osigs)$, we select bidder $i$ in both profiles. As every edge must connect one must-match vertices, and there are $|M_{\gamma}|$ edges in the matching, the remaining vertices (if any) must be may-match. We select an arbitrary bidder of $A_{\gamma}(\sigs)\setminus C(\sigs)$ in the remaining vertices $\sigs$. It is easy to check that such an allocation satisfies monotonicity and approximation constraints.
    \item Suppose now that there is a $\gamma$-approximate deterministic truthful mechanism. If $\sigs$ is must-match, and $i$ is selected at $\sigs$, then we match $\sigs$ to $(s_i', \osigs)$ with $s_i'\neq s_i$. Note that the edge exists as truthfulness implies that $i$ is also selected at $(s_i', \osigs)$. Moreover, observe that each edge connects exactly one must-match vertex, thus we obtain a matching with $|M_\gamma|$ edges.
\end{itemize}

%%%%%%%%%%%%%%%%%%%%%%%%%%%%%%%%%%%%%%%%%%%%%%%%%%%%%%%%%%%%%%%
%%%%%%%%%%%%%%%%%%%%%%%%%%%%%%%%%%%%%%%%%%%%%%%%%%%%%%%%%%%%%%%
\subsection{Reduction to Boolean Satisfiability}
Next, we consider the special case with two agents, which was studied in \cite{EdenFFG18} for restricted domains of valuation functions. We propose a new approach, reducing $\PbD_\gamma$ to the satisfiability of a boolean formula. More details about this construction can be found in \Cref{sssec:SAT,sssec:two-agents}.

When $n=2$, observe that a deterministic allocation rule is completely characterized by the variables $x_1(\sigs)$ with $\sigs\in[k]^2$, as $x_2(\sigs) = 1-x_1(\sigs)$. Because the variable are in $\{0,1\}$, we can encode the monotonicity and ratio constraints with the following boolean formula.
\begin{align*}
    \forall s_2\in[k], \forall (s_1, s_1')\in\sigma_1(s_2),\qquad
    &
    x_1(s_1,s_2) \Rightarrow x_1(s_1',s_2),&(\text{i.e., }x_1(s_1, s_2) \leq x_1(s_1', s_2))
    \\
    \forall s_1\in[k], \forall (s_2, s_2')\in\sigma_2(s_1),\qquad
    &
    x_1(s_1,s_2')\Rightarrow x_1(s_1,s_2) ,&(\text{i.e., }x_2(s_1, s_2) \leq x_2(s_1, s_2'))
    \\
    \forall \sigs\in[k]^2 \text{ s.t. }\rho_1(\sigs) < 1/\gamma,\qquad
    &
    \neg x_1(\sigs),
    &
    (\text{i.e., }x_1(s_1, s_2) = 0)
    \\
    \forall \sigs\in[k]^2 \text{ s.t. }\rho_2(\sigs) < 1/\gamma,\qquad
    &
    x_1(\sigs).
    &
    (\text{i.e., }x_2(s_1, s_2) = 0)
\end{align*}
The formula could be rewritten as a conjunction of clauses of size $2$, which corresponds to the 2SAT problem. However, we will keep this formulation, which corresponds to the graph used to solve 2SAT using the strongly connected components \cite{AspvallPT79}. More precisely, we build a directed graph $G(\sigmas)$ over the set of signal profiles $[k]^2$, adding as edges all the implications from the boolean formula, as illustrated in \Cref{fig:2SAT}.
\begin{figure}[h!]
    \centering
    \begin{tikzpicture}[scale=2,>=stealth]
        \small
        \begin{scope}[black!20!white]
        \fill plot[smooth cycle] coordinates {(0.6,1.6)(3.4,1.6)(3.4,3.4)(1.6,3.4)(1.6,2.4)(0.6,2.4)};
        \fill (1,3) circle (.4cm);
        \fill (1,1) circle (.4cm);
        \fill (2,1) circle (.4cm);
        \fill (3,1) circle (.4cm);
        \end{scope}
        \foreach \i in {1,2,3}
        \foreach \j in {1,2,3}
        \node[draw,circle] (\i\j) at (\i,\j) {(\i,\j)};
        \begin{scope}[->]
        \draw (12) to [bend right=35] (32);
        \draw (32) -- (33);
        \draw (33) -- (23);
        \draw (23) -- (22);
        \draw[very thick,red] (22) -- (12);
        \draw (22) -- (32);
        \draw (31) to [bend right=35] (33);
        \draw[very thick,red] (21) -- (22);
        \draw (23) -- (13);
        \draw (33) to [bend right=35] (13);
        \draw (11) to [bend left=35] (13);
        \draw (11) -- (12);
        \draw[very thick,red] (12) -- (13);
        \end{scope}
        \normalsize
        \node at (5.5,2.5) {\begin{minipage}{5cm}
        Assume that $\gamma = 2$ and:
        \begin{itemize}
            \item $\rho_2(2,1) = 0.4 \leq 1/\gamma$, \\thus $x_1(2,1) = 1$;
            \item $\rho_1(1,3) = 0.3 \leq 1/\gamma$, \\thus $x_1(1,3) = 0$.
        \end{itemize}
        More precisely, the red chain exists for all $\gamma < 2.5 = 1/0.4$.
        \end{minipage}};
    \end{tikzpicture}
    \caption{Example of graph $G(\sigmas)$, obtained from the boolean formula with $n=2$ agents. For each $i\in[n]$ and $\osigs\in[k]$ the monotonicity constraints are given by the (partial) strict order $\sigma_i(\osigs)$. There exists a blocking chain of implications, represented in red. To compute efficiently the existence of a blocking chain, one could remove redundant edges that can be deduced using transitivity, and compute the strongly connected components, represented in gray.}
    \label{fig:2SAT}
\end{figure}

One can see that there is no satisfying assignment if and only if one can find a blocking chain of implications $1 = x_1(\sigs)\Rightarrow \dots \Rightarrow x_1(\sigs') = 0$, where the beginning and the end of the chain are constrained because of $\rhos$. This can be checked in polynomial time using depth-first search, and can be improved to $O(k^2)$ time if we are careful not to add too many edges while constructing the graph.

%%%%%%%%%%%%%%%%%%%%%%%%%%%%%%%%%%%%%%%%%%%%%%%%%%%%%%%%%%%%%%%
%%%%%%%%%%%%%%%%%%%%%%%%%%%%%%%%%%%%%%%%%%%%%%%%%%%%%%%%%%%%%%%
\subsection{Conflicting Pairs}

In the previous section, we discussed how to solve $\PbD_\gamma$ when $n=2$ using boolean satisfiability, which can be reformulated as a graph reachability problem. We now give a characterization of the optimal deterministic ratio $\Rd^*$, which can be computed directly without having to use the binary search reduction to $\PbD_\gamma$. More details about this construction can be found in \Cref{sssec:two-agents}.

A first step is to observe the previous reduction, while slowly decreasing $\gamma$ from $+\infty$ to $1$. At the beginning, there are no $\sigs\in[k]^2$ and $i\in[2]$ such that $\rho_i(\sigs) \leq 1/\gamma$, and thus there can be no blocking chain of implication. Then, we start to fix some variable $x_1(\sigs)$. Until we fix one last variable which creates a blocking chain of implications and blocks the existence of a satisfying assignment. Using this remark, we can exactly characterize the optimal deterministic ratio:
$$
\Rd^*\paras = \max_{\sigs\rightarrow\sigs'} \min\left(\frac{1}{\rho_2(\sigs)}, \frac{1}{\rho_1(\sigs')}\right),
$$
where $\sigs\rightarrow\sigs'$ denotes the reachability in the graph $G(\sigmas)$, and we call $(\sigs,\sigs')$ a \emph{conflicting pair}. Importantly, after sorting signals in $O(k^2\log k)$ we can compute this value in $O(k^2)$ time by using dynamic programming on the directed acyclic graph (DAG) of the strongly connected components of $G(\sigmas)$, together with a satisfying assignment. In \Cref{sssec:two-agents} we also give a characterization for the ratios $\Rv^*$ and $\Rc^*$ solving the problems \PbV, \PbC and \PbD in $O(k^2 \log k)$.

%%%%%%%%%%%%%%%%%%%%%%%%%%%%%%%%%%%%%%%%%%%%%%%%%%%%%%%%%%%%%%%
%%%%%%%%%%%%%%%%%%%%%%%%%%%%%%%%%%%%%%%%%%%%%%%%%%%%%%%%%%%%%%%
\subsection{Query Complexity}
\label{ssec:ideas-query}

In the previous sections, we gave (the intuition for) polynomial time algorithms solving \PbD in special cases. However, in mechanism design, one might only be interested in computing $\allocs(\sigs)$ at a specific signal profile $\sigs\in [k]^n$. Recall that the query problem $\PbD_\gamma$ exactly formalizes this possibility. Building on the previous two sections, we now prove \Cref{thm:query} in the special case where $n=2$, showing that one has to read $\Omega(k^2)$ values to compute the outcome at a specific signal profile. More details about the general construction can be found in \Cref{ssec:query}.

When $n=2$, we have seen that that the monotonicity condition of \Cref{lem:truthful} can be formulated using the directed graph $G(\sigmas)$. For simplicity, we will build an instance where value functions are increasing, which gives a graph $G(\sigmas)$ where $\sigs\rightarrow \sigs'$ if and only if $s_1 \geq s_1'$ and $s_2 \leq s_2'$.
We set $\varepsilon > 0$, and we fix a signal profile $\sigs$ where $s_1 = s_2 = 1+\lfloor k/2\rfloor$. Then, we draw a random signal profile $\sigs'\in[k]^2$:
\begin{itemize}
    \item if we have $\sigs'\rightarrow\sigs$, we set $\rho_2(\sigs') = \varepsilon$ and all other performance ratios at 1,
    \item if we have $\sigs\rightarrow\sigs'$, we set $\rho_1(\sigs') = \varepsilon$ and all other performance ratios at 1,
    \item otherwise, we set all performance ratios at 1.
\end{itemize}
Observe that there always exists a very simple allocation rule, which selects the same agent at all signal profiles, and achieves a ratio of $1$. Unfortunately, any allocation rule $\allocs$ solving $\PbD_1$ must have $x_1(\sigs) = 1$ if  $\sigs'\rightarrow\sigs$, and $x_2(\sigs) = 1$ if $\sigs\rightarrow\sigs'$. Therefore, in the worst case one has to query the values at $\Omega(k^2)$ signal profiles to decide in which situation we are. This construction can be formalized using the algebraic decision tree model \cite{ben1983lower}. As a side note, even a randomized exploration strategy to find $\sigs'$ would not improve the query complexity, as the adversary is not adaptive.

%%%%%%%%%%%%%%%%%%%%%%%%%%%%%%%%%%%%%%%%%%%%%%%%%%%%%%%%%%%%%%%
%%%%%%%%%%%%%%%%%%%%%%%%%%%%%%%%%%%%%%%%%%%%%%%%%%%%%%%%%%%%%%%
\subsection{NP-Hardness}

To conclude this section about the important ideas of our techniques, we give some intuition on the proof of \Cref{thm:nphard}, which states that $(1,\beta)$-\PbD is NP-Hard when $n=4$. More details about this construction can be found in \Cref{ssec:nphard}.

For simplicity, we will build an instance where value functions are increasing, which simplifies the monotonicity condition of \Cref{lem:truthful}:
$$
\forall i\in[n], \forall \osigs\in[k]^{n-1},\forall 1 \leq s_i \leq s_i' \leq k,\qquad x_i(s_i, \osigs) \leq x_i(s_i',\osigs).
$$

We prove the NP-hardness result through reduction from 1-in-3-SAT problem, which asks a boolean assignment to variables, with constraints requiring exactly one of three literals (variable or their negation) to be true. In the following, we illustrate the main ideas of the reduction through \Cref{fig:hardness-var,fig:hardness-clause,fig:hardness-connect}.
For convenience, we set a constant $\varepsilon \in(0,1/\beta)$.
To build some intuition, we set
$$\rhos(1,2,1,*) = (1,\varepsilon,1,\varepsilon),
\qquad
\rhos(1,1,2,*) = (1,1,\varepsilon,\varepsilon),
\qquad
\rhos(k,1,1,*) = (\varepsilon,1,1,\varepsilon).$$
First, because of the performance ratios, if $\allocs\in\intpolytope$ such that $\Rd(\allocs, \rhos) = 1$ then we have
$$
x_1(k,1,1,*) =x_4(k,1,1,*) = x_2(1,2,1,*) = x_4(1,2,1,*) = x_3(1,1,2,*) = x_4(1,1,2,*) = 0.
$$
If we decide to set $x_1(1,1,2,*) = 1$, then
\begin{align*}
x_1(1,1,2,*) = 1
&\quad\Rightarrow\quad
x_1(k,1,2,*) = 1
& (\text{by monotonicity})\\
&\quad\Rightarrow\quad
x_3(k,1,2,*) = 0
& (\text{sum of proba is 1}) \\
&\quad\Rightarrow\quad
x_3(k,1,1,*) = 0
& (\text{by monotonicity}) \\
&\quad\Rightarrow\quad
x_2(k,1,1,*) = 1
& (\text{sum of proba is 1}) \\
&\quad\Rightarrow\quad \dots\\
&\quad\Rightarrow\quad
x_1(1,1,2,*) = 1
\end{align*}
Thus, all these are equivalent, and if they hold we say that our gadget is true. Conversely, we say that our gadget is false if $x_1(1,2,1,*) = 1$ holds. This ``variable'' gadget, illustrated in \Cref{fig:hardness-var} is just one building block of our reduction from the 1-in-3-SAT problem.

In \Cref{fig:hardness-clause}, we combine three ``variable'' gadgets by having them intersect in one signal profile, requiring exactly one of the three literals to be true, which constitutes a ``1-in-3 clause'' gadget. Finally, in \Cref{fig:hardness-connect} we show how to connect ``clause'' and ``variable'' gadgets, requiring the values of variables and literals to be consistent.

A more detailed description can be found in \Cref{ssec:nphard}.

\begin{figure}[p!]%[htbp]
    \centering
    \begin{tikzpicture}[scale=.9,thick,z=-.66cm]
    \begin{scope}[blue!50!white]
    \begin{scope}[line width=3pt]
    \draw (0,0,1) -- (6,0,1);
    \draw[densely dashed] (6,0,1) -- (6,0,0);
    \draw (6,0,0) -- (6,1,0);
    \draw[densely dashed] (6,1,0) -- (0,1,0);
    \draw (0,1,0) -- (0,1,1);
    \draw[densely dashed] (0,1,1) -- (0,0,1);
    \end{scope}
    \fill (0,0,1) circle (3pt);
    \node[anchor=east] at (-.1,0,1) {$\rhos(1,1,2,*)=(1,1,\varepsilon,\varepsilon)$};
    \node[anchor=east] at (-.1,1.2,0) {$\rhos(1,2,1,*) = (1,\varepsilon,1,\varepsilon)$};
    \node[anchor=west] at (6,-.5,0) {$\rhos(k,1,1,*)=(\varepsilon,1,1,\varepsilon)$};
    \foreach \x/\y/\z in {0/0/1,6/0/0,0/1/0,0/0/1}
    \fill (\x,\y,\z) circle (3pt);
    \node[anchor=south] at (3,1,0) {$\neg a$};
    \node[anchor=north] at (3,0,1) {$a$};
    \end{scope}
    \draw[->] (0,0,0) -- (7,0,0);
    \draw[->] (0,0,0) -- (0,2,0);
    \draw[->] (0,0,0) -- (0,0,2);
    \node at (7.5,0,0) {$s_1$};
    \node at (0,2.5,0) {$s_2$};
    \node at (0,0,2.5) {$s_3$};
    \end{tikzpicture}
    \caption{Gadget for a variable $a$ located at $(s_2,s_3) = (1,1)$. We have $n=4$ agents, but we do not represent the fourth dimension (we set $\rhos$ to be constant over $s_4$). At three vertices, we set some performance ratio at $\varepsilon$ to prevent the items being allocated to, and we leave the other ratios at $1$.}
    \label{fig:hardness-var}
\end{figure}

\begin{figure}[p!]%[htbp]
    \centering
    \begin{tikzpicture}[scale=.9,thick,z=-.66cm]
    \draw[line width=3pt,blue!50!white,densely dashed] (0,0,0) -- (0,0,1) -- (0,-2,1) -- (7,-2,1) -- (7,-2,-3) -- (7,0,-3) -- (0,0,-3) -- (0,0,0);
    \draw[line width=3pt,blue!50!white] (0,0,-3) -- (0,0,1);
    \draw[line width=3pt,blue!50!white] (0,-2,1) -- (7,-2,1);
    \draw[line width=3pt,blue!50!white] (7,-2,-3) -- (7,0,-3);
    \draw[line width=5pt,white] (0,0,0) -- (0,1,0) -- (0,1,-2) -- (7,1,-2) -- (7,-1,-2) -- (7,-1,0) -- (0,-1,0) -- (0,0,0);
    \draw[line width=3pt,blue!50!white,densely dashed] (0,0,0) -- (0,1,0) -- (0,1,-2) -- (7,1,-2) -- (7,-1,-2) -- (7,-1,0) -- (0,-1,0) -- (0,0,0);
    \draw[line width=3pt,blue!50!white] (0,-1,0) -- (0,1,0);
    \draw[line width=3pt,blue!50!white] (0,1,-2) -- (7,1,-2);
    \draw[line width=3pt,blue!50!white] (7,-1,-2) -- (7,-1,0);
    \draw[line width=5pt,white] (0,0,0) -- (7,0,0) -- (7,0,-1) -- (7,2,-1) -- (-1,2,-1) -- (-1,2,0) -- (-1,0,0) -- (0,0,0);
    \draw[line width=3pt,blue!50!white,densely dashed] (0,0,0) -- (7,0,0) -- (7,0,-1) -- (7,2,-1) -- (-1,2,-1) -- (-1,2,0) -- (-1,0,0) -- (0,0,0);
    \draw[line width=3pt,blue!50!white] (-1,0,0) -- (7,0,0);
    \draw[line width=3pt,blue!50!white] (7,0,-1) -- (7,2,-1);
    \draw[line width=3pt,blue!50!white] (-1,2,-1) -- (-1,2,0);
    \fill[blue!50!white] (0,0,0) circle (5pt);
    \foreach \x/\y/\z in {0/0/-3,0/1/-2,-1/2/-1,-1/0/0,0/-1/0,0/-2/1,7/0/-1,7/-1/-2,7/-2/-3}
    \fill[blue!50!white] (\x,\y,\z) circle (3pt);
    \begin{scope}[blue!50!white]
    \end{scope}
    \draw[->] (0,0,0) -- (9.5,0,0);
    \draw[->] (0,0,0) -- (0,3.5,0);
    \draw[->] (0,0,0) -- (0,0,2);
    \node at (10,0,0) {$s_1$};
    \node at (0,4,0) {$s_2$};
    \node at (0,0,2.5) {$s_3$};
    \node[anchor=south east] at (0,0,0) {$\sigs$};
    \end{tikzpicture}
    \caption{Gadget for a clause $\ell_1\vee\ell_2\vee\ell_3$. We create one cycle per literal, which all intersect in one signal profile $\sigs$. We set $\rhos(\sigs) = (1,1,1,\varepsilon)$, and the choice of winner at $\sigs$ decides which of the three literal is true.}
    \label{fig:hardness-clause}
\end{figure}

\begin{figure}[p!]%[htbp]
    \centering
    \begin{tikzpicture}[scale=.9,thick,z=-.66cm]
    \begin{scope}[blue!50!white]
    \begin{scope}[line width=3pt]
    \draw (0,0,1) -- (8,0,1);
    \draw[densely dashed] (8,1,0) -- (0,1,0);
    \draw[densely dashed] (3,0,0) -- (3,0,1);
    \draw (3,0,0) -- (3,1,0);
    \draw (2,1,0) -- (2,1,1);
    \draw[densely dashed] (2,1,1) -- (2,0,1);
    \end{scope}
    \foreach \x/\y/\z in {3/0/0,2/1/0,2/0/1}
    \fill (\x,\y,\z) circle (3pt);
    \node[anchor=west] at (3.1,.25,0) {$\rhos(h+1,s_2,s_3',*)=(\varepsilon,1,1,\varepsilon)$};
    \node[anchor=north] at (3,0,1) {$\rhos(h,s_2,s_3,*)=(1,1,\varepsilon,\varepsilon)$};
    \node[anchor=south] at (3,1,0) {$\rhos(h,s_2',s_3',*)=(1,\varepsilon,1,\varepsilon)$};
    \end{scope}
    \draw[->] (0,0,0) -- (9,0,0);
    \draw[->] (0,0,0) -- (0,2,0);
    \draw[->] (0,0,0) -- (0,0,2);
    \node at (9.5,0,0) {$s_1$};
    \node at (0,2.5,0) {$s_2$};
    \node at (0,0,2.5) {$s_3$};
    \end{tikzpicture}
    \caption{Gadget for a XOR-connector at height $h$. It connects two horizontal lines $(s_2, s_3)$ and $(s_2', s_3')$ such that $s_2 < s_2'$ and $s_3 > s_3'$, and such that no other horizontal line shares a coordinate.}
    \label{fig:hardness-connect}
\end{figure}

%%%%%%%%%%%%%%%%%%%%%%%%%%%%%%%%%%%%%%%%%%%%%%%%%%%%%%%%%%%%%%%
%%%%%%%%%%%%%%%%%%%%%%%%%%%%%%%%%%%%%%%%%%%%%%%%%%%%%%%%%%%%%%%
%%%%%%%%%%%%%%%%%%%%%%%%%%%%%%%%%%%%%%%%%%%%%%%%%%%%%%%%%%%%%%%
\section{Algorithms and Upper Bounds}
\label{sec:positive-results}

We now present the proofs of our positive results: polynomial time algorithms for \PbV, \PbC, and special cases of \PbD. This section is organized in two parts. First, in \Cref{subsec:general-formulation} we give formulations of the general settings using linear and combinatorial optimization problems, some formulations being tractable and some being NP-Hard. Then, in \Cref{ssec:special-cases}, we explore the special cases where our NP-Hard problems become solvable efficiently. 

%%%%%%%%%%%%%%%%%%%%%%%%%%%%%%%%%%%%%%%%%%%%%%%%%%%%%%%%%%%%%%%
%%%%%%%%%%%%%%%%%%%%%%%%%%%%%%%%%%%%%%%%%%%%%%%%%%%%%%%%%%%%%%%
\subsection{General Formulations}
\label{subsec:general-formulation}

In this section, we give general formulation of our optimization problems: \PbV and \PbC can be expressed as solutions of linear programs, while \PbD can be expressed either as a solution of an integer program, as a boolean satisfiability problem, or as a matching problem in a hypergraph.

%%%%%%%%%%%%%%%%%%%%%%%%%%%%%%%%%%%%%%%%%%%%%%%%%%%%%%%%%%%%%%%
\subsubsection{Linear Programming}
\label{sssec:LP}

Given parameters $\paras$, solving \PbV and \PbC consists in minimizing $\Rv(\rhos, \allocs)$ and $\Rc(\rhos, \allocs)$ over the truthful polytope $\allocs\in\polytope$. Using \Cref{eq:Rc}, the ratio $\Rc(\rhos, \allocs)$ is the maximum of linear functions (in $\allocs$), and thus is convex. The situation with $\Rv(\rhos, \allocs)$ is slightly different, but one can show that $\allocs \mapsto 1/\avg{\allocs(\sigs)}{\rhos(\sigs)}$ is a convex function for all $\sigs$, and thus $\Rv(\rhos, \allocs)$ is also convex using \Cref{eq:Rv}. As minimizing convex functions on convex polytope is tractable, this provides an intuition on why \PbV and \PbC are computable in polynomial time. Importantly, we can refine our formulation, and express these two problems as solutions of linear programs, which yields a proof of \Cref{thm:lp}.

\thmlp*
\begin{proof}
Using a folklore construction, we can express the cost and value ratios as linear objectives:
\begin{align*}
        \text{minimize} \quad& \alpha&(\Rc^*\paras)\\
        \text{such that} \quad& \alpha \geq \avg{\allocs(s)}{1/\rhos(s)}\qquad\forall \sigs\in[k]^n
        \\& \allocs \in \polytope,
\end{align*}
and 
\begin{align*}
    \text{maximize} \quad& \beta&(1/\Rv^*(\paras))\\
    \text{such that} \quad& \beta \leq \avg{\allocs(s)}{\rhos(s)}\qquad\forall \sigs\in[k]^n\\
    & \allocs \in \polytope.
\end{align*}
Both linear programs can be optimized in polynomial time, using the ellipsoid method \cite{khachiyan1979polynomial} or the interior points method \cite{,karmarkar1984new}, and produce mechanisms which realize the optimal approximation ratios.
\end{proof}
Importantly, the complexity of solving linear programs is ``weakly'' polynomial in the sense that it scales polynomial in the number $b$ of bits used to express coefficients. Strongly polynomial time algorithm exists for special cases of linear programming \cite{tardos1986strongly}, but no strongly polynomial time algorithm is known for linear programs such as ours where the constraint matrix contains large coefficients.

%%%%%%%%%%%%%%%%%%%%%%%%%%%%%%%%%%%%%%%%%%%%%%%%%%%%%%%%%%%%%%%
\subsubsection{Integer Programming}
\label{sssec:ILP}

In the previous section, we saw how to compute the optimal value and cost ratios with linear programming. However, using the same approach to solve \PbD would require solving an integer program which is hard in general. We next show that it is tractable when the truthful polytope $\polytope$ has integral extreme points.

\begin{proposition}
\label{propo:polytime_deter}
    If $\polytope$ has integral vertices, then one can solve \PbD in polynomial.
\end{proposition}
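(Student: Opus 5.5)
Since $R(\rhos,\allocs)$ takes only finitely many candidate values, I will reduce \PbD to a sequence of linear feasibility problems over $\polytope$. By the binary-search argument of \Cref{lem:binary-search}, the optimal deterministic ratio $\Rd^*\paras$ equals $1/\rho_i(\sigs)$ for some $i\in[n]$ and $\sigs\in[k]^n$. So it suffices, for a fixed threshold $\gamma$, to decide in polynomial time whether some $\allocs\in\intpolytope$ has $R(\rhos,\allocs)\le\gamma$, and to output one when it exists. Binary search over the at most $nk^n$ candidate values then adds only an $O(\log N)$ factor.

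For fixed $\gamma$, a deterministic rule has ratio at most $\gamma$ exactly when, at every profile $\sigs$, the selected agent $i$ satisfies $\rho_i(\sigs)\ge 1/\gamma$. I will encode this by adding the equalities $x_i(\sigs)=0$ for all pairs with $\rho_i(\sigs)<1/\gamma$. The resulting set $\mathcal T_\gamma$ is the intersection of $\polytope$ with some coordinate hyperplanes $\{x_i(\sigs)=0\}$. The key step is to show that $\mathcal T_\gamma$ is still integral: since $\polytope\subseteq[0,1]^{nk^n}$ lies in the half-space $x_i(\sigs)\ge 0$, each such hyperplane is a supporting hyperplane. Hence $\mathcal T_\gamma$ is a face of $\polytope$, and every vertex of a face is a vertex of the original polytope, so all vertices of $\mathcal T_\gamma$ are integral.

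The decision procedure then solves the linear feasibility problem over $\mathcal T_\gamma$ (any objective, e.g.\ $0$) with an LP algorithm that returns a basic optimal solution (ellipsoid plus rounding to a vertex, or interior point followed by crossover), as in \Cref{thm:lp}. If $\mathcal T_\gamma$ is empty, no deterministic rule achieves $\gamma$, since $\intpolytope\cap\mathcal T_\gamma\subseteq\mathcal T_\gamma$. Otherwise the returned vertex is integral, so it lies in $\intpolytope$ and has ratio at most $\gamma$. The constraint matrix has polynomially many rows in $N$ and coefficients in $\{0,\pm1\}$, so everything runs in $N^{O(1)}$ time.

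The main technical point is obtaining an integral vertex rather than an arbitrary optimal point, since interior-point methods may return points in the relative interior of the optimal face. I would handle this with the standard fact that an LP in polynomial size can be solved to a basic optimal solution in polynomial time. The face argument is what guarantees this basic solution is integral.
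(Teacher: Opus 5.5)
Your proof is correct and is essentially the paper's argument: a binary search over the at most $nk^n$ candidate ratios, and for each $\gamma$ a linear program over $\polytope$ whose integrality yields a deterministic $\gamma$-approximate rule. The paper instead maximizes $\sum_{\sigs}\sum_i x_i(\sigs)\,\ind{\rho_i(\sigs)\geq 1/\gamma}$ over $\polytope$ and checks whether the optimum equals $k^n$. Since $\sum_i x_i(\sigs)=1$, this is exactly asking whether your face $\mathcal T_\gamma$ is nonempty, and your insistence on returning a basic (vertex) solution makes explicit a detail the paper leaves implicit.
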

\begin{proof}
    To compute $\Rdoptv$ and the corresponding deterministic mechanism, we will run the binary search of \Cref{lem:binary-search}. To answer the $\PbD_\gamma$ queries, we solve the following linear program:
    \begin{align*}
        \text{maximize} \quad& \sum_{\sigs\in[k]^n} \sum_{i\in[n]} x_i(\sigs) \cdot \mathbb 1[\rho_i(\sigs) \geq 1/\gamma] &(\Rdoptv\leq \gamma)\\
        \text{such that} \quad&  \allocs \in \polytope.
    \end{align*}
    Because the objective is linear and $\polytope$ is an integral polytope, the maximum is reached at an integral vertex $\allocs$. Moreover, observe that the maximum is equal to $k^n$ if and only if there exists a deterministic allocation $\allocs\in\polytope$ which selects at each signal profile $\sigs$ an agent $i\in[n]$ such that $\rho_i(\sigs)\geq 1/\gamma$, that is, if and only if $ \Rd^*\paras\leq \gamma$.
\end{proof}
Beyond polynomial-time computability of the deterministic ratio, the integrality of the truthful polytope also implies that all $\allocs\in\polytope$ are universally truthful, which we recall is defined as a lottery over truthful deterministic mechanisms.
\begin{proposition}
    \label{propo:inte_uni_truth}If the truthful polytope $\polytope$ has integral vertices, then allocation rules in $\polytope$ satisfy the stronger notion of \emph{universal truthfulness}.
\end{proposition}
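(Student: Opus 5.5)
The plan is to show that every $\allocs\in\polytope$ can be written as a convex combination of points of $\intpolytope$. Universal truthfulness was defined precisely as membership in the convex hull of $\intpolytope$, so this is all that is needed.

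First, observe that $\polytope$ is bounded: every coordinate $x_i(\sigs)$ lies in $[0,1]$, by nonnegativity together with $\sum_i x_i(\sigs)=1$. It is also defined by finitely many linear inequalities and equalities, so it is a polytope. By the Minkowski (Krein--Milman) theorem for polytopes, every $\allocs\in\polytope$ is therefore a convex combination of the vertices of $\polytope$.

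Second, use the hypothesis that these vertices are integral. An integral point of $\polytope$ has coordinates in $[0,1]\cap\mathbb Z=\{0,1\}$. It also satisfies the probability constraints \eqref{eq:proba} and the monotonicity constraints \eqref{eq:monotonicity}, simply because it belongs to $\polytope$. Hence every vertex lies in $\intpolytope$. By \Cref{lem:truthful}, each such vertex is a deterministic allocation rule that can be implemented truthfully, with payments given by the explicit construction in that lemma.

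Third, conclude that $\allocs=\sum_\ell \lambda_\ell \allocs^{(\ell)}$ with $\allocs^{(\ell)}\in\intpolytope$, $\lambda_\ell\ge0$ and $\sum_\ell\lambda_\ell=1$. The mechanism that draws $\ell$ with probability $\lambda_\ell$ and then runs the deterministic truthful mechanism $\allocs^{(\ell)}$, with its payments, is a lottery over truthful deterministic mechanisms. Its allocation probabilities at every profile $\sigs$ coincide with $\allocs(\sigs)$. Thus $\allocs$ is universally truthful.

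There is no real obstacle here. The only point requiring care is to confirm that $\polytope$ is bounded, so that it is the convex hull of its vertices and not merely a polyhedron with recession directions. This follows immediately from the box constraints noted in the first step.
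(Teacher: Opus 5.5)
Your proposal is correct and matches the paper's proof: write $\allocs\in\polytope$ as a convex combination of extreme points of $\polytope$, note that these are integral and hence deterministic truthful allocation rules by \Cref{lem:truthful}, and read the combination as a lottery. Your explicit check that $\polytope$ is bounded, so that it equals the convex hull of its vertices, is a detail the paper leaves implicit.
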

\begin{proof}
    Any allocation rule $\allocs\in\polytope$ can be expressed as a convex combination of extreme points of $\polytope$. Using the fact that the truthful polytope is integral, $\allocs$ can be implemented as a lottery over deterministic allocation rules, which can all be implemented truthfully.
\end{proof}

In \Cref{sssec:integrality} we will show that $\polytope$ always has integral vertices when $n = 2$ or $k = 2$, but may have fractional vertices when $n \geq 3$ and $k\geq 3$.

%%%%%%%%%%%%%%%%%%%%%%%%%%%%%%%%%%%%%%%%%%%%%%%%%%%%%%%%%%%%%%%
\subsubsection{Boolean Satisfiability}
\label{sssec:SAT}

In the previous subsection, we formulated $\PbD_\gamma$ as maximizing a linear objective over $\intpolytope$, which is tractable if the related polytope $\polytope$ has integral vertices. Because in the deterministic case the variables $x_i(\sigs)$ are in $\{0,1\}$, one can encode the probability, monotonicity and ratio constraints as a boolean formula. Thus, $\PbD_\gamma$ can also be expressed as a boolean satisfiability problem.

More formally, in the SAT problem, each constraint is a \emph{clause} (disjunction) in which one of the \emph{literals} (variable or its negation) must be true. The resulting formula is the conjunction of all clauses, which is said to be in \emph{conjunctive normal form} (CNF). It is satisfiable if there exists a boolean assignment of the variable which makes the formula evaluate to true.

\begin{lemma}\label{lem:satisfiability}
Given parameters $\paras$ and $\gamma\geq 1$, the existence of a mechanism $\allocs\in\intpolytope$ with approximation ratio $R(\rhos,\allocs) \leq \gamma$ is equivalent with the satisfiability of the following formula:
\begin{align*}
    \forall\sigs\in[k]^n,\qquad
    &
    x_i(\sigs)\vee \dots \vee x_n(\sigs),
    \\ 
    \forall\sigs\in[k]^n,\forall i,j\in[n]\text{ s.t. }i\neq j,\qquad
    & 
    \neg x_i(\sigs) \vee \neg x_j(\sigs),
    \\
    \forall i\in[n],\forall \osigs\in[k]^{n-1}, \forall (s_i, s_i')\in\sigma_i(\osigs),\qquad
    &
    \neg x_i(s_i,\osigs)\vee x_i(s_i',\osigs),
    \\
    \forall \sigs\in[k]^n,\forall i\in[n]\text{ s.t. }\rho_i(\sigs) < 1/\gamma,\qquad
    &
    \neg x_i(\sigs).
\end{align*}
\end{lemma}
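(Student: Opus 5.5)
We prove the equivalence by a direct bijection between boolean assignments of the variables $(x_i(\sigs))_{i\in[n]}^{\sigs\in[k]^n}$ and vectors $\allocs\in\{0,1\}^{n k^n}$, identifying true with $1$ and false with $0$. Each group of clauses is then checked against one family of constraints defining $\intpolytope$ or the ratio bound $R(\rhos,\allocs)\le\gamma$. Both directions follow from the same clause-by-clause correspondence, so we establish that correspondence and then read it both ways.

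First, we handle the probability constraint (\ref{eq:proba}) for a $0/1$ vector. The clauses $x_1(\sigs)\vee\dots\vee x_n(\sigs)$ say that at least one coordinate of $\allocs(\sigs)$ equals $1$. The pairwise clauses $\neg x_i(\sigs)\vee\neg x_j(\sigs)$ say that at most one does. Together they are equivalent to $\sum_i x_i(\sigs)=1$ with $x_i(\sigs)\in\{0,1\}$. Nonnegativity is automatic. Second, for $0/1$ values the inequality $x_i(s_i,\osigs)\le x_i(s_i',\osigs)$ is exactly the implication $x_i(s_i,\osigs)\Rightarrow x_i(s_i',\osigs)$, that is, the clause $\neg x_i(s_i,\osigs)\vee x_i(s_i',\osigs)$, taken for every $(s_i,s_i')\in\sigma_i(\osigs)$. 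Hence an assignment satisfies the first three groups if and only if the corresponding $\allocs$ lies in $\intpolytope$.

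Third, we treat the ratio condition. Fix a deterministic $\allocs\in\intpolytope$, and let $i(\sigs)$ denote the unique agent selected at $\sigs$. Then $\avg{\allocs(\sigs)}{\rhos(\sigs)}=\rho_{i(\sigs)}(\sigs)$. By \Cref{lem:ratios}, $R(\rhos,\allocs)=\max_{\sigs}1/\rho_{i(\sigs)}(\sigs)$. So $R(\rhos,\allocs)\le\gamma$ holds if and only if $\rho_{i(\sigs)}(\sigs)\ge 1/\gamma$ for every $\sigs$. Equivalently, no $i$ with $\rho_i(\sigs)<1/\gamma$ is selected, which is precisely the unit clauses $\neg x_i(\sigs)$. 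Combining the three parts, satisfying assignments correspond exactly to elements $\allocs\in\intpolytope$ with $R(\rhos,\allocs)\le\gamma$, which proves both directions.

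There is no real obstacle here. The only point needing care is the third step: we must use that $\allocs$ is deterministic, so that the expectation reduces to the single selected agent's ratio and the value and cost ratios coincide, and we must check that the direction of the strict inequality $\rho_i(\sigs)<1/\gamma$ matches the non-strict bound $R\le\gamma$.
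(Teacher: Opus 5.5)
Your proposal is correct and follows the same approach as the paper: the paper also matches the first two clause families to the probability constraint, the third to the monotonicity constraint, and the unit clauses to the ratio bound. You spell out details the paper leaves to the reader, namely the reduction of the deterministic ratio to $\max_{\sigs} 1/\rho_{i(\sigs)}(\sigs)$ and the matching of the strict inequality $\rho_i(\sigs)<1/\gamma$ with the bound $R\le\gamma$.
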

\begin{proof}
Observe that first two clauses correspond to the probability constraint in \Cref{eq:proba}, the third clause corresponds to the monotonicity constraint in \Cref{eq:monotonicity}, and the fourth clause imposes constraints on the approximation ratio.
\end{proof}

Unfortunately, the problem of finding a satisfying assignment for a CNF formula is NP-Hard in general. However, observe that when $n=2$ all clauses have size at most $2$, which corresponds to the 2-SAT problem, which can be solved in polynomial time. We will use this remark in \Cref{sssec:two-agents}.

%%%%%%%%%%%%%%%%%%%%%%%%%%%%%%%%%%%%%%%%%%%%%%%%%%%%%%%%%%%%%%%
\subsubsection{Perfect Matching in Hypergraphs}
\label{sssec:hypergraph}

Finally, we introduce in this section the third reformulation of $\PbD_\gamma$, as the problem of finding a perfect matching in a hypergraph. More formally, a hypergraph is described by a set of \emph{vertices} $V$, and a set of \emph{hyperedges} $E$, where each hyperedge $e\in E$ is a subset of vertices $e\subseteq V$. A matching is a collection of hyperedges $M\subseteq E$, such that each vertex is contained in at most one edge. The matching is perfect if each vertex is contained in exactly one hyperedge.
\begin{lemma}\label{lem:hypergraph}
Given parameters $\paras$ and $\alpha \geq 1$, the existence of a mechanism $\allocs\in\intpolytope$ with approximation ratio $R(\rhos, \allocs) \leq \alpha$ is equivalent with the existence of a perfect matching in the following hypergraph:
$$
V := [k]^n\qquad\text{and}\qquad
E := \bigcup_{i\in[n]}
\left\{
e\in E_i\;|\; \forall \sigs\in e, \rho_i(\sigs) \geq 1/\alpha
\right\}$$
where for all $i\in [n]$ we have
$$
    E_i = \left\{\{(s_i, \osigs)\;|\; s_i \in S\}\;\middle|\;\begin{array}{l}
    S\subseteq [k]\text{ and }\osigs\in [k]^{n-1}\text{ such that}\\
    \forall (s_i, s_i')\in \sigma_i(\osigs), \; s_i\in S \Rightarrow s_i'\in S\end{array}\right\}
$$
\end{lemma}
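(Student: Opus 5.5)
We prove both directions by identifying a deterministic allocation with the family of its "winning sets" along each line of the grid $[k]^n$. Throughout, a \emph{line} in direction $i$ is a set $\{(s_i,\osigs)\;|\;s_i\in[k]\}$ for fixed $\osigs$. The key observation is that the defining condition of $E_i$ says exactly that $S$ is an up-set of the strict order $\sigma_i(\osigs)$: if $s_i\in S$ and $(s_i,s_i')\in\sigma_i(\osigs)$, then $s_i'\in S$. For a $\{0,1\}$-valued function $x_i(\cdot,\osigs)$, this is precisely the monotonicity condition \eqref{eq:monotonicity} restricted to that line.

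\emph{($\Rightarrow$)} Take $\allocs\in\intpolytope$ with $R(\rhos,\allocs)\le\alpha$.
1. For each $i$ and $\osigs$, let $S_i(\osigs):=\{s_i\;|\;x_i(s_i,\osigs)=1\}$, and let $e_i(\osigs)$ be the corresponding set of profiles. Monotonicity of $\allocs$ means $S_i(\osigs)$ is an up-set, so $e_i(\osigs)\in E_i$.
2. Since $\allocs$ is deterministic, $R(\rhos,\allocs)\le\alpha$ forces $\rho_i(\sigs)\ge 1/\alpha$ whenever $x_i(\sigs)=1$. Hence every profile in $e_i(\osigs)$ is acceptable for $i$, and $e_i(\osigs)\in E$.
3. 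Let $M$ be the collection of all nonempty $e_i(\osigs)$. By \eqref{eq:proba}, each profile $\sigs$ has exactly one $i$ with $x_i(\sigs)=1$, so $\sigs$ lies in exactly one set, namely $e_i(\sigs_{-i})$. Sets coming from different agents cannot coincide, since they would cover a common vertex twice. Thus $M$ is a perfect matching.

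\emph{($\Leftarrow$)} Take a perfect matching $M\subseteq E$.
1. Each hyperedge $e\in M$ must be tagged with an agent $i$ such that $e\in E_i$ and its profiles are acceptable for $i$. If $e$ qualifies for several agents, fix one tag arbitrarily.
2. Define $x_i(\sigs)=1$ iff the unique hyperedge of $M$ containing $\sigs$ is tagged $i$. Perfectness gives \eqref{eq:proba}, and acceptability gives $R(\rhos,\allocs)\le\alpha$.
3. For monotonicity, fix $i$, $\osigs$ and $(s_i,s_i')\in\sigma_i(\osigs)$ with $x_i(s_i,\osigs)=1$. The hyperedge $e\ni(s_i,\osigs)$ is tagged $i$ and lies in direction $i$. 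It must be exactly the line segment $\{(t,\osigs)\;|\;t\in S\}$ with $S$ an up-set, because a direction-$i$ hyperedge containing $(s_i,\osigs)$ must have the same $\osigs$. Therefore $s_i'\in S$, and $x_i(s_i',\osigs)=1$.

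The main subtlety is the tagging and well-definedness in the converse. A hyperedge of size one is simultaneously in $E_i$ for every agent $i$ with an empty constraint set there, so the tag choice matters. The argument works because monotonicity is only ever checked against the agent whose tag is used. Empty $S$ gives the empty set, which is irrelevant since no perfect matching needs it.
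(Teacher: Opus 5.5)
Your proof is correct and follows the paper's argument for the matching-to-mechanism direction. You tag each matched hyperedge with an agent $i$ such that it lies in $E_i$ with all its profiles acceptable (the only ambiguity is singletons, which the paper also notes), read off $x_i$ from the tags, and get monotonicity from the up-set structure of the unique direction-$i$ line through the profile. You also prove the converse by decomposing a deterministic monotone $\alpha$-approximate allocation into its per-line winning sets $e_i(\osigs)$, which the paper's proof omits entirely, so your write-up is the more complete of the two.
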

\begin{proof}
    Given a perfect matching $M\subseteq E$, for each $e\in M$ such that $e\in E_i$ we will set $x_i(\sigs) = 1$ for all $\sigs\in e$. One has to be careful when $e$ is in two sets $E_i$ and $E_j$ with $i\neq j$, in which case it holds that $e = \{\sigs\}$ and setting either $x_i(\sigs) = 1$ or $x_j(\sigs) = 1$ will work.

    First, observe that because the matching is perfect, for each $\sigs\in[k]^n$ there exists a unique $i\in[n]$ such that $x_i(\sigs) = 1$, and thus $\allocs$ satisfies \Cref{eq:proba}. Second, by construction of $E_i$, the resulting allocation rule satisfy the monotonicity condition of \Cref{eq:monotonicity}, and  is therefore truthful. Finally, by construction of $E$, for all $i\in[n]$ and $\sigs\in[k]^n$ such that $x_i(\sigs) = 1$ we have $\rho_i(\sigs) \geq 1/\alpha$, and thus $R(\rhos, \allocs) \leq \alpha$.
\end{proof}

This construction might seem overly complicated, as computing a perfect matching in a hypergraph is in general NP-Hard. However, observe that the hyperedges of the graph have size at most $k$, which corresponds to a standard graph when $k=2$. We will use this remark in \Cref{sssec:two-signals}.

%%%%%%%%%%%%%%%%%%%%%%%%%%%%%%%%%%%%%%%%%%%%%%%%%%%%%%%%%%%%%%%
%%%%%%%%%%%%%%%%%%%%%%%%%%%%%%%%%%%%%%%%%%%%%%%%%%%%%%%%%%%%%%%
\subsection{Special Cases with Efficient Algorithms}
\label{ssec:special-cases}

In \Cref{subsec:general-formulation}, we provided several generic reformulation of our optimization problem, using linear programming, boolean satisfiability and hypergraph matching, most of which being hard to solve in the general case. In this section, we explore special case, and show how to exploit these reformulations to design efficient algorithms.

%%%%%%%%%%%%%%%%%%%%%%%%%%%%%%%%%%%%%%%%%%%%%%%%%%%%%%%%%%%%%%%
\subsubsection{Conditions for Integrality}
\label{sssec:integrality}
In \Cref{sssec:ILP} analysis, we established that the integrality of the truthful polytope entails both polynomial-time computability of the deterministic ratio and universal truthfulness.
We now turn to the special cases in which integrality holds.

We first recall that if there exists a tie in values (or costs), i.e., if there exists $i$, $\sigs_{-i}$, $s_i$ and $s_i^{\prime}$ such that $v_i(s_i,\sigs_{-i})=v_i(s_i^{\prime},\sigs_{-i})$ (or $c_i(s_i,\sigs_{-i})=c_i(s_i^{\prime},\sigs_{-i})$), then the induced strict order is not total, as neither $(s_i,s_i^{\prime})$ nor $(s_i^{\prime},s_i)$ belongs to $\sigma_i(\sigs_{-i})$. Nevertheless, we claim that restricting attention to tie-free instances is without loss of generality. 
Since any strict partial order can be extended to a strict total order \cite{szpilrajn1930extension}, any set of strict orders $\sigmas$ could be extended to $\tilde{\sigmas}$ which is total, we denote this relation as $\sigmas\subset\tilde{\sigmas}$.

\begin{lemma}
    \label{lm:tie_free_TUM} 
    Truthful polytope $\polytope$ has integral vertices if $\polytopetilde$ has integral vertices for collection of strict total order $\tilde{\sigmas}$ such that $\sigmas \subset \tilde{\sigmas}$.
\end{lemma}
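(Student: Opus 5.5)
The lemma should be read as: if $\mathcal T(\tilde{\sigmas})$ has integral vertices for \emph{every} collection of strict total orders $\tilde{\sigmas}$ with $\sigmas\subset\tilde{\sigmas}$, then $\mathcal T(\sigmas)$ has integral vertices. The plan is to show that every vertex of $\mathcal T(\sigmas)$ is a vertex of some $\mathcal T(\tilde{\sigmas})$, and then apply the hypothesis to that total refinement.

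First, note that refining the orders only adds constraints. If $\sigmas\subset\tilde{\sigmas}$, every monotonicity inequality of $\mathcal T(\sigmas)$ also appears in $\mathcal T(\tilde{\sigmas})$, so $\mathcal T(\tilde{\sigmas})\subseteq\mathcal T(\sigmas)$. Second, fix a vertex $\allocs$ of $\mathcal T(\sigmas)$. We build $\tilde{\sigmas}$ tailored to $\allocs$, exactly as in the proof of \Cref{lem:truthful}. For each $i$ and $\osigs$, we extend $\sigma_i(\osigs)$ to a strict total order $\tau_i(\osigs)$ that is monotone with respect to $x_i(\cdot,\osigs)$.

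Such an extension exists because of how it is built. Consider the relation that orders $s_i$ before $s_i'$ whenever $(s_i,s_i')\in\sigma_i(\osigs)$, or whenever $x_i(s_i,\osigs)<x_i(s_i',\osigs)$. This relation is acyclic. The $\sigma$-pairs are weakly increasing in $x_i$ since $\allocs\in\mathcal T(\sigmas)$, and the added pairs are strictly increasing, so any cycle would force all values along it to be equal and hence use only $\sigma$-pairs, contradicting that $\sigma_i(\osigs)$ is a strict order. Szpilrajn's theorem then gives a total extension. Setting $\tilde{\sigmas}=\tau$, we get $\sigmas\subset\tilde{\sigmas}$ and $\allocs\in\mathcal T(\tilde{\sigmas})$.

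Third, $\allocs$ is a vertex of $\mathcal T(\tilde{\sigmas})$. Suppose $\allocs=\tfrac12(\mathbf y+\mathbf z)$ with $\mathbf y,\mathbf z\in\mathcal T(\tilde{\sigmas})$. Since $\mathcal T(\tilde{\sigmas})\subseteq\mathcal T(\sigmas)$, both $\mathbf y$ and $\mathbf z$ lie in $\mathcal T(\sigmas)$, and extremality of $\allocs$ there forces $\mathbf y=\mathbf z=\allocs$. The hypothesis applied to $\tilde{\sigmas}$ then says $\allocs$ is integral, which proves the lemma.

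The main obstacle is the quantifier. If only one fixed refinement $\tilde{\sigmas}$ were assumed integral, the argument would break: a fractional vertex of $\mathcal T(\sigmas)$ may violate some comparisons in that $\tilde{\sigmas}$ and so not belong to $\mathcal T(\tilde{\sigmas})$. The proof therefore relies on the hypothesis holding for all total refinements, which is natural since the conditions proved later for $n=2$ or $k=2$ hold for any tie-free instance. The delicate point is the acyclicity check above, which makes the $\allocs$-dependent refinement legitimate.
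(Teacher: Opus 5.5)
Your proof is correct and follows the paper's route. Like the paper, you refine $\sigmas$ to a total order adapted to the given vertex $\allocs$ by adding the pairs on which $x_i$ strictly increases, so that $\allocs\in\mathcal T(\tilde{\sigmas})\subseteq\mathcal T(\sigmas)$, conclude that $\allocs$ is a vertex of $\mathcal T(\tilde{\sigmas})$, and read the hypothesis as holding for every total refinement. The differences are only in presentation: you justify the extension with an explicit acyclicity check that the paper leaves implicit, and you transfer extremality through the convex-combination definition rather than the active-constraint characterization the paper cites.
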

\begin{proof}
To facilitate the analysis, we introduce the notation $\polyver$ to denote the set of vertices of the truthful polytope $\polytope$. 
We state that 
$$
\polyver \subset \bigcup_{\tilde{\sigmas}:\,\sigmas \subset \tilde{\sigmas}} \polyvertilde,
$$
which directly implies the proposition. 
The statement holds for the following reason. 
Take any $\allocs \in \polyver$. 
Then there exists some $\tilde{\sigmas}$ such that $\sigmas \subset \tilde{\sigmas}$ and $\allocs \in \polytopetilde$. 
Indeed, we can extend each $\sigma_i(\sigs_{-i})$ by adding all pairs $(s_i,s_i')$ 
that satisfy both of the following conditions:
neither $(s_i,s_i')$ nor $(s_i',s_i)$ belongs to $\sigma_i(\sigs_{-i})$, 
and $x_i(s_i,\sigs_{-i}) < x_i(s_i',\sigs_{-i})$.
By arbitrarily extending the resulting $\sigmas$ to a total set of strict orders $\tilde{\sigmas}$, we obtain $\sigmas \subset \tilde{\sigmas}$. One can easily verify that $\allocs \in \polytopetilde$. 

Note that the constraints defining $\polytopetilde$ can be equivalently written so as to include all constraints of $\polytope$, which, together with the fact that $\allocs \in \polytopetilde$, shows that $\allocs \in \polyvertilde$. 
This follows directly from the standard algebraic characterization of vertices \cite[Proposition 2.9]{bertsimas1997introduction}.
\end{proof}
\begin{proposition}
    If $n = 2$ or $k = 2$, then $\polytope$ has only integral vertices.
    \label{propo:TUM}
\end{proposition}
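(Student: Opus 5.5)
By \Cref{lm:tie_free_TUM}, it suffices to treat the case where every $\sigma_i(\osigs)$ is a strict total order. The plan is to show that the constraint matrix of $\polytope$ is totally unimodular (TUM) in both regimes. The right-hand sides are integral ($0$ or $1$), so total unimodularity gives integral vertices. In both cases I will first reduce to a simpler constraint system and then apply a standard TUM criterion.

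\textbf{Case $n=2$.} Substitute $x_2(\sigs)=1-x_1(\sigs)$. The polytope becomes, in the variables $y(\sigs):=x_1(\sigs)$:
\begin{itemize}
\item box constraints $0\le y(\sigs)\le 1$;
\item constraints $y(s_1,s_2)-y(s_1',s_2)\le 0$ for $(s_1,s_1')\in\sigma_1(s_2)$;
\item constraints $y(s_1,s_2')-y(s_1,s_2)\le 0$ for $(s_2,s_2')\in\sigma_2(s_1)$, rewritten from $x_2(s_1,s_2)\le x_2(s_1,s_2')$.
\end{itemize}
Each non-box row has exactly one $+1$ and one $-1$, so the matrix is the incidence matrix of a directed graph, namely $G(\sigmas)$. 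Appending identity rows for the bounds preserves total unimodularity. The affine substitution is an integral bijection between $\polytope$ and this polytope, so it maps vertices to vertices and preserves integrality.

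\textbf{Case $k=2$.} With total orders, for each $i$ and $\osigs$ there is exactly one constraint $x_i(a,\osigs)\le x_i(b,\osigs)$, where $\{a,b\}=[2]$. Index the variables by pairs $(i,\sigs)$. The equality rows $\sum_i x_i(\sigs)=1$ partition the variables, one row per profile. The monotonicity rows each have a $+1$ and a $-1$ on two variables of the same agent $i$ at the two profiles $(1,\osigs)$ and $(2,\osigs)$. Each variable $(i,\sigs)$ lies in exactly one monotonicity row, the one for $(i,\osigs)$.

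I would then apply Ghouila-Houri's criterion: every subset $R$ of rows must admit a signing $\pm1$ whose signed sum has all entries in $\{0,\pm1\}$. The signing uses the bipartition of profiles by the parity of $\sum_i s_i$, since each monotonicity row joins profiles of opposite parity. Give the equality rows sign $+1$ on even profiles and $-1$ on odd profiles. A variable then receives at most one contribution from its equality row and one from its monotonicity row. The monotonicity row for $(i,\osigs)$ must be signed so that its entries cancel, rather than add, against the equality-row contributions at its two endpoints.

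The difficulty is that the $+1$ entry of a monotonicity row lies at the $\sigma$-larger profile, which may be either even or odd. The fix is to orient each such row, multiplying it by $-1$ if needed (this does not change total unimodularity), so that its $+1$ sits on the even profile. With equality rows at $+1$ for even and $-1$ for odd profiles, we want the sum $\epsilon_{\text{eq}}\cdot 1+\epsilon_{\text{mono}}\cdot(\pm1)\in\{0,\pm1\}$ at each variable. Choose $\epsilon_{\text{mono}}=-1$ whenever the row is in $R$. At an even endpoint the sum is then $1-1=0$, and at an odd endpoint it is $-1+1=0$ (or a single $\pm1$ when only one of the two rows is in $R$). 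This is exactly the structure of a network matrix, or equivalently of the bipartite-matching incidence from \Cref{sec:ideas} (\Cref{lem:hypergraph} with edges of size at most $2$). Carefully checking this signing is the main obstacle. As an alternative route, I would substitute $z_i(\sigs)=x_i(\sigs)$ on even profiles and $z_i(\sigs)=-x_i(\sigs)$ on odd profiles, which should turn every row into a difference of two variables and exhibit an explicit directed-graph incidence matrix.
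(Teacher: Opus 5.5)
Your proof is correct. For $k=2$ it is the paper's argument. The paper also uses the row form of Ghouila-Houri: it classes the equality row of $\sigs$ by the parity of $\sum_j s_j$, and puts each monotonicity row in the class of the equality row at the endpoint where its coefficient is $-1$. Your preliminary reorientation (placing the $+1$ at the even profile) encodes the same choice. It also makes automatic the consistency check that the paper leaves to the reader, namely that the two endpoints of a monotonicity row never request different classes.

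For $n=2$ your route differs. The paper keeps both $x_1$ and $x_2$ and applies the column form of the criterion, with agent-$1$ columns against agent-$2$ columns. You instead eliminate $x_2=1-x_1$ through an affine bijection, which leaves difference constraints along the arcs of the implication graph $G(\sigmas)$ plus box constraints, i.e.\ the transpose of a directed-graph incidence matrix with unit rows appended. The paper's version handles both cases with a single criterion and no change of variables. Yours identifies the polytope with the feasible region of the two-agent LP used later for conflicting pairs, which makes the link to the 2-SAT analysis explicit.

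One aside should be removed. In your closing ``alternative route'', the substitution $z_i(\sigs)=\pm x_i(\sigs)$ only rescales columns. It turns each monotonicity difference into a sum and leaves every equality row with $n$ nonzero entries, so it does not produce a difference-of-two-variables matrix. The correct version is a row operation: after your reorientation, negating the odd-profile equality rows and all monotonicity rows leaves each column with exactly one $+1$ and one $-1$. Your explicit signing check already establishes total unimodularity, so nothing else is missing.
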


\begin{proof}
    By \Cref{lm:tie_free_TUM}, we may, without loss of generality, assume that $\sigmas$ is total. In what follows, we work under this assumption.

    We will show that the constraint matrix of $\polytope$ is totally unimodular (TU), which, together with the fact that the vector of constraints' constants is integral, implies that the extreme points are integral \cite{schrijver1998theory}. For all $\allocs \in \mathbb R_{\geq 0}^{[n]\times [k]^n}$ we have that $\allocs\in\polytope$ if and only if
    $$
    \left(\begin{array}{ccc}
    \\
    \qquad &A& \qquad\\
    \\\hline\\
    \quad &-A& \qquad\\
    \\\hline\\
    \qquad &B&\qquad\\\\
    \end{array}\right)
    \cdot\allocs
    \quad\leq\quad 
    \left(\begin{array}{c}
    1\\\vdots\\ 1%\textcolor{blue}{-1?}
    \\\hline
    -1\\\vdots\\ -1%\textcolor{blue}{-1?}
    \\\hline
    0\\\vdots\\ 0
    \end{array}\right),
    $$
    where $A$ contains the probability constraint $\sum_i x_i(\sigs) \leq 1$ for all $\sigs$, and where $B$ contains the monotonicity constraints $x_i(s_i,\sigs_{-i}) - x_i(s_i^{\prime},\osigs) \leq 0$ for all pairs $(s_i,s_i^{\prime})$, where $s_i$ is the predecessor of $s_i^{\prime}$ in the total strict order $\sigmas_i(\osigs)$.
    
    First, duplicating rows and/or columns, or changing their sign preserve the total unimodularity of the matrix. Hence, it is enough to prove that the matrix stacking $A$ and $B$ is totally unimodular.
    Using \cite[Theorem 19.3]{schrijver1998theory}, a matrix is TU if and only if each collection $C$ of columns can be partitioned into classes $C_1$ and $C_2$ such that the sum of the columns in $C_1$, minus the sum of the columns in $C_2$, is a vector with entries $\{-1,0,1\}$ only. The same property holds for rows since the transpose of a TU matrix is also TU.
    \begin{itemize}
        \item If $n = 2$, then for every collection of columns $C$, we set $C_1$ to be the columns of type $x_1(\sigs)$, and we set $C_2$ to be the columns of type $x_2(\sigs)$. For every row of $A$, there are at most two non-zero entries, both equal to $1$, one in each set $C_1$ and $C_2$. For every row of $B$, there are at most two non-zero entries, equal to $1$ and $-1$, both in the same set $C_1$ or $C_2$.
        \item If $k = 2$, then for every collection of rows $R$, we assign the row $\sigs$ of $A$ to class $R_1$ if $\sum_j s_j$ is even, and to class $R_2$ otherwise. 
        Moreover, we assign the row $(i, \sigs)$ of $B$ to the same class as the row $\sigs$ of $A$ if the coefficient of $x_i(\sigs)$ in $B$ is $-1$, and to the opposite class if the coefficient is $1$. Note that there exist two distinct pairs $(i,\sigs)$ and $(i,\sigs')$ that correspond to the same row of $B$, yet one can verify that no conflicts arise in this classification. Then, each column $(i,\sigs)$ has at most $2$ non-zero entries, one in $A$ equal to $1$, and one in $B$ equal to either $-1$ or $1$. By construction, these two entries have the same sign if and only if they are in the different classes. 
    \end{itemize}
    In both cases, the resulting vector has coefficients in $\{-1, 0, 1\}$, proving that the matrix is TU.
\end{proof}

The above proposition, together with \Cref{propo:polytime_deter}, directly implies the following corollary.

\begin{corollary}
    \label{cor:poly_deter_n2k2}
    If $n=2$ or $k=2$, one can solve \PbD in polynomial time.
\end{corollary}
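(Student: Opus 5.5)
The corollary follows by chaining \Cref{propo:TUM} with \Cref{propo:polytime_deter}; the only real work is making the binary search of \Cref{lem:binary-search} explicit and checking that each step is polynomial in $N = nk^nb$.

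First, assume $n=2$ or $k=2$. By \Cref{propo:TUM}, $\polytope$ has only integral vertices. This holds for the original, possibly non-total, $\sigmas$, since \Cref{lm:tie_free_TUM} already reduced the tie case to the total one inside that proof. So the hypothesis of \Cref{propo:polytime_deter} is satisfied.

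Second, make the algorithm explicit. The candidate values of $\Rd^*\paras$ are the at most $nk^n$ numbers $1/\rho_i(\sigs)$. Computing them and sorting them costs $O(N\log N)$. Binary search over this sorted list takes $O(\log N)$ rounds. In each round, with the current candidate $\gamma$, solve the linear program of \Cref{propo:polytime_deter}: maximize $\sum_{\sigs}\sum_i x_i(\sigs)\,\ind{\rho_i(\sigs)\geq 1/\gamma}$ over $\polytope$.

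The one point needing care is that the LP solver must return an optimal \emph{vertex}, since only vertices are guaranteed integral. There are two ways to secure this. An interior-point or ellipsoid method can be followed by standard rounding to an optimal basic solution, which is polynomial. Alternatively, the constraint matrix is TU by \Cref{propo:TUM} and the objective has $0/1$ coefficients, so Tardos' strongly polynomial algorithm applies and yields an optimal basic solution directly. Either way the returned point lies in $\intpolytope$. The optimum equals $k^n$ exactly when $\Rd^*\paras\leq\gamma$, and in that case the vertex is itself a deterministic truthful mechanism of ratio at most $\gamma$.

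Third, count the size of each LP. It has $nk^n$ variables and $O(nk^{n}\cdot k^2)$ constraints, all with coefficients in $\{-1,0,1\}$. The indicator objective depends on the input only through comparisons of the $\rho_i(\sigs)$ with $1/\gamma$, and these comparisons cost polynomial time in $b$. The LP therefore has size polynomial in $N$ (note $k^2\leq N^2$), so the whole procedure runs in time $N^{O(1)}\cdot O(\log N)$.

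Finally, output the mechanism found at the smallest feasible $\gamma$. This mechanism attains $\Rd^*\paras$, which solves \PbD. The main obstacle, modest as it is, is the vertex-extraction issue above; everything else is bookkeeping.
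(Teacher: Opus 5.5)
Your proposal is correct and follows the paper's own route: the corollary is obtained by chaining \Cref{propo:TUM} (integral vertices when $n=2$ or $k=2$) with \Cref{propo:polytime_deter} (a binary search over the candidate ratios, where each step is a linear program over an integral polytope). Your explicit step of extracting an optimal vertex from the LP solver fills in a detail the paper leaves implicit, and that step is sound.
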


We complement the result of \Cref{propo:TUM} by showing that integrality does not hold if $n \geq 3$ and $k\geq 3$, even when $\sigmas$ is total and for any given $i$, $\sigma_{i}(\osigs)$ induces the same order for all $\osigs$.

\begin{proposition}
    If $n \geq 3$ and $k \geq 3$, then $\polytope$ may have some fractional vertices.
\end{proposition}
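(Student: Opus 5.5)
The plan is to exhibit one explicit instance with $n=k=3$ in which every $\sigma_i(\osigs)$ is the natural total order $1<2<3$ on $s_i$, i.e. increasing value functions as allowed by \Cref{lm_find_value_for_performance_ratios}. I will then write down a point $\allocs\in\polytope$ with entries in $\{0,\tfrac12,1\}$ and certify that it is a vertex. Afterwards I will pad the instance to arbitrary $n\geq 3$, $k\geq 3$.

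\emph{Vertex certificate.} A feasible $\allocs$ is a vertex iff there is no nonzero direction $\mathbf d$ with $\allocs\pm\mathbf d\in\polytope$. Any such $\mathbf d$ must satisfy three conditions. First, $d_i(\sigs)=0$ whenever $x_i(\sigs)\in\{0,1\}$; for an entry equal to $1$ this holds because every other agent has $x=0$ at that profile, and nonnegativity then pins the whole profile. Second, $\sum_i d_i(\sigs)=0$ at every profile. Third, $d_i(s_i,\osigs)=d_i(s_i',\osigs)$ whenever the monotonicity constraint between consecutive signals is tight, i.e. $x_i(s_i,\osigs)=x_i(s_i',\osigs)$.

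I will place the value $\tfrac12$ so that these equalities form an \emph{odd cycle}. At each fractional profile exactly two agents receive $\tfrac12$, which forces the two corresponding entries of $\mathbf d$ to be opposite. Each tight monotonicity edge along an own-signal line forces two entries of $\mathbf d$ to be equal. Traversing an odd alternation then yields $d=-d$, so $\mathbf d=0$ on the fractional entries, and hence everywhere.

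The naive triangle (profiles $P_1,P_2,P_3$ with agents $\{1,2\},\{2,3\},\{3,1\}$ fractional) is impossible: consecutive profiles would have to differ in exactly the coordinate of the shared agent, and three such single-coordinate moves cannot close a cycle. So the cycle must pass through more profiles, using the third signal value to route around. Concretely, I will look for a closed walk in $[3]^3$ whose steps change a single coordinate $s_j$ and carry agent $j$'s $\tfrac12$. The walk alternates "step along axis $j$ carrying agent $j$" with "switch agent at a profile", and the number of switches must be odd.

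All remaining entries of agents along the lines of the walk get $0$ below and $1$ above, consistent with monotonicity. All other profiles are filled deterministically, for instance by giving the good to an agent whose whole own-signal line is already forced to $1$ or $0$ appropriately. I then check the sum and monotonicity constraints line by line, with $3\cdot 9=27$ lines per agent.

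\emph{Main obstacle.} Finding this walk and filling in the rest of the grid consistently is the hard part. In particular, a $\tfrac12$ entry requires all lower entries on its line to be $\le\tfrac12$ and all higher entries to be $\ge\tfrac12$, while the non-tight neighbors must not create extra equalities that break the odd parity. I would first try a cycle that uses signal $2$ as the "fractional layer" for each agent, with signal $1$ mapped to $0$ and signal $3$ mapped to $1$. I would search small cases by hand, or by a quick exhaustive check of $\{0,\tfrac12,1\}$ patterns, if the hand attempt fails.

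\emph{Padding.} Given the $n=k=3$ example, I extend to general $n,k$ as follows. Agents $j\ge4$ have their signals ordered $1<\dots<k$. The new signals $4,\dots,k$ of every agent are placed at the top of that agent's order. On the original block $[3]^3\times\{1\}^{n-3}$, the allocation is as before, with the new agents receiving $0$. At any other profile, the good goes deterministically to the smallest-index agent $j$ with $s_j>3$; such an agent exists, since for $j\ge4$ a signal $s_j>1$ still lies outside the original block.

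Monotonicity holds for this rule. Raising one's own signal into the top range can only make one the designated winner. Lowering it can only hand the good to a smaller-index agent or back to the original block, where new agents get $0$ and old agents' lines are unchanged. All new entries are $0/1$ with the other agents at $0$, so nonnegativity pins $\mathbf d=0$ there. The odd-cycle argument is untouched, so $\allocs$ remains a fractional vertex.
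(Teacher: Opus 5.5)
There is a genuine gap. The proposition is an existence claim, and its whole content is one explicit fractional extreme point of $\polytope$. Your proposal never exhibits such a point.

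Your certificate is the right tool. With entries in $\{0,\tfrac12,1\}$, a perturbation $\mathbf d$ vanishes on integral entries, is antisymmetric on the two halves at each profile, and is constant along runs of equal entries on own-signal lines. So the point is extreme iff every component of this signed system has a cycle with an odd number of agent switches. Your remark that a three-profile triangle cannot close is also correct. But you then defer the construction itself to a later search by hand or by computer, so nothing is yet proved.

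Moreover, the first pattern you plan to try cannot work. Suppose agent $j$'s only $\tfrac12$ on each own-signal line sits at $s_j=2$, with $0$ below and $1$ above. Then no tight monotonicity constraint joins two fractional entries, so your walk has no step along any axis. At every fractional profile the two halves can be moved by $\pm\delta$ while all neighbouring constraints stay slack. What you need are long runs of $\tfrac12$ along own-signal lines.

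The paper closes the gap by exhibiting vertices found with \texttt{lrslib} for increasing values and $n=k=3$. One of them has all entries in $\{0,\tfrac12\}$:
\begin{itemize}
\item $x_1(\sigs)=\tfrac12$ iff $s_1\ge2$ or $(s_2,s_3)\in\{(1,3),(3,1),(3,2)\}$;
\item $x_2(\sigs)=\tfrac12$ iff $s_1=2$, or $s_1=1$ and $(s_2,s_3)\neq(1,3)$;
\item $x_3(\sigs)=\tfrac12$ iff $s_1=3$, or $s_1=1$ and $(s_2,s_3)\notin\{(3,1),(3,2)\}$;
\end{itemize}
and all other entries are $0$. Every profile has exactly two halves, and monotonicity is easy to check. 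All half-entries lie in one component of your signed system. That component contains the closed walk through $(1,1,1),(1,3,1),(3,3,1),(3,3,3),(2,3,3),(2,1,3),(1,1,3)$, alternating own-line moves and agent switches, with seven switches. So your odd-cycle argument certifies this point as a vertex.

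Separately, your padding rule is ill-defined as written. At a profile such as $(s_1,s_2,s_3,2,1,\dots,1)$ with $s_1,s_2,s_3\le3$, no agent has $s_j>3$. Use instead the smallest $j$ whose signal lies outside its original range, meaning $s_j>3$ for $j\le3$ and $s_j>1$ for $j\ge4$. With that fix, your monotonicity and pinning arguments go through. The paper pads differently: it sets $x_i\equiv0$ for $i\ge4$ and copies the allocation via $s_i\mapsto\min(s_i,3)$.
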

\begin{proof}
    We consider the case where valuation functions are strictly increasing, that is, where $\sigma_i(\osigs) = \{(s_i, s_i')\;|\;1 \leq s_i < s_i' \leq k\}$ for all $i\in[n]$ and $\osigs$.
    In \Cref{fig:fractional}, we plot two vertices of with $n=3$ and $k=3$, which have some fractional coordinates. We used the software \texttt{lrslib} to generate them, and one can check by hand that they are extreme points of the polytope.
    
    To extend these extreme points to $\polytope$ with $n\geq 4$, it is sufficient to set $x_i(\sigs)= 0$ for all $i\geq 4$. To extend them to $k \geq 3$ it is possible to set $\allocs(\sigs) := \allocs(\sigs')$ where $\sigs' = (\min(s_i, 3))_{i\in [n]}$. 
\end{proof}

Finally, when $n\geq3$ and $k\geq 3$, notice that $\polytope$ might be integral for some $\sigmas$, for example when all value and cost functions are constant, in which case $\sigma_i(\osigs) = \emptyset$ and $\polytope$ is just defined by the probability constraints $\sum_i x_i(\sigs)= 1$ for all $\sigs\in[k]^n$.

\begin{figure}[h!]
    \centering
\begin{tikzpicture}[scale=1.4,thick,z=-.25cm]
\draw[->] (0,0,0) -- (3.25,0,0);
\draw[->] (0,0,0) -- (0,3.25,0);
\draw[->] (0,0,0) -- (0,0,3.5);
\begin{scope}[line width=3pt,opacity=.5,blue,line cap=round]
\begin{scope}[loosely dotted]
\draw (0,0,2) -- (1,0,2);
\draw (0,2,0) -- (1,2,0);   
\draw (0,2,1) -- (1,2,1);
\draw (1,0,0) -- (2,0,0);
\draw (1,0,1) -- (2,0,1);
\draw (1,0,2) -- (2,0,2);
\draw (1,1,0) -- (2,1,0);
\draw (1,1,1) -- (2,1,1);
\draw (1,1,2) -- (2,1,2);
\draw (1,2,0) -- (2,2,0);
\draw (1,2,1) -- (2,2,1);
\draw (1,2,2) -- (2,2,2);
\draw (2,0,0) -- (3,0,0);
\draw (2,0,1) -- (3,0,1);
\draw (2,0,2) -- (3,0,2);
\draw (2,1,0) -- (3,1,0);
\draw (2,1,1) -- (3,1,1);
\draw (2,1,2) -- (3,1,2);
\draw (2,2,0) -- (3,2,0);
\draw (2,2,1) -- (3,2,1);
\draw (2,2,2) -- (3,2,2);
\draw (0,0,0) -- (0,1,0);
\draw (0,0,1) -- (0,1,1);
\draw (0,1,0) -- (0,2,0);
\draw (0,1,1) -- (0,2,1);
\draw (0,1,2) -- (0,2,2);
\draw (0,2,0) -- (0,3,0);
\draw (0,2,1) -- (0,3,1);
\draw (0,2,2) -- (0,3,2);
\draw (1,0,0) -- (1,1,0);
\draw (1,0,1) -- (1,1,1);
\draw (1,0,2) -- (1,1,2);
\draw (1,1,0) -- (1,2,0);
\draw (1,1,1) -- (1,2,1);
\draw (1,1,2) -- (1,2,2);
\draw (1,2,0) -- (1,3,0);
\draw (1,2,1) -- (1,3,1);
\draw (1,2,2) -- (1,3,2);
\draw (0,0,0) -- (0,0,1);
\draw (0,0,1) -- (0,0,2);
\draw (0,0,2) -- (0,0,3);
\draw (0,1,0) -- (0,1,1);
\draw (0,1,1) -- (0,1,2);
\draw (0,1,2) -- (0,1,3);
\draw (0,2,2) -- (0,2,3);
\draw (2,0,0) -- (2,0,1);
\draw (2,0,1) -- (2,0,2);
\draw (2,0,2) -- (2,0,3);
\draw (2,1,0) -- (2,1,1);
\draw (2,1,1) -- (2,1,2);
\draw (2,1,2) -- (2,1,3);
\draw (2,2,0) -- (2,2,1);
\draw (2,2,1) -- (2,2,2);
\draw (2,2,2) -- (2,2,3);
\end{scope}
\draw[loosely dotted] (2,3,0) -- (2.5,3,0);
\node[anchor=west] at (2.6,3,0) {1/2};
\end{scope}
\foreach \x in {0,1,2}
\foreach \y in {0,1,2}
\foreach \z in {0,1,2}
\fill (\x,\y,\z) circle (2pt);
\node at (3.5,0,0) {$s_1$};
\node at (0,3.5,0) {$s_2$};
\node at (0,0,4) {$s_3$};
\end{tikzpicture}
\hfill
\begin{tikzpicture}[scale=1.4,thick,z=-.25cm]
\draw[->] (0,0,0) -- (3.25,0,0);
\draw[->] (0,0,0) -- (0,3.25,0);
\draw[->] (0,0,0) -- (0,0,3.5);
\begin{scope}[line width=3pt,opacity=.5,blue,line cap=round]
\begin{scope}[loosely dashed]
\draw (0,1,0) -- (1,1,0);
\draw (0,1,1) -- (1,1,1);
\draw (0,2,0) -- (1,2,0);
\draw (0,2,2) -- (1,2,2);
\draw (1,0,1) -- (2,0,1);
\draw (1,1,1) -- (2,1,1);
\draw (1,2,0) -- (2,2,0);
\draw (1,2,2) -- (2,2,2);
\draw (2,2,0) -- (3,2,0);
\draw (2,2,1) -- (3,2,1);
\draw (2,2,2) -- (3,2,2);
\draw (0,1,2) -- (0,2,2);
\draw (0,2,2) -- (0,3,2);
\draw (1,0,0) -- (1,1,0);
\draw (1,0,2) -- (1,1,2);
\draw (1,1,0) -- (1,2,0);
\draw (1,1,2) -- (1,2,2);
\draw (1,2,0) -- (1,3,0);
\draw (1,2,2) -- (1,3,2);
\draw (0,0,0) -- (0,0,1);
\draw (0,0,1) -- (0,0,2);
\draw (0,0,2) -- (0,0,3);
\draw (0,2,1) -- (0,2,2);
\draw (0,2,2) -- (0,2,3);
\draw (1,2,0) -- (1,2,1);
\draw (1,2,1) -- (1,2,2);
\draw (1,2,2) -- (1,2,3);
\draw (2,0,0) -- (2,0,1);
\draw (2,0,1) -- (2,0,2);
\draw (2,0,2) -- (2,0,3);
\draw (2,1,0) -- (2,1,1);
\draw (2,1,1) -- (2,1,2);
\draw (2,1,2) -- (2,1,3);
\end{scope}
\draw (0,0,2) -- (1,0,2);
\draw (0,1,2) -- (1,1,2);
\draw (1,0,0) -- (2,0,0);
\draw (1,0,2) -- (2,0,2);
\draw (1,1,0) -- (2,1,0);
\draw (1,1,2) -- (2,1,2);
\draw (2,0,0) -- (3,0,0);
\draw (2,0,1) -- (3,0,1);
\draw (2,0,2) -- (3,0,2);
\draw (2,1,0) -- (3,1,0);
\draw (2,1,1) -- (3,1,1);
\draw (2,1,2) -- (3,1,2);
\draw (0,0,0) -- (0,1,0);
\draw (0,0,1) -- (0,1,1);
\draw (0,1,0) -- (0,2,0);
\draw (0,1,1) -- (0,2,1);
\draw (0,2,0) -- (0,3,0);
\draw (0,2,1) -- (0,3,1);
\draw (1,0,1) -- (1,1,1);
\draw (1,1,1) -- (1,2,1);
\draw (1,2,1) -- (1,3,1);
\draw (2,2,0) -- (2,2,1);
\draw (2,2,1) -- (2,2,2);
\draw (2,2,2) -- (2,2,3);
\draw[loosely dashed] (2,3,0) -- (2.5,3,0);
\draw (2,2.75,0) -- (2.5,2.75,0);
\node[anchor=west] at (2.6,3,0) {1/3};
\node[anchor=west] at (2.6,2.75,0) {2/3};
\end{scope}
\foreach \x in {0,1,2}
\foreach \y in {0,1,2}
\foreach \z in {0,1,2}
\fill (\x,\y,\z) circle (2pt);
\node at (3.5,0,0) {$s_1$};
\node at (0,3.5,0) {$s_2$};
\node at (0,0,4) {$s_3$};
\end{tikzpicture}
\caption{Plots of two randomized allocation rules $\allocs$, with $n=3$ and $k=3$. For all $i\in[n]$ and $\sigs\in[k]^n$, the variable $x_i(\sigs)$ is represented by the edge $(s_i,\osigs)$ --- $(s_i+1,\osigs)$, and its value is specified by the type of line and the legend. Observe that both allocations satisfy the monotonicity condition of \Cref{lem:truthful} with monotone value functions: for all $\sigs\in[k]^n$ and for all $i\in[n]$ such that $s_i < k$, we have $x_i(s_i, \osigs) \leq x_i(s_i+1, \osigs)$.
}
    \label{fig:fractional}
\end{figure}
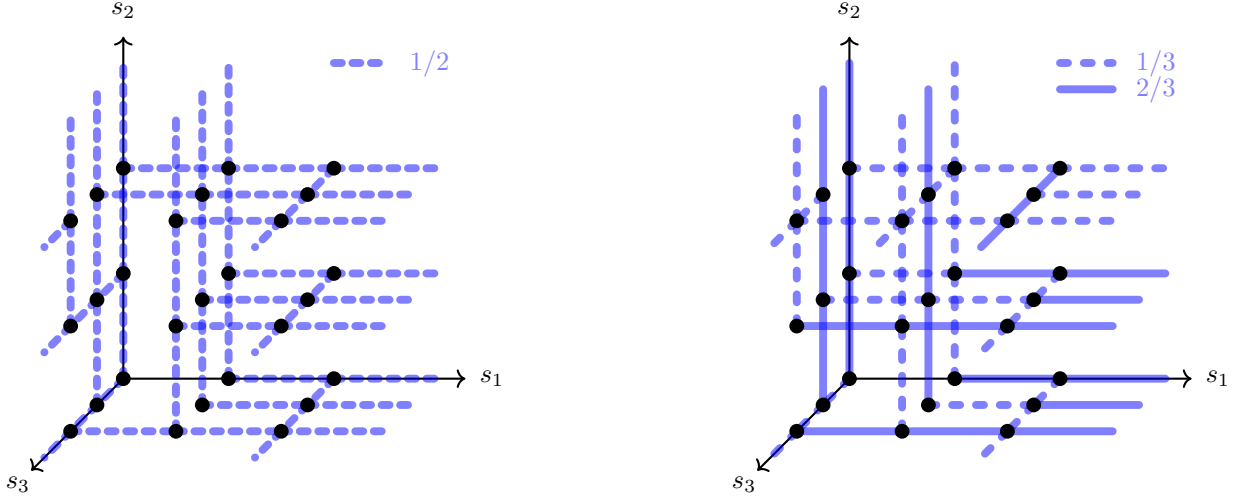

%%%%%%%%%%%%%%%%%%%%%%%%%%%%%%%%%%%%%%%%%%%%%%%%%%%%%%%%%%%%%%%
\subsubsection{Refined Analysis for Two Agents}
\label{sssec:two-agents}

From the analysis in \Cref{sssec:LP,sssec:integrality}, 
we have seen that when $n=2$ we can use linear programming to compute the optimal ratios $\Rvopt$, $\Rcopt$ with \Cref{thm:lp}, and $\Rdoptv$ with \Cref{cor:poly_deter_n2k2}. However, this approach does not directly provide a precise bound on the time complexity. 
In this section, we present a more refined analysis of computing $\Rdoptv$ in this special case by reducing the problem to $2$-SAT, which yields a nearly tight bound on its time complexity.
We will next introduce a directed acyclic graph (DAG) and the concept of conflicting pairs, which enable a precise characterization of $\Rvopt$, $\Rcopt$ and $\Rdoptv$. Leveraging this concept, we develop a dynamic programming  (DP) method to compute the optimal ratios, and a greedy algorithm to obtain the corresponding mechanisms, which leads to a solution of \PbV, \PbC and \PbD in quasi-linear time. 

In the two agents setting, there is a trivial but important property : we could formulate the optimization problems of computing $\Rvopt$, $\Rcopt$, and $\Rdoptv$ only by $x_1$, which simplifies the analysis. This property directly follows from the constraints that $x_2(\sigs)=1-x_1(\sigs)$ for all signal profiles $\sigs$. For this reason, non-decreasing constraints of $x_2$ with respect to $v_2$ can be replaced by non-increasing constraints of $x_1$. We will introduce directed edges between signal profiles, which represent the monotonicity constraints of $x_1$. Due to this property, we will refer to both $x_1$ and $\allocs$ as allocation unless specified otherwise.

\paragraph{Directed graph (DG) preprocessing procedure.} Observe that although $\sigmas$ could fully describe the monotonicity constraints for a given input $\vals$, it may contain redundancy. 
We therefore introduce the following directed graph (DG) preprocessing procedure, which, given input $\vals$, contains the following two steps.
\begin{itemize}
    \item Sort values and partition signals. For a given $i$ and $\osigs$, we sort the values $\{v_i(s_i,\osigs)\}_{s_i\in[k]}$ in non-decreasing order and merge all signal profiles with the same value into blocks $$B_i^{1}(\osigs), B_i^{2}(\osigs), \dots, B_i^{b_i(\osigs)}(\osigs),$$
    where $b_i(\osigs)$ is the number of different values in $\{v_i(s_i,\osigs)\}_{s_i\in[k]}$.
    \item Add dummy vertices and construct a directed graph. We view signal profiles as vertices, introduce dummy vertices to reduce the number of directed edges, and build a directed graph. More precisely, we construct $DG(\vals)$ as follows:
\begin{align*}
         V(\vals) :=&\; [k]^2 
       \cup 
       \bigl\{ a_i^j(\osigs) \,\bigm|\, i\in[2], \osigs \in [k],\, j \in [b_i(\osigs)-1] \bigr\}, \\
       E(\vals) :=&\;
        \bigl\{ (\sigs, a_1^j(s_2)) \,\bigm|\, \sigs \in [k]^2\text{ and }j\in[b_1(s_2)-1]\text{ such that } \sigs \in B_1^j(s_2) \bigr\},\\
        \cup&\;
        \bigl\{ (a_1^j(s_2), \sigs) \,\bigm|\, \sigs\in [k]^2\text{ and }j\in[b_1(s_2)-1]\text{ such that }\sigs \in B_1^{j+1}(s_2) \bigr\},\\
        \cup&\;
        \bigl\{ (\sigs, a_2^j(s_1)) \,\bigm|\, \sigs \in [k]^2\text{ and }j\in[b_2(s_1)-1]\text{ such that }\sigs \in B_2^{j+1}(s_1) \bigr\},\\
        \cup&\;
        \bigl\{ (a_2^j(s_1), \sigs) \,\bigm|\, \sigs \in [k]^2\text{ and }j\in[b_2(s_1)-1]\text{ such that }\sigs \in B_2^{j}(s_1) \bigr\}.
    \end{align*}
\end{itemize}
We observe the following lemma:
\begin{lemma}
    \label{lm:dgpre}
    If $n=2$, one can conduct the DG procedure for an input $\vals$ in time $O(k^2\log k)$, and the resulting directed graph $DG(\vals)$ contains $O(k^2)$ vertices and $O(k^2)$ edges. 
\end{lemma}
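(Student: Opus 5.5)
The plan is a direct count of vertices and edges, followed by a bound on the running time of the sorting step. The argument is organized by agent $i\in[2]$ and by the other agent's signal $\osigs\in[k]$; since $n=2$, there are $2k$ such pairs $(i,\osigs)$.

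\textbf{Sorting step.} Fix a pair $(i,\osigs)$. The values $\{v_i(s_i,\osigs)\}_{s_i\in[k]}$ form a list of $k$ numbers. Sorting them takes $O(k\log k)$ comparisons. A single linear scan then groups equal values into the blocks $B_i^1(\osigs),\dots,B_i^{b_i(\osigs)}(\osigs)$ and determines $b_i(\osigs)\le k$. Summing over the $2k$ pairs gives $O(k^2\log k)$ time for this step. Here we take comparisons of $b$-bit numbers to cost $O(1)$, consistent with how the paper's stated bounds express complexity in terms of $k$. If one prefers to count bit operations, the extra factor $b$ is absorbed into $N=2k^2b$.

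\textbf{Vertex count.} The original vertices are the $k^2$ signal profiles. The dummy vertices number $\sum_{i,\osigs}(b_i(\osigs)-1)\le 2k(k-1)$. So $|V(\vals)|=O(k^2)$.

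\textbf{Edge count.} This is the only point needing care, because a naive encoding of $\sigmas$ would connect every pair of blocks and could produce $\Theta(k^3)$ edges. In $DG(\vals)$, however, each signal profile $\sigs$ belongs to exactly one block $B_1^{j}(s_2)$ of its row and exactly one block $B_2^{j'}(s_1)$ of its column. By the four defining sets in $E(\vals)$, $\sigs$ is incident to at most two dummy vertices for agent~1, namely $a_1^{j-1}(s_2)$ and $a_1^{j}(s_2)$, and at most two for agent~2. Hence $\sigs$ has degree at most $4$. Every edge joins a profile to a dummy vertex, so $|E(\vals)|\le 4k^2=O(k^2)$.

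\textbf{Construction time.} Once the blocks are known, the edge lists are produced in one pass over each sorted list. This costs $O(k)$ per pair $(i,\osigs)$, so $O(k^2)$ in total, and is dominated by the sorting step.
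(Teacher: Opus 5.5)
Your proof is correct and follows the same route as the paper: the running time is dominated by sorting the $2k$ lists of $k$ values, the vertex bound comes from $b_i(\osigs)\le k$, and the edge bound comes from noting that every edge has a signal profile at one endpoint and each profile lies on at most $4$ edges. Your version adds detail the paper leaves implicit, namely which dummy vertices a profile is adjacent to and the fact that the edge lists can be built in linear time once the lists are sorted.
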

\begin{proof}
    The time complexity of the DG procedure is mainly due to the sorting step, which can be done in time $O(k^2\log k)$ by well-known algorithms such as Heapsort or Merge Sort; see any standard algorithms textbook (e.g., \cite{CLRS}). %\MR{TODO: add cite here, the sorting algo}. 
    The number of vertices follows from the fact that for any $i$ and $\osigs$, we have $b_i(\osigs)\le k$. Note that each edge contains a signal profile as one end, and each signal profile appears in at most $4$ edges, which implies that the number of edges is $O(k^2)$.
\end{proof}

We claim further that the truthful constraints can be fully expressed by $DG(\vals)$: $x_1\cfpo\le x_1\cfpt$ if and only if there is a path from $\cfpo$ to $\cfpt$, which directly follows from the fact that the absence of dummy vertices does not add or delete any connectivity between two signal profiles compared to the natural directed graph without adding dummy vertices (which we do not present). We denote the relation that there is a path from $\cfpo$ to $\cfpt$, where $\cfpo\ne\cfpt$, by $\cfpo\prec\cfpt$. Now we are ready to present how to use this procedure to adapt the formulation of \Cref{lem:satisfiability} to the special two agents setting.

%%%%%%%%%%%%%%%%%%%%%%%%%%%%%%%%%%%%%%%%%%%%%%%%%%%%%%%%%%%%%%%
\paragraph{Reduction to $2$-SAT.}

To compute $\Rd^*(\vals)$, we saw in \Cref{sssec:ILP} that one can run a binary search, and check if there exists a deterministic allocation rule which achieves the target approximation ratio. Because in the deterministic case the variables $x_i(\sigs)$ are in $\{0,1\}$, one can encode the probability and monotonicity constraints as a boolean formula, as we have seen in \Cref{lem:satisfiability}. In general, checking the satisfiability of a formula is NP-Hard. However, when $n=2$ the resulting formula is simple and can be solved in linear time~\cite{AspvallPT79}.

\begin{proposition}
    \label{prop:twoagents}
    If $n=2$, one can compute in $O(k^2 \log k)$ the optimal deterministic ratio $\Rdoptv$.
\end{proposition}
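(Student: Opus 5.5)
We plan to reduce, for a fixed threshold $\gamma$, the existence of a deterministic $\gamma$-approximate truthful rule to a reachability question in the graph $DG(\vals)$. We then avoid the binary search of \Cref{lem:binary-search} by computing the critical $\gamma$ directly with a dynamic program on the condensation of $DG(\vals)$.

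First, by \Cref{lm:dgpre} we build $DG(\vals)$ in $O(k^2\log k)$ time; it has $O(k^2)$ vertices and edges. For $n=2$, \Cref{lem:satisfiability} says that $\Rdoptv\le\gamma$ if and only if the following implication system is satisfiable. The monotonicity constraints are exactly the implications $x_1(\sigs)\Rightarrow x_1(\sigs')$ for every path $\sigs\prec\sigs'$ in $DG(\vals)$, with dummy vertices treated as auxiliary variables that preserve reachability. We also impose the unit clauses $x_1(\sigs)=1$ whenever $\rho_2(\sigs)<1/\gamma$, and $x_1(\sigs)=0$ whenever $\rho_1(\sigs)<1/\gamma$.

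The key claim is that such a system (only forward implications plus unit clauses) is satisfiable if and only if no $\sigs$ forced to $1$ reaches some $\sigs'$ forced to $0$. Such a pair includes the case $\sigs=\sigs'$, which cannot happen because some $\rho_i(\sigs)=1$. For the "if" direction, set $x_1=1$ exactly on the vertices reachable from the forced-$1$ set. This set is closed under implication, and by assumption it avoids the forced-$0$ set. Hence
$$\Rdoptv=\max\Bigl(1,\ \max_{\sigs\preceq\sigs'}\min\bigl(1/\rho_2(\sigs),\,1/\rho_1(\sigs')\bigr)\Bigr).$$
Indeed, the pair $(\sigs,\sigs')$ blocks every $\gamma$ below this min, and conversely every $\gamma$ at or above the maximum admits a feasible assignment. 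The reflexive case $\sigs=\sigs'$ gives at most $1$ in the formula, since one of the two ratios equals $1$.

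It remains to evaluate this max-min efficiently. We compute the strongly connected components of $DG(\vals)$ in linear time (Tarjan) and get the DAG of components. For each component $C$ we define $m(C)$ as the maximum of $1/\rho_1(\sigs')$ over real signal profiles $\sigs'$ reachable from $C$. We compute it in reverse topological order as $m(C)=\max(\text{local value},\max_{C\to D} m(D))$. The answer is then the maximum over real profiles $\sigs$ of $\min(1/\rho_2(\sigs), m(C(\sigs)))$. This is correct because $\min(a,\cdot)$ is monotone, so maximizing $\min(a,b)$ over reachable $b$ amounts to taking the largest reachable $b$. This step is $O(k^2)$, so the total cost is dominated by the sorting in the preprocessing, giving $O(k^2\log k)$.

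The main obstacle is verifying that reachability in $DG(\vals)$ among real profiles exactly encodes the monotonicity constraints, including those of agent $2$ rewritten as reverse constraints on $x_1$, and that ties (merged blocks) and dummy vertices create no spurious paths. We would verify this block by block: an edge chain through $a_i^j$ links $B_i^j$ to $B_i^{j+1}$ only, so paths correspond exactly to chains of strict value increases in $\sigma_i$.
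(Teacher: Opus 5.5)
Your proof is correct, but for this proposition it follows a different route from the paper's.

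\textbf{The paper's route.} It keeps the threshold as a parameter. For each candidate threshold it writes the problem on $DG(\vals)$ as a 2-SAT formula and solves it with the generic strongly-connected-component 2-SAT algorithm of Aspvall, Plass and Tarjan in $O(k^2)$ time. It wraps this in the binary search of \Cref{lem:binary-search} over the $O(k^2)$ candidate values $1/\rho_i(\sigs)$. That gives $O(\log k)$ rounds of $O(k^2)$ work on top of the $O(k^2\log k)$ preprocessing.

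\textbf{Your route.} You observe that, after eliminating $x_2$, every clause is either a unit clause or an implication between two positive literals. Satisfiability therefore reduces to checking that no forced-$1$ profile reaches a forced-$0$ profile. This gives the closed form $\Rdoptv=\max_{\sigs\preceq\sigs'}\min\bigl(1/\rho_2(\sigs),1/\rho_1(\sigs')\bigr)$, which a single dynamic program on the condensation evaluates.

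This is exactly the paper's later conflicting-pair characterization, i.e.\ the deterministic case of \Cref{lm:alpha_ncf}, together with the dynamic program in its proof of \Cref{thm:twoagents}. The only difference there is direction: the paper propagates the minimum of $\rho_2$ forward over predecessors, while you propagate the maximum of $1/\rho_1$ backward over successors, which is equivalent.

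\textbf{What each buys.} Your version does $O(k^2)$ work after sorting, with no search phase. This is what underlies the $O(N)$ bound for monotone instances in \Cref{cor:monocfp}. It also yields an explicit formula and a mechanism directly: take the reachability closure of the forced-$1$ set at the optimal threshold. The paper's version is more black-box. It needs no structural insight beyond ``all clauses have size two,'' at the price of a logarithmic number of 2-SAT calls. That factor is harmless here only because sorting already costs $O(k^2\log k)$.
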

\begin{proof}
Given an input $\vals$, we conduct the DG preprocessing procedure, which results in a directed graph $DG(\vals)$ with dummy vertices. We define the following constraints for a given parameter $\alpha\geq 1$, variables $\tilde{x}_1 \in \{0,1\}^{|V(\vals)|}$ which represents the allocation on all $v\in V(\vals)$:
    \begin{align*}
        \forall(v^{\prime},v)\in E(\vals),\qquad
        &\neg \tilde{x}_1(v^{\prime})\vee \tilde{x}_1(v),\\
        \forall v\in[k]^2,\text{ s.t. }\rho_1(v) < 1/\alpha,\qquad& \neg \tilde{x}_1(v),\\
        \forall v\in[k]^2,\text{ s.t. }\rho_2(v) < 1/\alpha,\qquad&  \tilde{x}_1(v).
    \end{align*}
    
    We claim that the existence of a mechanism with approximation ratio at most $\alpha$ is equivalent with the satisfiability of the above formula. Indeed, if we have a feasible solution $\tilde{x}_1$ for the formula, then restrict $\tilde{x}_1$ to $[k]^2$ induces a feasible allocation. If, on the other hand, there exists a feasible allocation $x_1$, it's easy to check that augmenting $x_1$ by setting $x_1(a_1^j(s_2))$ (resp. $x_1(a_2^j(s_1))$) as $\max_{\sigs\in B_1^j(s_2)}x_1(\sigs)$ (resp. $\max_{\sigs\in B_2^{j+1}(s_1)}x_1(\sigs)$) results in a feasible solution for the formula.
    
    Observe that all clauses have size at most $2$ in the above formula, which is a special case of boolean satisfiability, named 2-SAT, for which a satisfying assignment can computed in linear time, for example by transforming each clause into an implication between two literals, and computing the strongly-connected-components of the resulting directed graph~\cite{AspvallPT79}. We further note that we have $O(k^2)$ clauses, which follows directly from \Cref{lm:dgpre}.

    By \Cref{lm:dgpre}, the time complexity of conducting DG procedure is $O(k^2\log k)$. Then we run a binary search on $\Rdoptv$, which performs $O(\log k)$ queries solving a 2-SAT instance in time $O(k^2)$. To conclude, reduction to 2-SAT gives us an algorithm in time $O(k^2\log k)$.
\end{proof}

We now turn to another algorithm \Cref{alg:0 as possible}, which also computes $\Rdoptv$ in $O(k^2\log k)$. Although this algorithm does not improve the time complexity for the computing $\Rdoptv$, we nevertheless present it for several reasons. First, its randomized variant can address the computation of $\Rvopt$ and $\Rcopt$, which cannot be solved by 2-SAT. Second, with monotonicity value or cost, the algorithm achieves a better time complexity bound compared to \Cref{prop:twoagents}. 
Finally, the key concept underlying this algorithm, the conflicting pair, provides an exact characterizations of $\Rvopt$, $\Rcopt$, and $\Rdoptv$. which is of independent interest. 

\paragraph{Directed acyclic graph (DAG) preprocessing procedure} To facilitate our analysis, we introduce the following directed acyclic graph (DAG) preprocessing procedure, building on the DG procedure, and taking $\vals$ as input.
\begin{itemize}
    \item Run the DG procedure. We obtain a directed graph $DG(\vals)$ satisfying \Cref{lm:dgpre}.
    \item Identify all strongly connected components (SCC) in the graph $DG(\vals)$ and contract each component into a single vertex, and let 
    $m(\vals)$ be the total number of components. The edges are contracted accordingly, connecting two components whenever there exists an edge between their corresponding vertices in $DG(\vals)$. The resulting graph is the condensation of $DG(\vals)$, which is a directed acyclic graph (DAG). We refer to it as $DAG(\vals)$. With a slight abuse of notation, we denote the vertex set (resp. edge set) still by $V(\vals)$ (resp. $E(\vals)$).
    \item Perform a topological sort on $DAG(\vals)$, which yields a linear order of its vertices and naturally induces a hierarchical structure. We relabel the  vertices of $DAG(\vals)$, which correspond to strongly connected components of $DG(\vals)$, and denote them by $C^{\mathrm{sc}}_1, C^{\mathrm{sc}}_2, \ldots, C^{\mathrm{sc}}_{m(\vals)}$
\end{itemize}

We present the properties of the DAG procedure in the following lemma.
\begin{lemma}
    \label{lm:dagpre}
    If $n=2$, one can conduct the DAG procedure for an input $\vals$ in time $O(k^2\log k)$, and the resulting directed graph $DAG(\vals)$ contains $O(k^2)$ vertices and $O(k^2)$ edges. 
\end{lemma}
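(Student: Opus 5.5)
The plan is to account separately for the three steps of the DAG procedure, relying on \Cref{lm:dgpre} for the first one.

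\textbf{Step 1 (DG procedure).} By \Cref{lm:dgpre}, running the DG procedure on $\vals$ takes $O(k^2\log k)$ time. The resulting graph $DG(\vals)$ has $O(k^2)$ vertices and $O(k^2)$ edges. This step dominates the overall running time.

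\textbf{Step 2 (condensation).} We compute the strongly connected components of $DG(\vals)$ with Tarjan's algorithm, or with Kosaraju's two-pass depth-first search. Either runs in time linear in the graph size, i.e.\ $O(|V(\vals)|+|E(\vals)|)=O(k^2)$. We then build the condensation. Label each vertex by the index of its component. Scan every edge $(u,w)$ of $DG(\vals)$ and emit the arc $(\mathrm{comp}(u),\mathrm{comp}(w))$ whenever $\mathrm{comp}(u)\neq \mathrm{comp}(w)$. This costs $O(k^2)$ time. It yields at most $|E(DG(\vals))|=O(k^2)$ arcs, and at most $|V(DG(\vals))|=O(k^2)$ component vertices.

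Duplicate arcs may appear. They can be kept, since they do not change reachability and still respect the $O(k^2)$ bound. Alternatively they can be removed in linear time by bucket-sorting the arc pairs, because component indices lie in $[m(\vals)]$ with $m(\vals)=O(k^2)$.

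The condensation is acyclic by the standard property of strongly connected components. Contracting them also preserves the reachability relation $\prec$ between signal profiles in different components.

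\textbf{Step 3 (topological sort).} A topological order of $DAG(\vals)$ is obtained by Kahn's algorithm, or by DFS finishing times, in $O(m(\vals)+|E|)=O(k^2)$. Tarjan's algorithm in fact already outputs the components in reverse topological order, so this step is essentially free.

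\textbf{Conclusion.} Summing the three steps gives $O(k^2\log k)+O(k^2)+O(k^2)=O(k^2\log k)$, with $O(k^2)$ vertices and edges in $DAG(\vals)$.

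There is no real obstacle in this argument. The only point needing care is that the edge count of the condensation is bounded by that of $DG(\vals)$. This bound holds because of the dummy vertices introduced in the DG procedure: they keep $|E(DG(\vals))|$ at $O(k^2)$, rather than the $O(k^3)$ that pairwise monotonicity edges would give.
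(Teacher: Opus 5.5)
Your proof is correct and follows the same route as the paper. You invoke \Cref{lm:dgpre} for the $O(k^2\log k)$ DG step, then run Tarjan's linear-time SCC algorithm and a topological sort on the condensation, whose vertex and edge counts are inherited from $DG(\vals)$. Your extra detail on emitting condensation arcs and handling duplicate arcs simply makes explicit what the paper dismisses as trivial.
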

\begin{proof}
    We identify all strongly connected components (SCCs) using Tarjan's linear-time algorithm~\cite{Tarjan1972}, and then perform a topological sort on the resulting condensation graph~\cite[Section~22.4]{CLRS}, which in total take $O(N)$ time.
    Thus, the time complexity of DAG procedure is same as the one of the DG procedure, which is $O(k^2\log k)$. The results for sizes of vertices and edges are trivial, as the graph $DAG(\vals)$ is the condensation of $DG(\vals)$.
\end{proof}

Recall that we denote by $\cfpo\prec\cfpt$ the relation that there is a path from $\cfpo$ to $\cfpt$ in the graph $DG(\vals)$, with $\cfpo\ne\cfpt$. We observe that if $(s_1,s_2)\prec (s_1',s_2')$, 
then either $(s_1,s_2)$ and $(s_1,s_2)$ are in the same component, or the component of $(s_1,s_2)$ appears before that of $(s_1',s_2')$ in the topological order.
Note that all signal profiles within a component $C^{sc}_{j}$ must have a same allocation, which we denote by $x_1(C^{sc}_{j})$.
And we further define a notation $\prec$ over components: $C^{sc}_{j}\,\prec \,C^{sc}_{j^{\prime}}$, if there exists an edge from $C^{sc}_{j}$ to $C^{sc}_{j^{\prime}}$ in $DAG(\vals)$.

Now we are ready to present the conflicting pair characterizations of $\Rvopt$, $\Rcopt$, and $\Rdoptv$ and the algorithms that are based on them.
\paragraph{Conflicting Pairs.}

We first state the our main result in the following theorem.
\thmtwoagents*

We introduce the conflict function to define conflict pair.
\begin{align*}
    f(u,v) :=
\begin{cases}
    1 & \text{if } u=v = 1, \\
    \frac{uv-1}{u+v-2} & \text{otherwise}.
\end{cases}
\end{align*}
One can check that the following inequalities hold:
\begin{equation}
\label{eq:ineq-conflict}
\forall u,v \in(0,1],\qquad 
\frac{1}{f(u,v)} \leq f\left(\frac{1}{u}, \frac{1}{v}\right) \leq \min\left(\frac{1}{u}, \frac{1}{v}\right),
\end{equation}
where the first inequality is direct after the change of variable $(x,y) = (uv, u+v)$, and the second follows from the fact that $f$ is non-decreasing in each coordinate, with $\lim_{u\rightarrow+\infty} f(u,v) = v$.
Now we are ready to present the key notion, conflict pair, in the following:
\begin{definition}
\label{def:cp}
Given $\alpha \geq 1$ and a pair of signal profiles $\cfpo\prec \cfpt$, we say that:
\begin{itemize}
    \item $\cfpo$ and $\cfpt$ form an $\alpha$-value conflict pair if 
$$
    \frac{1}{f(\rho_2\cfpo, \rho_1\cfpt)}>\alpha,
$$
    \item $\cfpo$ and $\cfpt$ form an $\alpha$-cost conflict pair if 
$$
    f\left(\frac{1}{\rho_2\cfpo}, \frac{1}{\rho_1\cfpt}\right)>\alpha,
$$
    \item $\cfpo$ and $\cfpt$ form an $\alpha$-deterministic conflict pair if 
$$
    \min\left(\frac{1}{\rho_2\cfpo}, \frac{1}{\rho_1\cfpt}\right)>\alpha,
$$
\end{itemize}
In particular, using~\Cref{eq:ineq-conflict}, under the same performance ratios $\rhos$, if $\cfpo\prec \cfpt$ is an $\alpha$-value conflict pair then it is an $\alpha$-cost conflict pair; and if it is an $\alpha$-cost conflict pair then it is an $\alpha$-deterministic conflict pair.
\end{definition}

\begin{figure}[h!]
    \centering
    \begin{tikzpicture}[scale=2,>=stealth]
        \small
        \begin{scope}[black!20!white]
        \fill plot[smooth cycle] coordinates {(0.6,1.6)(3.4,1.6)(3.4,3.4)(1.6,3.4)(1.6,2.4)(0.6,2.4)};
        \fill (1,3) circle (.4cm);
        \fill (1,1) circle (.4cm);
        \fill (2,1) circle (.4cm);
        \fill (3,1) circle (.4cm);
        \end{scope}
        \node[circle,draw,fill=red!50!white,minimum size=1.2cm] at (2,1) {$\scriptstyle\!\!(s_1, s_2)\!\!$};
        \node[circle,draw,fill=red!50!white,minimum size=1.2cm] at (1,3) {$\scriptstyle\!\!(s_1', s_2')\!\!$};
        \foreach \i in {1,2,3}
        \foreach \j in {1,2,3}
        \node[draw,minimum size=1.2cm,circle] (\i\j) at (\i,\j) {};
        \begin{scope}[->]
        \draw (12) to [bend right=35] (32);
        \draw (32) -- (33);
        \draw (33) -- (23);
        \draw (23) -- (22);
        \draw[very thick,red] (22) -- (12);
        \draw (22) -- (32);
        \draw (31) to [bend right=35] (33);
        \draw[very thick,red] (21) -- (22);
        \draw (23) -- (13);
        \draw (33) to [bend right=35] (13);
        \draw (11) to [bend left=35] (13);
        \draw (11) -- (12);
        \draw[very thick,red] (12) -- (13);
        \end{scope}
    \end{tikzpicture}
    \hfill
    \begin{minipage}[b]{6.8cm}
    Given $\cfpo$ and $\cfpt$ such that
    \begin{itemize}
    \item $\rho_2(s_1, s_2) = 0.4$,
    \item $\rho_1(s_1', s_2') = 0.5$,
    \end{itemize}
    we have that
    \begin{itemize}
        \item $1/f(\rho_2(s_1, s_2), \rho_1(s_1', s_2')) = 1.375$,
        \item $f(1/\rho_2(s_1, s_2), 1/\rho_1(s_1', s_2')) = 1.6$,
        \item $\min(1/\rho_2(s_1, s_2), 1/\rho_1(s_1', s_2')) = 2$,
    \end{itemize}
    which are respectively lower bounds on $\Rv^*(\rhos)$, $\Rc^*(\rhos)$ and $\Rd^*(\rhos)$.
    \end{minipage}
    \caption{Signal profiles with $n=2$ agents. Given $\cfpo\in[k]^2$, the set of $\cfpt\in[k]^2$ such that $\cfpo\prec\cfpt$ are all signals profiles which are reachable by using one or several arcs.}
    \label{fig:twoagents}
\end{figure}

The notion of conflict pair is illustrated in \Cref{fig:twoagents}. Intuitively, an $\alpha$ conflict pair provides a lower bound of $\alpha$ on the corresponding approximation ratio. Our main result, is that the combination of these lower bounds are tight. When applied to the special case of monotone value (or cost) with respect to signals, our characterization generalizes the \emph{$\alpha$-single crossing} condition from \cite{EdenFFG18}, which was originally defined only for monotone value functions.

\begin{definition}
    \label{alpha_SC}
    \textrm{(adapted from \cite{EdenFFG18})} Given $\alpha\geq 1$, a value setting is said to be $\alpha$-single crossing if for all $i\in[n]$, $\sigs\in[k]^n$, and $\tilde{s_i}\in[k]$ such that $\tilde{s_i}\ge s_i$, we have
    $$
    \forall j\neq i,\qquad \alpha\cdot(v_i(\tilde{s_i},\sigs_{-i})-v_i({s_i},\sigs_{-i}))\ge  v_j(\tilde{s_i},\sigs_{-i})-v_j({s_i},\sigs_{-i}),
    $$
    and a cost setting is said to be $\alpha$-single crossing if for all $i\in[n]$, $\sigs\in[k]^n$, and $\tilde{s_i}\in[k]$ such that $\tilde{s_i}\ge s_i$, we have
    $$
    \forall j\neq i,\qquad\alpha \cdot (c_i(\tilde{s_i},\sigs_{-i})-c_i({s_i},\sigs_{-i}))\le c_j(\tilde{s_i},\sigs_{-i})-c_j({s_i},\sigs_{-i}).
    $$
\end{definition}
We show that the absence of conflict pairs strictly generalizes the $\alpha$-single crossing condition. 
\begin{proposition}\label{prop:single-crossing}
    When $n =2$, for a given $\alpha\ge 1$ if a value or a cost setting is monotone and $\alpha$-single crossing, then there is no $\alpha$-deterministic conflict pair. The converse does not hold when $\alpha>1$.
\end{proposition}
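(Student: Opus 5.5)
The plan is to argue by contradiction through a single ``corner'' signal profile, treating the value setting first and then mirroring it for costs. Assume the values are monotone non-decreasing and $\alpha$-single crossing. Suppose $(s_1,s_2)\prec(s_1',s_2')$ is an $\alpha$-deterministic conflict pair, that is,
\[
v_1(s_1,s_2) > \alpha\, v_2(s_1,s_2)\qquad\text{and}\qquad v_2(s_1',s_2') > \alpha\, v_1(s_1',s_2').
\]
Here I use that $\rho_2(\sigs)<1/\alpha$ forces the maximum at $\sigs$ to be $v_1$, and symmetrically for $\rho_1$.

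\textbf{Step 1: locate the endpoints.} First I would show that $(s_1,s_2)\prec(s_1',s_2')$ implies $s_1\le s_1'$ and $s_2\ge s_2'$. With monotone values, an edge of $DG(\vals)$ (through a dummy vertex) either raises $s_1$ at fixed $s_2$, since $x_1$ is non-decreasing along $\sigma_1$, or lowers $s_2$ at fixed $s_1$, since $x_1$ is non-increasing along $\sigma_2$. Ties only remove edges, so the claim follows by induction along the path.

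\textbf{Step 2: the corner argument.} Take the corner $\mathbf t=(s_1,s_2')$ and apply the $\alpha$-single-crossing inequality twice.
\begin{itemize}
\item For $i=2$, comparing $s_2'\le s_2$ at fixed $s_1$: $\alpha\bigl(v_2(s_1,s_2)-v_2(\mathbf t)\bigr)\ge v_1(s_1,s_2)-v_1(\mathbf t)$.
\item For $i=1$, comparing $s_1\le s_1'$ at fixed $s_2'$: $\alpha\bigl(v_1(s_1',s_2')-v_1(\mathbf t)\bigr)\ge v_2(s_1',s_2')-v_2(\mathbf t)$.
\end{itemize}
Rearrange the first as $v_1(\mathbf t)-\alpha v_2(\mathbf t)\ge v_1(s_1,s_2)-\alpha v_2(s_1,s_2)>0$. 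Rearrange the second as $v_2(\mathbf t)-\alpha v_1(\mathbf t)\ge v_2(s_1',s_2')-\alpha v_1(s_1',s_2')>0$. Chaining the two gives $v_2(\mathbf t)>\alpha v_1(\mathbf t)>\alpha^2 v_2(\mathbf t)\ge v_2(\mathbf t)$, which is a contradiction since $\alpha\ge1$ and $v_2(\mathbf t)>0$.

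The cost case is identical. All inequalities are reversed, $c_2(s_1,s_2)>\alpha c_1(s_1,s_2)$ and $c_1(s_1',s_2')>\alpha c_2(s_1',s_2')$, and the same corner yields $c_2(\mathbf t)>\alpha c_1(\mathbf t)>\alpha^2 c_2(\mathbf t)$. The main obstacle is choosing the right corner. The other corner $(s_1',s_2)$ does not close the chain for $\alpha>1$, so the corner must be the one at which each single-crossing inequality is anchored on the endpoint where the ``dominating'' agent's value is large.

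\textbf{Converse.} For the converse I would give a $k=2$ counterexample that is monotone but has no edges in the constraint graph. Let $v_1\equiv 1$ be constant, and let $v_2(s_1,s_2)$ equal $1$ if $s_1=1$ and $10\alpha$ if $s_1=2$. Then $v_2$ is constant in $s_2$ and $v_1$ is constant everywhere, so every $\sigma_i$ is empty. Hence $\prec$ is empty and there are no conflict pairs of any kind. However, single crossing with $i=1$, $j=2$ requires $\alpha\cdot 0\ge 10\alpha-1$, which fails. A cost analogue is obtained by taking reciprocals.
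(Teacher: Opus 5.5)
Your value-case argument is the paper's proof: the same corner $\mathbf t=(s_1,s_2')$, the same two single-crossing inequalities, and the same contradiction. Your Step~1 gives the correct direction $s_1\le s_1'$, $s_2\ge s_2'$. The paper's text states the reverse, but its displayed inequalities use your direction. Your converse example is also valid. It differs from the paper's: it has empty $\prec$, and it fails single crossing even at $\alpha=1$.

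\textbf{The gap is the cost case.} The paper only writes out the value case, and the cost case is \emph{not} identical with the same corner. Cost single crossing for agent $i$ says $c_j-\alpha c_i$ is non-decreasing in $s_i$. Value single crossing says $v_j-\alpha v_i$ is non-increasing in $s_i$. So for costs, dominance propagates toward larger signals, not smaller ones. At $\mathbf t$, the two inequalities you would use only give
\[
\alpha c_2(\mathbf t)-c_1(\mathbf t)\ge \alpha c_2(s_1,s_2)-c_1(s_1,s_2)>0,\qquad \alpha c_1(\mathbf t)-c_2(\mathbf t)\ge \alpha c_1(s_1',s_2')-c_2(s_1',s_2')>0.
\]
These say only $c_1(\mathbf t)<\alpha c_2(\mathbf t)$ and $c_2(\mathbf t)<\alpha c_1(\mathbf t)$, which are compatible when $\alpha>1$.

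Here is a concrete instance. Take $\alpha=2$, $k=2$, and set $c_1(1,2)=1$, $c_2(1,2)=3$, $c_1(2,1)=3$, $c_2(2,1)=1$, $c_1(1,1)=c_2(1,1)=10$. Then $(1,2)\prec(2,1)$ via $(1,1)$, and both conflict conditions hold. Both single-crossing inequalities through $\mathbf t=(1,1)$ also hold. Yet $c_2(\mathbf t)=10<20=\alpha c_1(\mathbf t)$. So your chain $c_2(\mathbf t)>\alpha c_1(\mathbf t)>\alpha^2c_2(\mathbf t)$ cannot be derived.

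The fix is to use the other corner $\mathbf u=(s_1',s_2)$.
\begin{itemize}
\item Single crossing for $i=1$, raising $s_1$ to $s_1'$ at fixed $s_2$, gives $c_2(\mathbf u)-\alpha c_1(\mathbf u)\ge c_2(s_1,s_2)-\alpha c_1(s_1,s_2)>0$.
\item Single crossing for $i=2$, raising $s_2'$ to $s_2$ at fixed $s_1'$, gives $c_1(\mathbf u)-\alpha c_2(\mathbf u)\ge c_1(s_1',s_2')-\alpha c_2(s_1',s_2')>0$.
\end{itemize}
These two conclusions contradict each other, as in the value case. Your remark that the corner $(s_1',s_2)$ ``does not close the chain'' is therefore right for values but exactly backwards for costs.
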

\begin{proof}
    For simplicity, we only provide the proof for the value setting. Note that for monotone value setting, the definition of the partial order $\prec$ becomes: $\cfpo\prec\cfpt$ if and only if $s_1\ge s_1^{\prime}$, $s_2\le s_2^{\prime}$ and $\cfpo\neq \cfpt$.
    Assume for contradiction that there exists an $\alpha$-deterministic conflict pair $\cfpo$ and $\cfpt$, where $\cfpo\prec\cfpt$, which implies that $\max(\rho_2\cfpo,\rho_1\cfpt)<1/\alpha$. Thus, we have $$
    \alpha \cdot v_2\cfpo<v_1\cfpo
    \quad\text{and}\quad
    \alpha\cdot v_1\cfpt<v_2\cfpt.
    $$
    While by $\alpha$-single crossing, we have
    \begin{align*}
    \alpha\cdot (v_1\cfpt-v_1(s_1,s_2^{\prime}))&\ge v_2\cfpt-v_2(s_1,s_2^{\prime})
    \\
    \alpha\cdot(v_2\cfpo-v_2(s_1,s_2^{\prime}))&\ge v_1\cfpo-v_1(s_1,s_2^{\prime}).
    \end{align*}
    From the above inequalities, we obtain that $\alpha\cdot v_1(s_1,s_2^{\prime})< v_2(s_1,s_2^{\prime})$ and $\alpha\cdot v_2(s_1,s_2^{\prime})< v_1(s_1,s_2^{\prime})$, which lead to a contradiction. The inclusion is strict, as illustrated by the following example, where $n= k=2$, which is not $\alpha$-single crossing as $s_2$ has a lot of influence on $v_1$ when $s_1 = 1$, but does not have any $\alpha$-deterministic conflict pair.
    \begin{center}
    \begin{tabular}{|c|c|c|c|c|}
        \cline{1-2}\cline{4-5}
        $v_1(1,1) = 1/\alpha$ & $v_1(1,2) = 1$ &
        &
        $v_2(1,1) = 1/\alpha^2$ & $v_2(1,2) = 1/\alpha^2$
        \\\cline{1-2}\cline{4-5}
        $v_1(2,1) = 1/\alpha$ & $v_1(2,2) = 1$ &
        &
        $v_2(2,1) = 1/\alpha$ & $v_2(2,2) = 1$
        \\\cline{1-2}\cline{4-5}
    \end{tabular}
    \end{center}
\end{proof}
\noindent Now we are ready to state our results.
\begin{lemma}
    \label{lm:alpha_ncf}
    When $n=2$, for all $\alpha\ge 1$, we have that there exists an $\alpha$-approximate in the deterministic (resp. cost or value) setting if and only if there is no $\alpha$-deterministic (resp. $\alpha$-cost or $\alpha$-value) conflict pairs. 
\end{lemma}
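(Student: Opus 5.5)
The plan is to reduce everything to one-dimensional interval constraints on $x_1$ along the reachability relation $\prec$ of $DG(\vals)$. When $n=2$ we have $x_2=1-x_1$. By \Cref{lem:truthful} and the discussion after \Cref{lm:dgpre}, $x_1$ is truthful if and only if $x_1(\sigs)\le x_1(\sigs')$ whenever $\sigs\prec\sigs'$; the relation $\prec$ is transitive since it is reachability.

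Fix a profile $\sigs$ and write $t=x_1(\sigs)$. The ratio constraint at $\sigs$ takes the following form in each setting:
\begin{itemize}
\item value: $t\rho_1(\sigs)+(1-t)\rho_2(\sigs)\ge 1/\alpha$;
\item cost: $t/\rho_1(\sigs)+(1-t)/\rho_2(\sigs)\le\alpha$;
\item deterministic: $t\in\{0,1\}$ together with the same linear constraint.
\end{itemize}
Each constraint is linear in $t$, so its feasible set is an interval $I(\sigs)=[L(\sigs),U(\sigs)]\cap[0,1]$, intersected with $\{0,1\}$ in the deterministic case. This set is nonempty because $\max(\rho_1,\rho_2)=1$ at every profile, so one endpoint $t\in\{0,1\}$ achieves ratio $1$.

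Moreover, if $\rho_1(\sigs)<1$ then $\rho_2(\sigs)=1$ and $L(\sigs)=0$. Symmetrically, if $\rho_2(\sigs')<1$ then $U(\sigs')=1$. When $\rho_1(\sigs)=1$, the interval at $\sigs$ depends only on $u=\rho_2(\sigs)$. When $\rho_2(\sigs')=1$, the interval at $\sigs'$ depends only on $v=\rho_1(\sigs')$.

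\emph{Key claim.} A monotone selection $x_1(\sigs)\in I(\sigs)$ exists if and only if $L(\sigs)\le U(\sigs')$ for all $\sigs\prec\sigs'$. In the deterministic case the condition reads: there is no $\sigs\prec\sigs'$ with $x_1(\sigs)$ forced to $1$ and $x_1(\sigs')$ forced to $0$.

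Necessity of the pairwise condition is immediate from monotonicity. For sufficiency I would take
\[
x_1(\sigs'):=\max\Bigl(\{L(\sigs')\}\cup\{L(\sigs):\sigs\prec\sigs'\}\Bigr).
\]
This is monotone by transitivity of $\prec$, it is at least $L(\sigs')$ by construction, and it is at most $U(\sigs')$ by the pairwise condition. In the deterministic case, $L$ takes values in $\{0,1\}$, so the same formula yields an integral rule: it sets $x_1=1$ exactly on the upward closure of the forced profiles.

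It then remains to show that $L(\sigs)>U(\sigs')$ for some $\sigs\prec\sigs'$ holds exactly when $(\sigs,\sigs')$ is an $\alpha$-conflict pair. By the remark above, a violation requires $\rho_1(\sigs)=\rho_2(\sigs')=1$, so the question depends only on $(u,v)$. The condition $L(\sigs)\le U(\sigs')$ is equivalent to the existence of a common $t\in[0,1]$ feasible at both profiles. Indeed, if $a\le b$ are feasible at $\sigs$ and $\sigs'$ respectively, then $a$ is also feasible at $\sigs'$, because the constraint there becomes looser as $t$ decreases.

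\begin{itemize}
\item In the value case, I maximise $\min\bigl(t+(1-t)u,\;tv+(1-t)\bigr)$ over $t$. The two terms are equalised at $t=(1-u)/(2-u-v)$, and a short computation gives the optimum $(1-uv)/(2-u-v)=f(u,v)$. So a common $t$ exists if and only if $f(u,v)\ge1/\alpha$, i.e.\ there is no $\alpha$-value conflict pair.
\item In the cost case, I minimise $\max\bigl(t+(1-t)/u,\;t/v+(1-t)\bigr)$. The identical computation with $(1/u,1/v)$ in place of $(u,v)$ gives the optimum $f(1/u,1/v)$.
\item In the deterministic case, $t=0$ is feasible at both profiles iff $u\ge1/\alpha$, and $t=1$ iff $v\ge1/\alpha$. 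Hence a common $t$ exists iff $\min(1/u,1/v)\le\alpha$.
\end{itemize}

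The main obstacle is the bookkeeping in this last step. One must check that the equalising $t$ lies in $[0,1]$, and handle the degenerate case $u=v=1$, where $f=1$ by definition. One must also justify that the reduction to $\rho_1(\sigs)=\rho_2(\sigs')=1$ loses nothing, which follows from the observation on $L$ and $U$ above. The monotone construction itself is routine once the pairwise characterization is in place.
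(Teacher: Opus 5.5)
Your proposal is correct and is essentially the paper's argument: the closed form $x_1(\sigs')=\max\bigl(\{L(\sigs')\}\cup\{L(\sigs):\sigs\prec\sigs'\}\bigr)$ is exactly what the paper's greedy ZaP/SaP algorithms produce along the topological order of the SCC condensation, and your necessity direction is the same monotonicity contradiction the paper uses. The differences are presentational: you treat the three settings uniformly, and you actually carry out the equalization computation showing $L(\sigs)\le U(\sigs')$ iff $f(u,v)\ge 1/\alpha$, which the paper only asserts; one small fix is needed, namely to let $L(\sigs)$ denote $\min I(\sigs)$, or to include $0$ in the max as the paper does, so that $x_1\ge 0$.
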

This directly yields a proof of \Cref{thm:twoagents}.
\thmtwoagents*
\begin{proof}Using \Cref{lm:alpha_ncf}, the optimal deterministic and randomized approximation ratios have the following explicit expressions:
\begin{align*}
    R^*_v(\vals) = \max_{\cfpo\prec\cfpt}
    &\;\frac{1}{f(\rho_2\cfpo,\rho_1\cfpt)},\\
    R^*_c(\costs) = \max_{\cfpo\prec\cfpt}
    &\;{f\left(\frac{1}{\rho_2\cfpo},\frac{1}{\rho_1\cfpt}\right)},\\
R^*_d(\vals) = \max_{\cfpo\prec\cfpt}
    &\;\min\left(\frac{1}{\rho_2\cfpo},\frac{1}{\rho_1\cfpt}\right).
\end{align*}
We compute the ratio component-wise according to the topological order $C^{\mathrm{sc}}_1, C^{\mathrm{sc}}_2, \ldots, C^{\mathrm{sc}}_{m(\vals)}$. This can be done because if $\cfpo\prec\cfpt$, where $\cfpo\in C_j^{sc}$ and $\cfpt\in C_{j'}^{sc}$, then  $j\leq j'$.
To compute the approximation ratios efficiently, we notice that the functions $f$ and $\min$ are non-decreasing in each coordinate. We denote the minimal $\rho_i$ value of a component $C_{j}^{sc}$ by
$$
\rho_i(C_j^{sc})=\min_{\cfpo\in C_j^{sc}}\rho_i\cfpo,\qquad\text{where }i\in\{1,2\}.
$$
Therefore, for every $C_{j'}^{sc}$ we just need to compute $\rho_1(C_{j'}^{sc})$, $\rho_2(C_j^{sc})$, and the smallest $\rho_2(C_{j'}^{sc})$ over all $C_{j}^{sc}$ such that $C_{j}^{sc} \prec C_{j'}^{sc}$.
Using dynamic programming, one can compute for each $C_{j'}^{sc}$ the quantity 
$$
DP(C_{j'}^{sc}) = \min_{C_j^{sc}\prec C_{j'}^{sc}} \rho_2(C_j^{sc}) = \min\{\rho_2(C_{j'}^{sc}), \min_{(C_j^{sc},C_{j'}^{sc})\in E(\vals)} DP(C_j^{sc})\}.
$$
We observe that the overall time complexity of the DP procedure mainly arises from two types of minimization steps: selecting the minimum within each component, and selecting the minimum across different components. The former can be bounded by the number of vertices in $DG(\vals)$, and the latter can be bounded by the number of edges in $DAG(\vals)$. By \Cref{lm:dgpre} and \Cref{lm:dagpre}, we therefore conclude that the overall time complexity of the dynamic program is $O(N)$.
Once we computed the approximation ratio, the proof of \Cref{lm:alpha_ncf} is constructive and provides mechanisms achieving these ratios, by computing for each $C_{j'}^{sc}$ the largest $x_1(C_j^{sc})$ over all $C_j^{sc}$ such that $(C_j^{sc}, C_{j'}^{sc}) \in E(\vals)$.
\end{proof}
As a sanity check, when $n=2$, notice that \Cref{cor:ratios}, which compares the ratios $\Rvopt$, $\Rcopt$ and $\Rdoptv$, directly follows from the expressions above and the inequalities in \Cref{eq:ineq-conflict}.
\Cref{thm:twoagents} directly implies the following corollary.
\begin{corollary}
    \label{cor:monocfp} When $n=2$, one can compute $\Rvopt$, $\Rcopt$ and $\Rdoptv$ in $O(N)$ if the value (or cost) functions are monotone.
\end{corollary}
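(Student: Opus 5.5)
The plan is to derive the corollary from the proof of \Cref{thm:twoagents}. The only superlinear step there is the sorting inside the DG preprocessing procedure, which costs $O(k^2\log k)$; every later step is linear in the size of $DG(\vals)$. These later steps are the SCC computation (Tarjan), the topological sort, and the dynamic program computing $DP(C_{j'}^{sc})$ together with the three max-expressions given by \Cref{lm:alpha_ncf}, and each runs in $O(k^2)=O(N)$ time. So it suffices to show that under monotonicity one can build a graph with the same reachability relation $\prec$ among signal profiles, with $O(k^2)$ vertices and edges, in $O(k^2)$ time and without sorting.

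Under monotone values (the cost case is symmetric, since $v_i c_i=1$), for each $i$ and $\osigs$ the order $\sigma_i(\osigs)$ is compatible with the natural order on $[k]$. That is, $v_i(\cdot,\osigs)$ is non-decreasing in $s_i$, so the values are already sorted. The blocks $B_i^1(\osigs),\dots,B_i^{b_i(\osigs)}(\osigs)$ are then maximal runs of consecutive signals with equal value. They are found by one left-to-right scan, comparing $v_i(s_i,\osigs)$ with $v_i(s_i+1,\osigs)$, in $O(k)$ time per $(i,\osigs)$. This gives $O(k^2)$ time in total for $n=2$. The dummy vertices $a_i^j(\osigs)$ and the edge set $E(\vals)$ are then produced exactly as in the DG procedure, during the same scan. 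Alternatively, one can drop the dummies and, for each consecutive pair of blocks, add a single edge from the block to the next one. Within a block one needs an edge in both directions between consecutive signals, or a dummy, to keep equal-valued profiles unconstrained. Since equal values impose no constraint, the dummy construction is the safer choice.

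It remains to check that this graph is literally $DG(\vals)$, and hence satisfies \Cref{lm:dgpre} and \Cref{lm:dagpre} minus the sorting cost. Then the conflict-pair formulas and the DP from the proof of \Cref{thm:twoagents} apply verbatim, giving $O(N)$ overall. The greedy construction of the mechanisms is also linear once the DAG is available.

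The main obstacle is the subtle point that monotonicity holds only weakly. Ties therefore form blocks, and I need the scan to produce exactly the blocks the sorting would produce. With weak monotonicity, equal values are necessarily consecutive in $s_i$: if $v(a)=v(c)$ with $a<b<c$, then $v(a)\le v(b)\le v(c)$ forces $v(b)=v(a)$. Hence blocks are intervals and the scan is correct. I would state this short interval lemma explicitly and then conclude.
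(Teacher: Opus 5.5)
Your proposal is correct and is essentially the paper's proof: the only superlinear step behind \Cref{thm:twoagents} is the sorting in the DG procedure, and monotonicity lets one skip it; your left-to-right scan and your remark that ties form contiguous runs of signals make explicit what the paper leaves implicit. One slip in your aside: edges in both directions between equal-valued consecutive signals would force those signals to receive the same allocation, a constraint absent from $\sigma_i$, so that dummy-free variant would change $\prec$ rather than leave those profiles unconstrained---this is harmless here, since you rely on the dummy construction.
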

\begin{proof}
    We observe that the time complexity $O(N\log N)$ comes only from the value sorting step of DG procedure, and all other steps for computing the optimal ratios take $O(N)$ time. With monotonicity, we can omit the value sorting step.
\end{proof}
We now define notations which will be useful in the algorithms and in the proof of \Cref{lm:alpha_ncf}. To avoid repetition, we provide the proof only for the value setting. Throughout this section, all statements refer to the value setting unless otherwise specified.

The core of the proof relies on the problem's structure being amenable to a \emph{greedy} approach. We will provide a more precise intuition after introducing the linear programming (LP) formulation.
Notice that when $n=2$, an allocation $\allocs$ could be fully described by $x_1$, due to the constraint that $x_1(s_1,s_2)+x_2(s_1,s_2)=1$ for all $(s_1,s_2)\in [k]^2$.
The optimal randomized approximation ratio could therefore be computed by taking the inverse of the optimal value of the following LP:
\begin{subequations}\label{LP_2_agent}
\begin{align}
\text{maximize} \quad & \beta \notag \\ 
\text{such that} \quad 
& 0 \leq x_1(s_1, s_2) \leq 1, 
  && \forall s_1, s_2 \in [k], 
  \label{LP_2_agent:bound} \\
& x_1(s_1, s_2) \cdot \rho_1(s_1, s_2) + (1 - x_1(s_1, s_2)) \cdot \rho_2(s_1, s_2) \geq \beta, 
  && \forall s_1, s_2 \in [k], 
  \label{LP_2_agent:approx} \\
& x_1(s_1, s_2) \leq x_1(s_1', s_2'), 
  && \forall (s_1, s_2) \prec (s_1', s_2'). 
  \label{LP_2_agent:monotonicity}
\end{align}
\end{subequations}
For the deterministic case, we simply add the following integrality constraints to the LP:
\begin{equation}
x_1(s_1, s_2) \in \{0, 1\}, \quad\forall s_1, s_2 \in [k],
\label{eq:integrality_constraint}
\end{equation}
which gives us an integer linear program (ILP).

Notice that the first two types of constraints, constraint (\ref{LP_2_agent:bound}) and (\ref{LP_2_agent:approx}), involve only a single signal profile and can be referred to as \emph{profile-wise constraints}. In contrast, the third type of constraint, constraint \ref{LP_2_agent:monotonicity}, relates two signal profiles and can be referred to as \emph{cross-profile constraints}. Also note that within cross-profile constraints, there are constraints between different layers or in the same strongly connected components. We further refer to the former type of cross-profile constraints as \emph{cross-component constraints} and the latter type as \emph{component-wise constraints}.

The intuition of the proof comes from the fact that the above LP/ILP (\ref{LP_2_agent}) could be tightly solved in the topological order of $DAG(\vals)$. To keep the exposition concise, we illustrate the approach using the LP version. Consider the following procedure: we assign values for all $C^{sc}_{1}\in L_1$, which satisfy the corresponding profile-wise constraints for signal profiles in each component and component-wise constraints, then we select values for  $C^{sc}_{2}$, which satisfy profile-wise, component-wise, and cross-component constraints between $C_1^{sc}$ and $C_2^{sc}$, given the determined value in $C_1^{sc}$, and so on. We note that if this procedure could continue until the assignment of value in $C_{m(\vals)}^{sc}$, then it provides us with a feasible solution for the LP (\ref{LP_2_agent}). 

\begin{algorithm}[h!]
\caption{Zero as Possible (ZaP), when there is no $\alpha$-deterministic conflict pairs, given input $\vals$.}\label{alg:0 as possible}
\begin{algorithmic}[1]
\For{$j' = 1$ to $m(\vals)$}
        \If{$\exists\,(s_1,s_2)\in C^{sc}_{j'}$ such that $\rho_2(s_1,s_2)<\frac{1}{\alpha}$, or\\\qquad $\exists\,C_j^{sc}$, where$\; x_1(C_j^{sc})= 1$ and
        $(C_j^{sc}, C_{j'}^{sc})\in E(\vals)$}
        \State $x_1(C^{sc}_{j'})\gets 1$
        \Else
        \State $x_1(C^{sc}_{j'})\gets 0$
        \EndIf
        \If{$\exists\,(s_1,s_2)\in C^{sc}_{j'}$ such that $x_1(s_1,s_2)\rho_1(s_1,s_2)+(1-x_1(s_1,s_2))\rho_2(s_1,s_2)<\frac{1}{\alpha}$}
        \State \textbf{return} ERROR\label{line: break}
        \EndIf
\EndFor
\State \textbf{return} $\allocs$ 
\end{algorithmic}
\end{algorithm}
\begin{algorithm}[h!]
\caption{Small as Possible (SaP$_v$), when there is no $\alpha$-value conflict pairs, given input $\vals$.}\label{alg:small as possible}
\begin{algorithmic}[1]
\For{$j' = 1$ to $m(\vals)$}
        \State $I_v(C^{sc}_{j'}) \gets
\left\{x\in[0,1]\;\middle|\;
\begin{array}{ll}
x\cdot \rho_1\cfpo+(1-x)\cdot \rho_2\cfpo\ge 1/\alpha, \,\forall (s_1,s_2)\in C^{sc}_{j'},\\
x\ge x_1(C_j^{sc}),\,\,\forall C_j^{sc}\text{ s.t. }(C_j^{sc},C_{j'}^{sc})\in E(\vals).\\
\end{array}\quad\right\}$
        \If{$I_v(C^{sc}_{j'})=\emptyset$}
        \State \textbf{return} ERROR \label{line: error}
        \Else
        \State $x_1(C^{sc}_{j'})\gets \min I_v(C^{sc}_{j'})$
        \EndIf
\EndFor
\State \textbf{return} $\allocs$ 
\end{algorithmic}
\end{algorithm}
\begin{algorithm}[h!]
\caption{Small as Possible (SaP$_c$), when there is no $\alpha$-cost conflict pairs, given input $\costs$.}\label{alg:small as possible cost}
\begin{algorithmic}[1]
\For{$j' = 1$ to $m(\costs)$}
        \State $I_c(C^{sc}_{j'}) \gets
\left\{x\in[0,1]\;\middle|\;
\begin{array}{ll}
x/ \rho_1\cfpo+(1-x)/ \rho_2\cfpo\ge \alpha, \,\forall (s_1,s_2)\in C^{sc}_{j'},\\
x\ge x_1(C_j^{sc}),\,\,\forall C_j^{sc}\text{ s.t. }(C_j^{sc},C_{j'}^{sc})\in E(\vals).\\
\end{array}\quad\right\}$
        \If{$I_c(C^{sc}_{j'})=\emptyset$}
        \State \textbf{return} ERROR \label{line: error2}
        \Else
        \State $x_1(C^{sc}_{j'})\gets \min I_c(C^{sc}_{j'})$
        \EndIf
\EndFor
\State \textbf{return} $\allocs$ 
\end{algorithmic}
\end{algorithm}

Our algorithms are parametrized by a parameter $\alpha\geq 1$, such that no $\alpha$ conflict pair exists. One can pre-compute the smallest $\alpha$ such that no such conflict pair exist through the explicit expressions.

\begin{proof}[Proof of \Cref{lm:alpha_ncf}]
We start by the \emph{deterministic} case. To prove the \emph{if} direction, we assume that there is no $\alpha$-deterministic conflict pair, that is, for all %$\cfpo$ and $\cfpt$ such that 
$\cfpo\prec\cfpt$ we have
$$
    \max(\rho_2\cfpo, \rho_1\cfpt) \geq 1/\alpha.
$$

We run \Cref{alg:0 as possible} which, if successful, builds an $\alpha$-approximate deterministic mechanism, and we show it never enters line \ref{line: break}:
\begin{itemize}
    \item If  $x_1(C_{j'}^{sc})=0$, then we cannot trigger line \ref{line: break}, as we have that $\rho_2(s_1,s_2)\ge\frac{1}{\alpha}$ for all $\cfpo\in C_{j'}^{sc}$;
    \item If  $x_1(C_{j'}^{sc})=1$ and $\exists\,\cfpo\in C_{j'}^{sc}\,\,\text{such that}\,\,\rho_2(s_1,s_2)<\frac{1}{\alpha}$, then we cannot trigger line \ref{line: break}, as for all $\cfpt\in C_{j'}^{sc}$ we have $\rho_1\cfpt=1$;
    \item If  $x_1(C_{j'}^{sc})=1$ and $\rho_2(s_1,s_2)\geq\frac{1}{\alpha}$ for all $\cfpo\in C_{j'}^{sc}$, then there exists $C_{j}^{sc}$, such that $x_1(C_{j}^{sc})= 1$ and $(C_j^{sc},C_{j'}^{sc})\in E(\vals)$. In this subcase, there must exist a profile $(s_1^{\prime\prime},s_2^{\prime\prime})\in C_{j''}^{sc}$ such that $C_{j''}^{sc}\prec C_{j'}^{sc}$ and $\rho_2(s_1^{\prime\prime},s_2^{\prime\prime})<\frac{1}{\alpha}$. As otherwise, all $C_{j}^{sc}$ such that $(C_j^{sc},C_{j'}^{sc})\in E(\vals)$ would have $x_1(C_j^{sc})=0$. Recall that with the absence of $\alpha$-deterministic conflict pair, we have that $\max(\rho_2\cfpo, \rho_1(s_1^{\prime\prime},s_2^{\prime\prime}))\ge \frac{1}{\alpha}$, which implies that $\rho_2(s_1,s_2)\ge\frac{1}{\alpha}$, for all $\cfpt\in C_{j'}^{sc}$. Thus, line \ref{line: break} is also not triggered in this subcase.
\end{itemize}
We turn to prove the \emph{only if} direction. By contradiction, assume that there exists an $\alpha$-deterministic conflict pair $(s_1,s_2)$ and $(s_1^{\prime},s_2^{\prime})$, where $(s_1,s_2)\prec$ $(s_1^{\prime},s_2^{\prime})$ and
$$\max(\rho_2(s_1,s_2),\rho_1(s_1^{\prime},s_2^{\prime}))<1/\alpha.$$ 
To achieve $\alpha$-approximation, we have to set $x_1(s_1,s_2)=1$ and $x_1\cfpt=0$. However, monotonicity constraints require that $ x_1(s_1,s_2)\leq x_1(s_1^{\prime},s_2^{\prime})$, which yields a contradiction.

We now address the \emph{value} case. First, consider the \emph{if} direction. Assume that there is no $\alpha$-value conflict pair, that is
for all $(s_1,s_2)\prec (s_1',s_2')$ we have
$$
f(\rho_2(s_1,s_2),\rho_1(s_1',s_2')) \geq \frac{1}{\alpha}.
$$

We run \Cref{alg:small as possible} which, if successful, builds a randomized mechanism with a value approximation ratio of at most $\alpha$, and we prove that it never reaches line \ref{line: error}.
The key to the process succeeding is that, in each component, the intervals defined by profile-wise constraints and cross-component constraints have a non-empty intersection. Note that component-wise constraints satisfy automatically if the resulting interval is non-empty, as we allocate the same value for all signal profiles in the component, which is exactly what component-wise constraints ask for.

For that matter, we introduce the following quantities:
\begin{equation}
lp(s_1,s_2) := 
\begin{cases}
\frac{1/\alpha-\rho_1(s_1,s_2)}{1-\rho_1(s_1,s_2)}, & \text{if } \rho_1(s_1,s_2) \neq 1, \\[6pt]
-\infty, & \text{if } \rho_1(s_1,s_2) = 1,
\end{cases}
\label{eq:l_definition}
\end{equation}
and
\begin{equation}
rp(s_1,s_2) :=
\begin{cases}
\frac{1-1/\alpha}{1-\rho_2(s_1,s_2)}, & \text{if } \rho_2(s_1,s_2) \neq 1, \\[6pt]
+\infty, & \text{if } \rho_2(s_1,s_2) = 1,
\end{cases}
\label{eq:r_definition}
\end{equation}
which represent the interval endpoints defined by constraint (\ref{LP_2_agent:approx}). And we further define:
$$lp(C_{j'}^{sc}) = \max_{\cfpt\in C_{j'}^{sc}}lp\cfpt\quad\text{and}\quad rp( C_{j'}^{sc})=\min_{\cfpt\in C_{j'}^{sc}}rp\cfpt.$$
It directly follows from the definitions that $lp\cfpo\le1$ ,$rp\cfpo\ge0$, $lp( C_{j'}^{sc})\le1$ and $rp(C_{j'}^{sc})\ge 0$ hold for all $\cfpt$ and $C_{j'}^{sc}$.
More precisely, we have
\begin{equation}
\label{eq:Iinter}
I_v(C_{j'}^{sc})=[0,1]\cap [lp(C_{j'}^{sc}),rp(C_{j'}^{sc})] \cap\bigcap_{C_j^{sc}\prec C_{j'}}[x_1(C_{j}^{sc}), +\infty),
\end{equation}
the last part appears when $\ell\ge2$.
Next, we claim that for all $(s_1,s_2)$ and $(s_1^{\prime},s_2^{\prime})$ such that $(s_1,s_2)\prec (s_1^{\prime},s_2^{\prime})$, the inequality
\begin{equation}
\label{rgel}
lp\cfpo\leq rp\cfpt,
\end{equation}
follows directly from the definitions of $lp$ and $rp$, and that $
f(\rho_2(s_1,s_2),\rho_1(s_1^{\prime},s_2^{\prime})) \geq 1/\alpha
$. This inequality further implies that 
\begin{equation}
    \label{ineq:l&rcomp}
    lp(C_{j}^{sc})\leq rp(C_{j'}^{sc}),
\end{equation}
where $C_{j}^{sc}\prec C_{j'}^{sc}$. One can easily check that the  inequalities $lp\cfpt\le rp\cfpt$ and $lp(C_{j'}^{sc})\le rp(C_{j'}^{sc})$ hold for the same reasons.
Finally, we show by induction on $j'\in[m(\vals)]$ that the algorithm does not fail during the first $j'$ steps, and that
\begin{equation}
x_1(C_{j'}^{sc}) = \max \left(\{0\}\cup\{lp(C_{j'}^{sc})\}\cup\{lp(C_{j}^{sc})\}_{C_{j}^{sc}\prec C_{j'}^{sc}}\right)\leq \min\left(1, rp(C_{j'}^{sc})\right).
\end{equation}
\newcommand{\SCCo}{C^{sc}_{1,j}}
\newcommand{\SCCplus}{C^{sc}_{\ell+1,j^{\prime}}}
The induction hypothesis directly holds for $j' = 1$, as we have $x_1(C_{1}^{sc}) = \min I_v(C^{sc}_{1})$ with
$$
I_v(C^{sc}_{1}) = [0,1] \cap [lp(C^{sc}_{1}), rp(C^{sc}_{1})].
$$
Next, assuming the induction hypothesis holds at $j'<m(\vals)$, then for $C_{j'+1}^{sc}$, we use \Cref{eq:Iinter} and the transitivity of $\prec$ to show that
$$
I_v(C_{j'+1}^{sc}) =[0,1]\cap[lp(C_{j'+1}^{sc}), rp(C_{j'+1}^{sc})]\cap\bigcap_{C_{j}^{sc}\prec C_{j'+1}^{sc}}[lp(C_{j}^{sc}), +\infty).
$$
Then, using \Cref{ineq:l&rcomp} and other inequalities derived above we obtain that $I_v(C_{j'}^{sc})\neq\emptyset$, which proves that the algorithm does not fail at step $j'+1$ and that $x_1(C_{j'+1}^{sc})$ satisfies the induction hypothesis. We end our induction, having shown that the algorithm never fails, and having provided an efficient computable definition of $x_1(C_{j}^{sc})$.

Now we turn to the proof of the \emph{only if} direction. By contradiction, we assume that there exists an $\alpha$-value conflict pair, which indicates that there exists $\cfpo$ and $\cfpt$, where $\cfpo\prec\cfpt$ and
$$f(\rho_2\cfpo,\rho_1\cfpt)< \frac{1}{\alpha}.$$
As $\alpha\ge1,$ we have that $f(\rho_2\cfpo,\rho_1\cfpt)<1$, which implies $\rho_2\cfpo\neq1$ and $\rho_1\cfpt\neq 1$. To have $\alpha$-approximate randomized mechanism, it requires that 
$$
x_1\cfpt\le rp\cfpt = \frac{1-1/\alpha}{1-\rho_1\cfpt}
\quad\text{and}\quad
x_1\cfpo\ge lp\cfpo = \frac{1/\alpha-\rho_2\cfpo}{1-\rho_2\cfpo}.
$$
It follows from monotonicity constraints that
$$
\frac{1/\alpha-\rho_2\cfpo}{1-\rho_2\cfpo} \le x_1\cfpo \le x_1\cfpt\le \frac{1-1/\alpha}{1-\rho_1\cfpt},
$$
which contradicts the assumption that $\cfpo$ and $\cfpt$ is an $\alpha$-value conflict pair.
\end{proof}

%%%%%%%%%%%%%%%%%%%%%%%%%%%%%%%%%%%%%%%%%%%%%%%%%%%%%%%%%%%%%%%
\subsubsection{Refined Analysis for Binary Signals}
\label{sssec:two-signals}

Recall that we can run a binary search on the optimal deterministic ratio, reducing \PbD to the query variant $\PbD_\gamma$. Building on the insights developed in \Cref{sssec:hypergraph}, we now turn to another special case, $k=2$, for which the formulation of our decision problem as a perfect matching in a hypergraph can be solved efficiently. 

\thmtwosignals*

\begin{proof}
To compute $\Rdoptv$ and the corresponding deterministic mechanism, we will run the binary search of \Cref{lem:binary-search}. To answer $\PbD\gamma$ queries, we turn our attention to \Cref{lem:hypergraph}. Recall there exists a solution if and only if the following hypergraph has a perfect matching:
$$
V := [k]^n\qquad\text{and}\qquad
E := \bigcup_{i\in[n]}
\left\{
e\in E_i\;|\; \forall \sigs\in e, \rho_i(\sigs) \geq 1/\gamma
\right\}$$
where for all $i\in [n]$ we have
$$
    E_i = \left\{\{(s_i, \osigs)\;|\; s_i \in S\}\;\middle|\;\begin{array}{l}
    S\subseteq [k]\text{ and }\osigs\in [k]^{n-1}\text{ such that}\\
    \forall (s_i, s_i')\in \sigma_i(\osigs), \; s_i\in S \Rightarrow s_i'\in S\end{array}\right\}.
$$
First, we observe that each hyperedge $e\in E$ has size at most $2$, and we split $E$ into the set $E'$ of hyperedges of size $1$ and the set $E''$ of hyperedges of size $2$. We have that $E$ contains at most $n\cdot3\cdot2^{n-1}$ elements, thus $E'$ and $E''$ can be constructed in $O(n2^n)$ time. Intuitively, $(V,E'')$ is a standard undirected graph, and $E'\subseteq V$ can be thought as a set of nodes which are allowed to be left unmatched. More formally, a solution is a matching in $(V,E'')$ such that each vertex $\sigs\in V$ is either covered by one edge $e\in E''$, or can be left unmatched if $\sigs\in E'$.

Now, observe that each edge $e\in E''$ connects two signal profiles which only differ on the $i$-th coordinate. If we partition vertices $\sigs\in [2]^n$ using the parity of $\sum_i s_i$, we obtain that $(V,E'')$ is a bipartite graph. 

Interestingly, we can show that for every edge $\{(1, \osigs), (2, \osigs)\} \in E''$, we either have $(1, \osigs)\in E'$ or $(2, \osigs)\in E'$. Indeed, if $(1,\osigs)\notin E'$ then $(1,2)\in \sigma_i(\osigs)$ and if $(2,\osigs)\notin E'$ then $(2,1)\in \sigma_i(\osigs)$, which would contradict the fact that $\sigma_i(\osigs)$ is a strict order.

To simplify the structure even further, we define $E''' = E''\setminus\{\{\sigs,\sigs'\}\;|\;\sigs,\sigs' \in E'\}$ by removing the edges which connect two vertices of $E'$ that could be left alone, which does not change the existence of a solution. In the graph $G_\gamma = (V, E''')$ each edge covers exactly one edge from $V\setminus E'$. 

Therefore, there exists a solution if and only if $G$ has a matching of size $|V|-|E'|$. We will compute a maximum cardinality matching in the bipartite graph $G$, which has $2^n$ vertices and $N = O(n2^n)$ edges, which can be done in time $N^{1+o(1)}$ using the quasi-linear algorithm of \cite{ChenKLPGS25}, or in time $O(N^{3/2})$ with the (more standard) Hopcroft–Karp–Karzanov algorithm \cite{HopcroftK73}.
\end{proof}

%%%%%%%%%%%%%%%%%%%%%%%%%%%%%%%%%%%%%%%%%%%%%%%%%%%%%%%%%%%%%%%
%%%%%%%%%%%%%%%%%%%%%%%%%%%%%%%%%%%%%%%%%%%%%%%%%%%%%%%%%%%%%%%
%%%%%%%%%%%%%%%%%%%%%%%%%%%%%%%%%%%%%%%%%%%%%%%%%%%%%%%%%%%%%%%
\section{Hardness and Lower Bounds}
\label{sec:negative-results}

In this section, we prove \Cref{thm:nphard,thm:query}, which respectively give lower bounds on the time complexity and query complexity of the optimization problems we consider. In both proofs, we build instances with increasing value functions, so that our hardness results apply to the setting considered in previous works \cite{RoughgardenT16,EdenFFG18,EdenFFGK19,AmerT21,LuSZ22}.

%%%%%%%%%%%%%%%%%%%%%%%%%%%%%%%%%%%%%%%%%%%%%%%%%%%%%%%%%%%%%%%
%%%%%%%%%%%%%%%%%%%%%%%%%%%%%%%%%%%%%%%%%%%%%%%%%%%%%%%%%%%%%%%
\subsection{Build valuation functions from performance ratios}
\label{ssec:build_valuation}

We have seen in \Cref{lm_find_value_for_performance_ratios} that one can find increasing value (or decreasing cost) functions inducing the given performance ratios. Thus, to build hard instances for monotone value (or cost) settings in this section, we instead focus on performance ratios. In this subsection, we further prove that one can always build monotone value (or cost) functions with submodularity over signals (SOS) based on performance ratios that satisfy a simple condition. Therefore, a weaker version of hardness results could be extended to the SOS setting.
 
We recall the definition of submodularity over signals from \cite{EdenFFGK19}: we say that value functions exhibit \emph{submodularity over signals (SOS)} if for all $i$, $j\in[n]$, $s_j\in[k]$, and $\Delta \in [k]$, where $s_j+\Delta\le k$, and for any $\sigs_{-j}, \sigs'_{-j}\in [k]^{n-1}$ such that $\sigs_{-j} \le \sigs'_{-j}$ component-wise, it holds that
\[
v_i(s_j+\Delta, s_{-j}) - v_i(s_j, s_{-j})
\;\ge\;
v_i(s_j+\Delta, s'_{-j}) - v_i(s_j, s'_{-j}).
\]
Similarly, we define submodularity over signals (SOS) of cost functions as
\[
c_i(s_j, s_{-j}) - c_i(s_j+\Delta, s_{-j})
\;\ge\;
c_i(s_j, s'_{-j}) - c_i(s_j+\Delta, s'_{-j}).
\]

Given performance ratios $\rhos$, define $r^{*}(\rhos)=\min\{\rho_i(\sigs)\;|\;i\in [n], \sigs\in [k]^n\}$.
%\lemvaluationgeneral*
%\lemvaluationsos*
\begin{restatable}{lemma}{lemvaluationsos} Given performance ratios $\rhos$ where $r^{*}(\rhos)\ge 1-\frac{1}{(nk)^2+1}$, one can construct increasing value functions (resp. decreasing cost functions) which are SOS and induce $\rhos$.
    \label{lm_can_find_sub_valuefunc}
\end{restatable}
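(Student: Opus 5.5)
We will use a construction of the form $v_i(\sigs)=\rho_i(\sigs)\cdot g(\sigs)$, where $g$ is a common scale factor. Any such choice automatically induces $\rhos$, because every profile has an agent with $\rho_i(\sigs)=1$, so $\max_j v_j(\sigs)=g(\sigs)$. The plan is to choose $g$ increasing in each signal and \emph{submodular} across signals, with marginal increments large enough that the bounded fluctuations of $\rho_i$ cannot destroy either property. Concretely, set $\epsilon:=1-r^*(\rhos)\le \frac{1}{(nk)^2+1}$, so every $\rho_i(\sigs)\in[1-\epsilon,1]$. Take $g(\sigs)=h(\|\sigs\|_1)$ with $h$ increasing and strictly concave on the integers, for instance $h(\ell)=\sum_{t=1}^{\ell}(M-t)$ with $M$ a large constant of order $nk$. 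The cost case follows by setting $c_i=1/v_i$ and checking the analogous inequalities, or by the symmetric choice $c_i(\sigs)=h'(\|\sigs\|_1)/\rho_i(\sigs)$ with $h'$ decreasing and convex.

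\textbf{Monotonicity.} For $s_j\to s_j+1$ we need $\rho_i(\sigs^+)h(\ell+1)>\rho_i(\sigs)h(\ell)$. It suffices that $(1-\epsilon)h(\ell+1)>h(\ell)$, that is, $\epsilon\, h(\ell+1)<h(\ell+1)-h(\ell)$. Since $h$ is bounded by about $nk\cdot M$ and the increments are at least about $M-nk$, this holds as soon as $\epsilon\lesssim 1/(nk)^2$ up to constants. The threshold $\frac{1}{(nk)^2+1}$ should be exactly what makes this step and the next one go through.

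\textbf{SOS.} Write $v_i=h+(\rho_i-1)h$. The first term contributes a difference in marginals
\[
\bigl[h(\ell+\Delta)-h(\ell)\bigr]-\bigl[h(\ell'+\Delta)-h(\ell')\bigr]
\]
for $\ell\le\ell'$. By concavity this is at least about $\Delta(\ell'-\ell)$, and at least $1$ whenever $\ell<\ell'$. When $\sigs_{-j}=\sigs'_{-j}$ the inequality is trivially an equality, so we only need to handle $\ell<\ell'$. The perturbation term $(\rho_i-1)h$ has absolute value at most $\epsilon\, h_{\max}$, so the four terms together change the difference by at most $4\epsilon h_{\max}$. We therefore need the concavity gain to exceed $4\epsilon h_{\max}$. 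With $h$ quadratic, $h_{\max}$ is of order $(nk)^2$ times the curvature, which is where the $(nk)^2$ in the hypothesis comes from.

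The main obstacle is tuning $h$ so that one choice satisfies both requirements at once under precisely $\epsilon\le\frac{1}{(nk)^2+1}$. Monotonicity wants large increments relative to $h_{\max}$, while SOS wants large curvature relative to $h_{\max}$. We would first try $h(\ell)=\ell(2L-\ell)$ with $L=nk$, which is increasing on $[0,nk]$ with increments at least $1$, curvature $2$, and $h_{\max}=(nk)^2$. This gives roughly $\epsilon(nk)^2<1$ for monotonicity. The SOS estimate needs a sharper bound than the crude $4\epsilon h_{\max}$: it should pair the two terms in each marginal so that only differences of $\rho$ times $h$ appear, each bounded by $\epsilon h$. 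If the constants are off, we would adjust by adding a linear term or an offset to $h$ until the hypothesis $\epsilon(1+(nk)^2)\le1$ is exactly what is needed.
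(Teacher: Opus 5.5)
Your approach is correct and essentially the paper's. The paper sets $v_i(\mathbf{s})=\rho_i(\mathbf{s})\,(a_\ell+1)$ with $a_\ell=\ell(2nk-\ell)$ and $\ell=\|\mathbf{s}\|_1$, which is your final candidate $h(\ell)=\ell(2nk-\ell)$ up to a $+1$ offset; it then argues by keeping each $v_i$ within $1$ of the scale and comparing against the concavity gap $a_{\ell+\Delta}-a_\ell-(a_{\ell'+\Delta}-a_{\ell'})=2\Delta(\ell'-\ell)\ge 2$ for $\ell<\ell'$. Your hedged last step closes as written, with no extra linear term or offset: since $0\le h(\ell)-v_i(\mathbf{s})\le \epsilon h(\ell)\le \epsilon (nk)^2<1$, each marginal $v_i(s_j+\Delta,\mathbf{s}_{-j})-v_i(s_j,\mathbf{s}_{-j})$ lies strictly within $1$ of $h(\ell+\Delta)-h(\ell)$ (this is your ``paired'' bound, giving total error $2\epsilon h_{\max}$ rather than $4\epsilon h_{\max}$), so the gap of $2$ absorbs the error in the SOS inequality, and the increments $h(\ell+1)-h(\ell)=2nk-2\ell-1\ge 1$ give monotonicity in the same way.
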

\begin{proof}
    \label{pf_can_find_sub_valuefunc}
    Observe that if value functions $\{v_i\}_{i\in[n]}$ are increasing and SOS, then the cost functions obtained by taking the reciprocal of the value functions are decreasing, SOS, and induce the same performance ratios with the value functions. Therefore, it suffices to prove the lemma in the value setting. We define $a_{\ell} = \ell(2nk-\ell)$ and $v_i(\sigs)=\rho_i(\sigs)(a_{\ell}+1)$, where $\ell=||\sigs||_1$. It is easy to check that for all $i$ and $\sigs$ the performance ratio induced by $v_i(\sigs)$ is $\rho_i(\sigs)$. As $1-\frac{1}{(nk)^2+1}\le\rho_i(\sigs)\le 1$, we have $a_{\ell}\le v_i(\sigs)\le a_\ell +1$. We further note that $a_{\ell^{\prime}}\ge a_{\ell}+1$ if $\ell^{\prime}>\ell$. Thus, $v_i(\sigs)\ge v_i(\sigs^\prime)$ if $||\sigs||_1<||\sigs^\prime||_1$, which implies the monotonicity of the value functions. To show SOS of the value functions , we notice that $a_{\ell+\Delta}-a_{\ell}-1\le v_i(s_j+\Delta,\sigs_{-j})-v_i(s_j,\sigs_{-j})\le a_{\ell+\Delta}-a_{\ell}+1$ and $a_{\ell+\Delta}-a_\ell\ge a_{\ell^\prime+\Delta}-a_{\ell^\prime}+2$ when $\ell<\ell^\prime$, as $a_{\ell+\Delta}-a_\ell=\Delta(2nk-2\ell-t)$ and $\Delta\ge1$, which show that the value functions are SOS.
\end{proof}

%%%%%%%%%%%%%%%%%%%%%%%%%%%%%%%%%%%%%%%%%%%%%%%%%%%%%%%%%%%%%%%
%%%%%%%%%%%%%%%%%%%%%%%%%%%%%%%%%%%%%%%%%%%%%%%%%%%%%%%%%%%%%%%
\subsection{NP-Hardness}
\label{ssec:nphard}

We saw in \Cref{prop:twoagents} that solving \PbD can be reduced to the satisfiability of formulas with clauses of size at most $n$, which is easy when $n=2$ and NP-Hard when $n \geq 3$. In this section we give reverse reduction, proving that in general computing, and even approximating the optimal deterministic ratio $\Rd^*$ is NP-Hard. To have a simpler reduction, we start from the 1-in-3-SAT problem, a structured variant of 3-SAT which is also NP-Hard \cite{Schaefer78}. Importantly, we will need to have $n=4$ agents, as embedding an arbitrary formula within an instance with only three agents is not feasible with our current construction. We leave the complexity of computing $\Rd^*$ when $n=3$ as an intriguing open question.

\begin{definition}[1-in-3-SAT]
In the 1-in-3-SAT problem, we are given a boolean formula $\phi$ in {conjunctive normal form (CNF)}, where each clause consists of exactly three literals (i.e., variables or their negations). The goal is to determine whether there exists a truth assignment to the variables such that \emph{exactly one} literal in each clause is true, and the other two are false.
Formally, let
\[
\phi = C_1 \land C_2 \land \dots \land C_m,
\]
where each clause $C_i$ has the form $(\ell_{i,1}, \ell_{i,2}, \ell_{i,3})$, and each $\ell_{i,j}$ is a literal (either a variable $x$ or its negation $\neg x$). The formula $\phi$ is said to be \textit{1-in-3 satisfiable} if there exists a truth assignment such that, for each clause $C_i$, exactly one of the literals $\ell_{i,1}, \ell_{i,2}, \ell_{i,3}$ evaluates to true.
\end{definition}

\thmnphard*
\begin{proof}
In the gap problem $(1, \beta)$-\PbD, we are asked to distinguish between instances with deterministic ratio $\Rd^*=1$ and $\Rd^*>\beta$. For convenience, we fix a constant $\varepsilon \in (0, 1/\beta)$.

Given $\phi = C_1 \land C_2 \land \dots \land C_m$, we are going to define an instance of $(1,\beta)$-\PbD with $n=4$ agents and $k=O(m)$ signals, which has a polynomial size in $m$. If $\phi$ is 1-in-3 satisfiable, then the deterministic ratio will be equal to $1$, otherwise it will be equal to $1/\varepsilon > \beta$. 

For simplicity, we build an instance with monotone value functions $\vals$, that is, such that $\sigma_i(\osigs) = \{(s_i,s_i')\;|\; 1 \leq s_i < s_i' \leq k\}$ for all $i\in[n]$ and $\osigs$. By \Cref{lm_find_value_for_performance_ratios}, it suffices to build performance ratios $\rhos$. In our construction, we will need to set each performance ratio $\rho_i(\sigs)$ to be either equal to $1$ or $\varepsilon$. %For that matter, we define $v_i(\sigs) = \rho_i(\sigs)/(2\varepsilon)^{\sum_{i} s_i}$, which can be described with $O(m)$ bits. By construction, each value function $v_i$ is increasing in $s_i$, for every possible choice of performance ratio. 

To build some intuition, consider the gadget for a variable $a$ in \Cref{fig:hardness-var}. First, because of the performance ratios, if $\Rd(\allocs, \rhos) = 1$ then we have
$$
x_1(k,1,1,*) =x_4(k,1,1,*) = x_2(1,2,1,*) = x_4(1,2,1,*) = x_3(1,1,2,*) = x_4(1,1,2,*) = 0.
$$
If we decide to set $x_1(1,1,2,*) = 1$, then
\begin{align*}
x_1(1,1,2,*) = 1
&\quad\Rightarrow\quad
x_1(k,1,2,*) = 1
& (\text{by monotonicity})\\
&\quad\Rightarrow\quad
x_3(k,1,2,*) = 0
& (\text{sum of proba is 1}) \\
&\quad\Rightarrow\quad
x_3(k,1,1,*) = 0
& (\text{by monotonicity}) \\
&\quad\Rightarrow\quad
x_2(k,1,1,*) = 1
& (\text{sum of proba is 1}) \\
&\quad\Rightarrow\quad \dots\\
&\quad\Rightarrow\quad
x_1(1,1,2,*) = 1
\end{align*}
Thus, all these are equivalent, and if they hold we say that $a$ is true. Conversely, we say that $a$ is false if $x_1(1,2,1,*) = 1$ holds. Note that we introduced the fourth agent to set $x_4(1,1,1,*) = 1$, as monotonicity prevents agents $1$, $2$ and $3$ to be selected at $(1,1,1,*)$. We introduce one gadget per variable, located in distinct $s_2$ and $s_3$ so that to remove any unwanted interaction between these gadgets. The horizontal line at coordinates $(s_2,s_3+1)$ is called the $a$-line, and the horizontal line at $(s_2+1, s_3)$ is called the $\neg a$-line. When $a$ is true (resp. false) we say that the $a$-line is active (resp. inactive), and that the $\neg a$-line is inactive (resp. active).

Next, we build one gadget per clause $C_i$ using \Cref{fig:hardness-clause}. Each clause is made of three variable gadgets, one for each literal $\ell_{i,j}$, which intersect at a single signal profile $\sigs$, for which we set $\rhos(\sigs) = (1,1,1,\varepsilon)$. Thus, we have to select one winner $j\in\{1,2,3\}$, which corresponds to the literal which is set to true (exactly one literal will be true).

Finally, we connect each literal $\ell$ with the corresponding variable, using the XOR-connector gadget of \Cref{fig:hardness-connect}. When two lines are connected by a XOR-connector gadget, it adds the constraint that exactly one of the two lines is active. Importantly, each literal has two horizontal lines in their gadget, but one of these two lines does not have unique coordinates, and thus cannot be connected. Thus, we connect the horizontal line with unique coordinates to either the $a$ line or the $\neg a$ line, depending on the sign of the literal and on the sign of the horizontal line.

One can check that if there exists a truth assignment such that each clause has exactly one literal true, then one can build a deterministic mechanism $\allocs$ with $\Rd(\allocs, \rhos) = 1$, proving that $\Rd^*(\rhos) = 1$. More precisely, we use the assignment for every gadget, we propagate according to monotonicity, and we set $x_4(\sigs) = 1$ for any other signal profile. Conversely, if $\Rd^*(\rhos) = 1$ then we can build a satisfying assignment for $\phi$. Thus, distinguishing between if $\Rd^*(\rhos) = 1$ and $\Rd^*(\rhos) > \beta$ is NP-Hard.
\end{proof}

Next, as a corollary, we want to prove that the mechanism design problem is also hard. Intuitively, answering $\PbD_\gamma$ queries is related to the search variant of \textsc{Sat}, which is known to be polynomialy equivalent to its decision variant, using Cook reductions.
\begin{corollary}
    \label{cor:QueryPb}
    Assuming $\text{P}\neq\text{NP}$, for every $\gamma\geq 1$ one cannot always answer $\PbD_\gamma$ queries in polynomial time. The $\PbD_\gamma$ problem remains hard even with the stronger promise that there exists a deterministic allocation rule of ratio $1$.
\end{corollary}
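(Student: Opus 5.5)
We argue by contradiction. Suppose that for some fixed $\gamma\geq 1$ there is a polynomial-time procedure $\mathcal A$ answering $\PbD_\gamma$ queries: on instances with $\Rd^*\paras\leq\gamma$, it returns the value $\allocs(\sigs)$ of some fixed $\allocs\in\intpolytope$ with $R(\rhos,\allocs)\leq\gamma$, consistently across queries. We will use $\mathcal A$ to decide 1-in-3-SAT in polynomial time.

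Take the reduction of \Cref{thm:nphard} with $\varepsilon\in(0,1/\gamma)$. It produces, in polynomial time, an instance with $n=4$, $k=O(m)$, and all ratios in $\{1,\varepsilon\}$. Such an instance satisfies $\Rd^*=1$ when $\phi$ is 1-in-3 satisfiable, and $\Rd^*=1/\varepsilon>\gamma$ otherwise. Because every ratio lies in $\{1,\varepsilon\}$ and $\varepsilon<1/\gamma$, any deterministic allocation of ratio at most $\gamma$ has ratio exactly $1$. So on promise instances, the oracle answers describe an allocation that selects an agent with $\rho_i(\sigs)=1$ at every profile.

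The algorithm is as follows. Given $\phi$, build the instance and query $\mathcal A$ at every profile $\sigs\in[k]^4$. There are $k^4=O(m^4)$ queries, each taking polynomial time. Since $\mathcal A$ may fail or loop when the promise is violated, run each query with a polynomial time bound and treat a timeout or malformed output as rejection. From the answers, assemble the candidate $\allocs$ and verify in polynomial time that it lies in $\intpolytope$ and that $R(\rhos,\allocs)=1$. Accept if and only if verification succeeds.

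Correctness follows in both directions.
\begin{itemize}
\item If $\phi$ is satisfiable, the promise holds. The answers then form a genuine allocation of ratio $1$, so verification passes.
\item If verification passes, then $\Rd^*=1$. The converse direction of \Cref{thm:nphard} then yields a satisfying assignment, so $\phi$ is satisfiable.
\end{itemize}
This decides 1-in-3-SAT in polynomial time, contradicting $\text{P}\neq\text{NP}$ \cite{Schaefer78}. For a self-reduction-style search version, the assignment can be read off directly from the winners at the variable gadgets: agent $1$ at the $a$-line means $a$ is true.

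The argument only ever relies on the promise on instances with $\Rd^*=1$, which gives the stronger statement. The remaining point is that these instances stay polynomial-size. This holds because $n=4$ is fixed and $k=O(m)$, so $N=nk^nb$ is polynomial in $m$, with $b=O(m)$ bits from the construction of \Cref{lm_find_value_for_performance_ratios}. The main thing to get right is therefore the handling of the promise: relying on the fixed $\varepsilon<1/\gamma$ gap and an explicit verification step, rather than trusting outputs, so that $\mathcal A$ misbehaving on unsatisfiable instances cannot cause false acceptance.
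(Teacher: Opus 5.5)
Your proof is correct and is essentially the paper's argument: you query the assumed algorithm at all $k^4$ profiles of the hard instance with a polynomial time-out, then verify that the assembled allocation is truthful with ratio at most $\gamma$ (equivalently $1$, since all ratios lie in $\{1,\varepsilon\}$ with $\varepsilon<1/\gamma$), so misbehavior outside the promise cannot cause a false acceptance. The only difference is cosmetic: you compose directly with the 1-in-3-SAT reduction, which also cleanly covers $\gamma=1$, whereas the paper reduces from $(1,\gamma)$-\PbD via \Cref{thm:nphard}, which is stated there only for $\beta>1$.
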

\begin{proof}
    We proceed by contrapositive, and assume that we have an algorithm which can answer $\PbD_\gamma$ queries in $f(N) = N^{O(1)}$ time on instances where there exists a deterministic allocation rule of ratio $1$.

    Given an instance $\paras$ of $(1,\beta)$-\PbD with $n=4$ agents and $\beta=\gamma$, we query the algorithm on all $k^n$ signal profiles, each time stopping after $f(N)$ steps if the algorithm has not stopped yet. We check if the resulting allocation rule $\allocs$ is truthful and has a ratio $R(\rhos, \allocs) \leq \beta$. We return true if it is the case, and we return false in any other case of failure. By construction, the procedure we just described can decide $(1,\beta)$-\PbD in polynomial time, which implies that $P=NP$.
\end{proof}

We notice that above hardness results could be extended to SOS setting when $\varepsilon$ is sufficiently large. Formally, we have the following corollaries:

\begin{corollary}
    \label{cor:NPSOS} When $n=4$, the problem $(1,1)$-\PbD is NP-hard, even with monotone SOS value (or cost) functions.
\end{corollary}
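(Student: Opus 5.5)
The plan is to reuse the reduction from the proof of \Cref{thm:nphard} with one change: the small ratio $\varepsilon$ is replaced by a value $r$ close to $1$. We then realize the ratios by monotone SOS value functions via \Cref{lm_can_find_sub_valuefunc}.

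\textbf{Construction.} Given a 1-in-3-SAT formula $\phi$, build the same four-agent instance with monotone orders $\sigma_i(\osigs)=\{(s_i,s_i')\mid s_i<s_i'\}$ and $k=O(m)$ signals. Every performance ratio that the original construction set to $\varepsilon$ is now set to
\[
r := 1-\frac{1}{(nk)^2+1}\in(0,1),
\]
and all other ratios stay equal to $1$. Then $r^*(\rhos)\ge r$, so \Cref{lm_can_find_sub_valuefunc} yields increasing SOS value functions, and by taking reciprocals also decreasing SOS cost functions, inducing $\rhos$. These functions have polynomial bit size, since the values are $\rho_i(\sigs)(a_\ell+1)$ with $a_\ell$ polynomial and $r$ rational of polynomial size.

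\textbf{Why the gap survives.} The proof of \Cref{thm:nphard} never uses the magnitude of $\varepsilon$. It only uses that a deterministic rule with ratio $1$ can never select an agent whose ratio is strictly below $1$. The gadget implications (monotonicity, and exactly one winner per profile) are also unchanged, because $\sigmas$ is the same. So if $\phi$ is 1-in-3 satisfiable, the allocation built there has $R(\rhos,\allocs)=1$. Conversely, if $\Rd^*(\rhos)=1$, the optimal rule avoids every agent with ratio $r<1$, and the same decoding gives a satisfying assignment.

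If $\phi$ is not satisfiable, every deterministic truthful rule must select some agent with ratio $r$ at some profile. Its ratio is then $1/r>1$, so $\Rd^*(\rhos)\ge 1/r>1$. Hence the instance has either $\Rd^*=1$ or $\Rd^*>1$, which is exactly the promise of $(1,1)$-\PbD. Deciding it would decide 1-in-3-SAT, so the problem is NP-hard.

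\textbf{Main obstacle.} The delicate step is checking that the argument of \Cref{thm:nphard} is purely combinatorial. Concretely, the set of admissible winners at each profile, $\{i\mid \rho_i(\sigs)\ge 1\}$, must be identical under $\varepsilon$ and under $r$. It is, since both values are strictly below $1$. The second point to confirm is that the SOS construction of \Cref{lm_can_find_sub_valuefunc} preserves the strict monotone orders. This holds because values strictly increase with $\|\sigs\|_1$, which is what the reduction needs.
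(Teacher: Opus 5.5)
Your argument is the paper's: rerun the reduction of \Cref{thm:nphard} with the small ratio pushed close to $1$, and realize the ratios via \Cref{lm_can_find_sub_valuefunc}. One step fails as written, though.

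The problem is taking $r$ exactly at the threshold $1-\frac{1}{(nk)^2+1}$ and then asserting that values strictly increase with $\|\sigs\|_1$. The lemma's hypothesis allows this endpoint, but its construction then only guarantees $a_\ell\le v_i(\sigs)\le a_\ell+1$, and at the endpoint these bounds meet across levels. Since $a_{nk-1}+1=(nk)^2=a_{nk}$, a tie can occur at the top of the lattice:
\begin{itemize}
\item take the profile obtained from $(k,\dots,k)$ by lowering coordinate $i$ to $k-1$, with $\rho_i=1$ there; it gets $v_i=a_{nk-1}+1=(nk)^2$;
\item the profile $(k,\dots,k)$ with $\rho_i=r$ gets $v_i=r\,((nk)^2+1)=(nk)^2$.
\end{itemize}
This tie removes the pair $(k-1,k)$ from $\sigma_i$, which drops a monotonicity constraint. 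Your reduction needs the total orders $\{(s_i,s_i')\mid s_i<s_i'\}$. In the concrete instance the tie may never arise, but you did not check that.

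The fix is the paper's choice: take the small ratio strictly inside $\bigl(1-\frac{1}{(nk)^2+1},1\bigr)$. Then $v_i(\sigs)\in(a_\ell,a_\ell+1]$, so $v_i(\sigs)\le a_\ell+1\le a_{\ell'}<v_i(\sigs')$ whenever $\|\sigs\|_1=\ell<\ell'=\|\sigs'\|_1$, which gives strict monotonicity. SOS, the $(1,1)$ gap argument, and polynomial bit size are unaffected.
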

\begin{proof}
    Take $\varepsilon\in(1-\frac{1}{(nk)^2+1},1)$, and apply the same proof of \Cref{thm:nphard}. By \Cref{lm_can_find_sub_valuefunc}, there always exist monotone SOS functions inducing the given performance ratios.
\end{proof}
\begin{corollary}
    \label{cor:queryPBSOS}Assuming $\text{P}\neq\text{NP}$, one cannot always answer $\PbD_1$ queries in polynomial time, even on instances where there exists a deterministic allocation rule of ratio $1$ and the (value or cost) functions are monotone and SOS.
\end{corollary}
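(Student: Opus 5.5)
This follows the template of \Cref{cor:QueryPb}, with the hardness input supplied by \Cref{cor:NPSOS} instead of \Cref{thm:nphard}. The argument is by contrapositive. Suppose some algorithm answers $\PbD_1$ queries in time $f(N)=N^{O(1)}$ whenever the instance has monotone SOS value (or cost) functions and admits a deterministic allocation rule of ratio $1$. We will use it to decide $(1,1)$-\PbD on the monotone SOS instances built in the proof of \Cref{cor:NPSOS}. Since that problem is NP-hard, this would give $\text{P}=\text{NP}$.

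Given such an instance $\paras$ with $n=4$, we run the algorithm on each of the $k^n$ signal profiles, cutting each run off after $f(N)$ steps. Because $n=4$ is fixed and $N=nk^nb$, this is polynomially many calls, each of polynomial cost. We assemble the answers into a candidate $\allocs$, declaring failure if any run did not halt or returned something malformed. We then check in polynomial time that $\allocs$ is deterministic, satisfies \Cref{eq:proba}, satisfies the monotonicity condition \eqref{eq:monotonicity} for the total orders $\sigmas$, and satisfies $\rho_i(\sigs)=1$ wherever $x_i(\sigs)=1$, i.e.\ $R(\rhos,\allocs)\le 1$. We output ``yes'' exactly when every check passes.

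For correctness, consider the two cases. If $\Rd^*(\rhos)=1$, the promise of $\PbD_1$ holds on an instance with monotone SOS functions, so the algorithm's answers are consistent and form a valid ratio-$1$ truthful deterministic rule; every check passes and we answer ``yes''. If $\Rd^*(\rhos)>1$, no valid rule exists, so some check must fail whatever the algorithm outputs, and we answer ``no''.

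The one point needing care is that the hard instances really are monotone and SOS, so that the assumed algorithm's promise applies in the ``yes'' case. This is supplied by the choice $\varepsilon\in(1-\frac{1}{(nk)^2+1},1)$ in \Cref{cor:NPSOS}: all ratios $\rho_i(\sigs)\in\{\varepsilon,1\}$ are then at least $1-\frac{1}{(nk)^2+1}$, so \Cref{lm_can_find_sub_valuefunc} yields monotone SOS value (and cost) functions inducing $\rhos$. These functions have polynomial bit size, so the instance can be constructed in polynomial time. With $\gamma=1$, the gap $(1,1)$ is exactly the distinction between $\Rd^*=1$ and $\Rd^*>1$, so no slack in $\beta$ is needed.
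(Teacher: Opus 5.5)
Your proposal is correct and matches the paper's proof, which combines the query-to-decision argument of \Cref{cor:QueryPb} (query all $k^n$ profiles with a time cutoff, then verify truthfulness and ratio) with the monotone SOS hardness of \Cref{cor:NPSOS}, obtained by taking $\varepsilon\in(1-\frac{1}{(nk)^2+1},1)$ and invoking \Cref{lm_can_find_sub_valuefunc}. Your extra remarks, that the constructed functions have polynomial bit size and that the $(1,1)$ gap needs no slack, are accurate details the paper leaves implicit.
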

\begin{proof}
    Combine the proof of \Cref{cor:QueryPb} with the result \Cref{cor:NPSOS}.
\end{proof}

%%%%%%%%%%%%%%%%%%%%%%%%%%%%%%%%%%%%%%%%%%%%%%%%%%%%%%%%%%%%%%%
\subsection{Query Complexity}
\label{ssec:query}

\Cref{thm:lp,thm:twoagents,thm:twosignals} give algorithms to solve \PbV, \PbC, and special cases of \PbD in polynomial time, with respect to the total size of the input $N = nk^k b$. These algorithms can be used to solve the corresponding mechanism design questions $\PbV_\gamma$, $\PbC_\gamma$ and $\PbD_\gamma$, where the output allocation rule can be evaluated at different signal profiles, with the constraint of being consistent across queries. However, one might wonder if we can drop the exponential dependency in $n$ in the time complexity when we are only asked to compute the outcome at one of the $k^n$ signal profiles.

We answer this question using the query complexity in the decision tree model. More formally, we assume that our algorithm can access the input via an oracle which answers value or cost queries. We build a set of instances for which there exists a deterministic allocation rule of ratio $1$, but for which any algorithm cannot correctly compute a valid outcome at a specific signal profile without querying most of the input.

\thmquery*

\begin{proof}
    We will build a set of instances where the input value functions are monotone increasing. Similarly to the proof of \Cref{thm:nphard}, we can first specify the performance ratio $\rhos$ then build the value functions $\vals$ such that $\sigma_i(\osigs) = \{(s_i,s_i')\;|\; 1 \leq s_i < s_i' \leq k\}$ for all $i\in[n]$ and $\osigs$.
    
    We start with the proof of $\PbD_1$, for which the main idea is the following. First we fix a specific signal profile $\bar \sigs$, and we define initial performance ratios $\bar\rhos$. Then we build two sets of instances $P_1$ and $P_2$ such that the following properties hold:
    \begin{itemize}
        \item every $\rhos\in P_1\cup P_2$ differs from $\bar\rhos$ at exactly one performance ratio $\rho_i(\sigs)$,
        \item for all $\rhos\in  P_1\cup P_2$ there exists $\allocs\in \intpolytope$ with ratio $R(\rhos, \allocs) = 1$,
        \item if $\rhos\in P_1$ then for all $\allocs\in\intpolytope$ such that $R(\rhos, \allocs) = 1$ we have $x_1(\bar\sigs) = 1$,
        \item if $\rhos\in P_2$ then for all $\allocs\in\intpolytope$ such that $R(\rhos, \allocs) = 1$ we have $x_2(\bar\sigs) = 1$.
    \end{itemize}
    Then, we either need to query all the performance ratios where an instance from $P_1$ differ from $\bar\rhos$, or all the performance ratios where an instance from $P_2$ differ from $\bar\rhos$, otherwise one cannot compute the outcome $\allocs(\bar\sigs)$.

    Next, our goal is to build $\bar\rhos$ such that for each $\sigs\in[k]^n$ there exists at most two agents $i$ such that $\bar\rho_i(\sigs) = 1$, and all other have $\bar\rho_i(\sigs) = \varepsilon$. This way, our $\PbD$ instance can be understood as a 2-SAT formula with variables $x_i(\sigs)$, where each clause is an implication between two literals. Let $L_1$ and $L_2$ be respectively the sets of literals implied (transitively) by $x_1(\bar\sigs)$ and $x_2(\bar\sigs)$, using the monotonicity constraints of $\sigmas$. We will make sure that $L_1$ and $L_2$ are disjoint, and both have size $\Omega(k^n)$. This way, we define $P_1$ and $P_2$ as the sets of instances which differ from $\bar\rhos$ on the variables associated to the literals of $L_1$ and $L_2$, forcing the corresponding literal to be false.
    
    We can now proceed with the construction of $\bar\sigs$ 
    $$
        \forall i\in[n],\qquad
        \bar s_i = \begin{cases}
        1+\lfloor k/2\rfloor & \text{if }i\in\{1,2\},\\
        1 & \text{if }i\geq 3\text{ and }i\text{ is odd},\\
        k & \text{if }i\geq 3\text{ and }i\text{ is even}.
    \end{cases}
    $$
    To construct $\rhos$, we first deal with $\sigs$ such that $s_i = \bar s_i$ for all $i\geq 3$:
    \begin{itemize}
        \item if $s_1 \geq \bar s_1$ then we set $\rho_2(\sigs) = 1$,
        \item if $s_2 \geq \bar s_2$ then we set $\rho_1(\sigs) = 1$,
        \item if $n \geq 3$ and either $s_1 < \bar s_1$ or $s_2 < \bar s_2$ we set $\rho_3(\sigs) = 1$,
        \item for every other $i$ we set $\rho_i(\sigs) = \varepsilon$.
    \end{itemize}
    Observe that the construction for agents $1$ and $2$ is similar to the one given in \Cref{ssec:ideas-query} when $n=2$. Then, for the remaining $\sigs\in[k]^n$ we define $J(\sigs)$ as the largest $j$ such that $s_j\neq\bar s_j$, and we set $\rho_i(\sigs) = 1$ for all $i$ such that $J(\sigs) \leq i \leq J(\sigs)+1$. With this construction, one can show by induction on $i\geq 3$ that we have:
    \begin{itemize}
        \item for all odd $i\geq 3$, for all $\sigs\in[k]^n$ such that $J(\sigs) = i$,
        \begin{itemize}
            \item if $s_1\geq \bar s_1$ and $s_2 < \bar s_2$ then $x_i(\sigs)\in L_1$
            \item if $s_1< \bar s_1$ and $s_2 \geq \bar s_2$ then $x_i(\sigs)\in L_2$
        \end{itemize}
        \item for all even $i\geq 4$, for all $\sigs\in[k]^n$ such that $J(\sigs) = i$,
        \begin{itemize}
            \item if $s_1\geq \bar s_1$ and $s_2 < \bar s_2$ then $\neg x_i(\sigs)\in L_1$
            \item if $s_1< \bar s_1$ and $s_2 \geq \bar s_2$ then $\neg x_i(\sigs)\in L_2$
        \end{itemize}
    \end{itemize}
    Thus, both sets have size $\Omega(k^n)$. Moreover, one can show that the sets $L_1 \subseteq\{\sigs\in[k]^n\;|\; s_1\geq \bar s_1\text{ and }s_2<\bar s_2\}$ and $L_2\subseteq\{\sigs\in[k]^n\;|\; s_1\geq \bar s_1\text{ and }s_2<\bar s_2\}$ are contained in separate quadrants of the set of signal profiles and therefore are disjoint. This concludes the construction of $\rhos$, and the proof for $\PbD_1$. To finish the proof of the theorem, observe that with the exact same sets of instance:
    \begin{itemize}
        \item if $\rhos\in P_1$ then for all $\allocs\in\polytope$ such that $\Rv(\rhos, \allocs) = 1$ we have $x_1(\bar\sigs) = 1$,
        \item if $\rhos\in P_2$ then for all $\allocs\in\polytope$ such that $\Rv(\rhos, \allocs) = 1$ we have $x_2(\bar\sigs) = 1$.
    \end{itemize}
    The argument in the cost setting is identical. This proves that it requires at least $\Omega(k^n)$ queries to answer some $\PbV_1$, $\PbC_1$ queries.
\end{proof}

By \Cref{lm_can_find_sub_valuefunc}, we extend \Cref{thm:query} to SOS setting as follows:
\begin{corollary}
    For all fixed $n\geq 2$ and $k\geq 2$, it requires at least $\Omega(k^n)$ queries to the input oracle (value or cost) to answer some $\PbV_1$, $\PbC_1$ and $\PbD_1$ queries, even with monotone SOS (value or cost) functions.
\end{corollary}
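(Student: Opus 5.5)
The plan is to reuse verbatim the instances built in the proof of \Cref{thm:query}, and to check that they can be realized by value (resp. cost) functions that are monotone and SOS. By \Cref{lm_can_find_sub_valuefunc}, this holds as soon as every performance ratio lies in $[1-\frac{1}{(nk)^2+1},1]$. So in the construction of $\bar\rhos$ and of the perturbed families $P_1,P_2$, I would replace the value $\varepsilon$ by a constant $\varepsilon\in(1-\frac{1}{(nk)^2+1},1)$, exactly as in the proof of \Cref{cor:NPSOS}.

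The key step is to check that the combinatorial part of the proof of \Cref{thm:query} never uses that $\varepsilon$ is small, only that $\varepsilon<1$. This is why the statement targets $\PbV_1$, $\PbC_1$, $\PbD_1$ rather than general $\gamma$. With ratio target $1$:
\begin{itemize}
\item an agent with $\rho_i(\sigs)=\varepsilon<1$ can never be selected deterministically;
\item in the value setting, $\Rv(\rhos,\allocs)=1$ forces $\avg{\allocs(\sigs)}{\rhos(\sigs)}=1$, hence zero probability on every agent with ratio $<1$;
\item the same holds in the cost setting, since $\avg{\allocs(\sigs)}{1/\rhos(\sigs)}=1$.
\end{itemize}
Hence every allocation of ratio $1$ in $\polytope$ is supported only on agents with $\rho_i(\sigs)=1$. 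The 2-SAT-style implication chains defining $L_1$ and $L_2$ then go through unchanged, as in the randomized extension at the end of that proof. Existence of a ratio-$1$ deterministic rule for each instance in $P_1\cup P_2$ is likewise purely combinatorial.

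I would also check that monotonicity of the constructed functions matches the orders $\sigma_i(\osigs)=\{(s_i,s_i')\mid s_i<s_i'\}$ used in the proof. \Cref{lm_can_find_sub_valuefunc} yields strictly increasing values, since $v_i(\sigs)\in[a_\ell,a_\ell+1]$ and $a_{\ell+1}\geq a_\ell+1$; I would verify that ties at the boundary cannot occur, or perturb slightly. The cost case follows by taking reciprocals, as in that lemma.

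The main obstacle I expect is precisely this step: confirming that nothing in the query lower bound relied on a gap between $\varepsilon$ and $1/\gamma$ beyond $\varepsilon<1$. In particular, a $\PbV_1$ algorithm facing an instance in $P_1$ must be forced to output $x_1(\bar\sigs)=1$ exactly. The adversary argument, requiring $\Omega(k^n)$ oracle queries to distinguish $P_1$ from $P_2$, is unaffected since the perturbations still touch a single ratio each.
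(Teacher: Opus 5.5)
Your proposal is correct and matches the paper's proof, which likewise takes $\varepsilon\in(1-\frac{1}{(nk)^2+1},1)$, reruns the construction of \Cref{thm:query} (where, at target ratio $1$, only $\varepsilon<1$ matters), and invokes \Cref{lm_can_find_sub_valuefunc} to realize the ratios by monotone SOS value or cost functions. Your tie worry is moot: since $\varepsilon$ lies strictly above the threshold, $v_i(\sigs)\in(a_\ell,a_\ell+1]$ and $a_{\ell+1}\geq a_\ell+1$, so the values are strictly increasing. Moreover, the normalization $v_i(\sigs)=\rho_i(\sigs)(a_\ell+1)$ does not depend on the instance, so each single-ratio perturbation changes exactly one oracle value (or cost), which is what the adversary argument actually needs.
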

\begin{proof}
    Take $\varepsilon\in(1-\frac{1}{(nk)^2+1},1)$, and apply the same proof of \Cref{thm:query}. By \Cref{lm_can_find_sub_valuefunc}, there always exist monotone SOS functions inducing the given performance ratios.
\end{proof}

% Bibliography
\clearpage
\printbibliography

\end{document}